\definecolor{Bleu}{RGB}{0,0,204}
\definecolor{Violet}{RGB}{102,0,204}
\definecolor{Rouge}{RGB}{204,0,0}
\definecolor{Highlight}{RGB}{251,0,0}
\renewcommand\thmcontinues[1]{Continued}
\newcommand{\minus}{\scalebox{0.5}[1.0]{$-$}}
\newcommand{\VEm}{\operatorname{VE}_{\minus}}
\definecolor{darkblue}{rgb}{0,0.4,0.9}
\definecolor{gray10}{rgb}{0.1,0.1,0.1}
\definecolor{gray20}{rgb}{0.2,0.2,0.2}
\definecolor{gray30}{rgb}{0.3,0.3,0.3}
\definecolor{gray40}{rgb}{0.4,0.4,0.4}
\definecolor{gray60}{rgb}{0.6,0.6,0.6}
\definecolor{gray80}{rgb}{0.8,0.8,0.8}
\definecolor{gray90}{rgb}{0.9,0.9,.9}
\definecolor{gray95}{rgb}{0.95,0.95,.95}
\definecolor{gray96}{rgb}{0.96,0.96,.96}
\definecolor{lgreen} {RGB}{180,210,100}
\definecolor{dblue}  {RGB}{20,66,129}
\definecolor{ddblue} {RGB}{11,36,69}
\definecolor{lred}   {RGB}{220,0,0}
\definecolor{nred}   {RGB}{224,0,0}
\definecolor{norange}{RGB}{230,120,20}
\definecolor{nyellow}{RGB}{255,221,0}
\definecolor{ngreen} {RGB}{98,158,31}
\definecolor{dgreen} {RGB}{78,138,21}
\definecolor{nblue}  {RGB}{28,130,185}
\definecolor{jblue}  {RGB}{20,50,100}
\definecolor{nnyellow}{RGB}{235,200,0}
\definecolor{purple}{RGB}{150, 0, 120}
\definecolor{sgGreen} {RGB}{20, 180, 50}
\definecolor{revised}{rgb}{0,0,0.9}
\newtheorem{theorem}{Theorem}
\newtheorem{lemma}[theorem]{Lemma}
\theoremstyle{definition}
\declaretheorem[name=Remark,qed={$\Box$}]{remark} % can add numbered=no if don't want numbered
\newcommand{\openr}{\hbox{${\rm I\kern-.2em R}$}}
\newcommand{\openn}{\hbox{${\rm I\kern-.2em N}$}}
\newcommand{\logit}{\operatorname{logit}}
\newcommand{\Rem}{\operatorname{Rem}}
\newcommand\independent{\protect\mathpalette{\protect\independenT}{\perp}}
\def\independenT#1#2{\mathrel{\rlap{$#1#2$}\mkern2mu{#1#2}}}
\newcommand{\norm}[1]{\left\lVert#1\right\rVert}
\newcommand{\nmin}{n_{\textnormal{min}}}
\DeclareMathOperator{\Var}{Var}
\DeclareMathOperator{\argmin}{argmin}
\DeclareMathOperator{\E}{\mathbb{E}}
\DeclareMathOperator{\VE}{VE}
\DeclareMathOperator{\UR}{UR}
\DeclareMathOperator{\Ind}{\mathds{1}}
\titleformat{\subsubsection}[runin] % inline after subsubsection
  {\normalfont\bfseries}{\thesubsubsection}{1em}{}
\title{Partial Bridging of Vaccine Efficacy to New Populations}
\author{Alexander R. Luedtke}
\author{Peter B. Gilbert}
\affil{Vaccine and Infectious Disease Division, Fred Hutchinson Cancer Research Center}
\date{}
\begin{document}

% \allowdisplaybreaks

%  \begin{document}

\def\spacingset#1{\renewcommand{\baselinestretch}%
{#1}\small\normalsize} \spacingset{1}

%%%%%%%%%%%%%%%%%%%%%%%%%%%%%%%%%%%%%%%%%%%%%%%%%%%%%%%%%%%%%%%%%%%%%%%%%%%%%%

\maketitle

\bigskip
\begin{abstract}
Suppose one has data from one or more completed vaccine efficacy trials and wishes to estimate the efficacy in a new setting. Often logistical or ethical considerations make running another efficacy trial impossible. Fortunately, if there is a biomarker that is the primary modifier of efficacy, then the biomarker-conditional efficacy may be identical in the completed trials and the new setting, or at least informative enough to meaningfully bound this quantity. Given a sample of this biomarker from the new population, we might hope we can bridge the results of the completed trials to estimate the vaccine efficacy in this new population. Unfortunately, even knowing the true conditional efficacy in the new population fails to identify the marginal efficacy due to the unknown conditional unvaccinated risk. We define a curve that partially identifies (lower bounds) the marginal efficacy in the new population as a function of the population's marginal unvaccinated risk, under the assumption that one can identify bounds on the conditional unvaccinated risk in the new population. Interpreting the curve only requires identifying plausible regions of the marginal unvaccinated risk in the new population. We present a nonparametric estimator of this curve and develop valid lower confidence bounds that concentrate at a parametric rate. We use vaccine terminology throughout, but the results apply to general binary interventions and bounded outcomes.
\end{abstract}

\noindent%
{\it Keywords:} bridging; external validity; generalizability; partial identification; transportability; vaccine efficacy
\vfill

\newpage
% \spacingset{1.45}
\section{Introduction}
Randomized clinical trials are the gold standard for evaluating the effect of a new intervention in a population. However, it is not always possible or appropriate to conduct a Phase 3 efficacy trial. In these cases, it is desirable to ``bridge'' results from earlier trials to a new setting that may differ in population, treatment version, or exposure levels. Being able to extrapolate efficacy signals to a new population is a special case of generalizability or external validity \citep{Cole&Stuart2010}, and is often referred to as transportability \citep{Bareinboim&Pearl2012,Bareinboim&Pearl2013,Pearl&Bareinboim2014}. Following the clinical trials literature, we will instead refer to this extrapolation as ``bridging'' \citep{Gilbert&Hudgens2008,Gilbertetal2011,Fleming&Powers2012}.

There have been several recent trials establishing the efficacy of the CYD-TDV vaccine in children. In particular, CYD14 found 56\% per-protocol efficacy of the vaccine in Asian children aged 2-14 years \citep{Capedingetal2014}, CYD15 found 61\% per-protocol efficacy the same vaccine in Latin American children aged 9-16 years \citep{Villaretal2015}, and CYD23 found 30\% per-protocol efficacy in Thai children aged 4-11 years \citep{Sabchareonetal2012}. All trials featured a 2:1 vaccine:placebo randomization scheme, with respective sample sizes of approximately 10,000, 21,000, and 4,000 and respective virologically confirmed dengue event counts of 250, 397, and 134. In each trial, efficacy was shown to increase with age and prior exposure to dengue \citep{Capedingetal2014,Villaretal2015,Sabchareonetal2012}. While several countries have licensed the CYD-TDV vaccine for individuals from 9--45 or 9--60 \citep{WHO2016}, many countries have still not licensed this vaccine due to debates over the appropriate ages of indication.  Nonetheless, the high estimated efficacy in the existing trials may make further Phase 3 trials unethical. There is thus compelling motivation to bridge the results of the earlier efficacy trials completed in children to the adult population.

The focus of most existing work on bridging is on bridging the results from a single efficacy trial to a new population. Unsurprisingly, some assumptions must be made to bridge between the two studies. The primary standard assumption is that (i) the two populations have equivalent outcome regressions, i.e. that the mean outcome within each treatment/covariate/immune response biomarkers stratum is constant, and (ii) that the support of the treatment/covariate/immune response biomarker distribution in the population to which the researchers wish to bridge is contained in the corresponding support in the population in which the trial was completed \citep{Bareinboim&Pearl2012,Bareinboim&Pearl2013,Pearl&Bareinboim2014}. Recently, an efficient, double robust estimator was established when these assumptions hold \citep{Rudolph&vanderLaan2016}. There has also been some work on establishing transportation formulas when there are multiple settings in which full observations are observed in more than one setting, which in our setting means that there is more than one completed efficacy trial \citep{Lee&Honavar2013,Bareinboim&Pearl2014}.

Assumption (i) is often not plausible in the infectious disease setting unless one adjusts for the level of pathogen exposure, but in most efficacy trials there is only coarse exposure data (e.g., age in dengue trials). Furthermore, adjusting for level of pathogen exposure can violate Condition (ii) in some scenarios. To overcome this problem, \cite{Gilbert&Huang2016} posited a bridging assumption on the ratio scale, arguing that the vaccine versus placebo risk ratios in the two populations should be approximately equal within each covariate-specific principal stratum because this ratio reflects the vaccine's effect on biological susceptibility to pathogen infection or disease. Because knowing the risk ratio does not uniquely determine the outcome regression, there remains an undetermined degree of freedom. One possibility is to estimate the outcome regression among unvaccinated participants in the new population using epidemiological data \citep{Gilbert&Huang2016}. 
 % This referenced work suggests a non-double robust plug-in estimator, where this estimator uses semiparametric models to estimate the needed parameters; our proposed approach has the advantages of double robustness, avoiding the partial non-identifiability of principal stratification parameters, and leveraging available statistical learning algorithms.

There is often severe underreporting of infection or disease incidence in the available epidemiologic surveillance data. For example, recent studies comparing active and passive surveillance (by national surveillance programs) of dengue incidence have seen up to 19-fold underreporting in the passive surveillance \citep{Nealonetal2016,Sartietal2016}. Though one could attempt to account for the underreporting by specifying an inflation factor for the passive surveillance data, there will likely be uncertainty around what exactly this inflation factor should be. For this reason, in this work we focus on a less rigid approach to account for the unknown unvaccinated risk allocation. In particular, we derive the worst-case allocation under constraints that we will impose in the next section. These constraints can be estimated via epidemiologic data or existing trial data. This then yields the lower bound on the vaccine efficacy in the new population that we wish to estimate using double robust methods \citep[e.g.,][]{Robins&Rotnitzky95,vdL02,vanderLaan&Rose11}. Identifying a parameter that provides a bound on an unidentifiable parameter is often referred to as partial identification \citep{Manski2003}, and these identification problems have received considerable attention over the last several decades \citep[e.g.,][]{Manski1990,Horowitz&Manski2000,Maclehose2005}. Because we are specifically focused on bridging a parameter from one or several populations to a new setting, we refer to this exercise as partial bridging, though really our objective is a special case of partial identification.

% We refer the reader to \cite{Gilbert&Huang2016} for a further review of the existing literature on transportability.

\subsubsection*{Organization of article.}
Section~\ref{sec:setup} describes the observed data and presents a lower bound on the vaccine efficacy in the new population. 
% Section~\ref{sec:dengue} applies our method to bridge results between three existing dengue vaccine efficacy trials. This validation exercise will be used to prepare the application of our method to bridge the efficacy results from children to an adult population. The analysis in Section~\ref{sec:dengue} makes use of the estimator and confidence interval construction procedure presented in the subsequent sections.
Section~\ref{sec:monotone} derives a first-order approximation to the lower bound of inference that will play a key role in our estimation scheme. Section~\ref{sec:est} describes our estimation scheme, with Section~\ref{sec:nonconstantVE} focusing on the case where the conditional vaccine efficacy surface is not known to be constant across completed efficacy trials, and Section~\ref{sec:veconst} describing how to improve the precision of the lower confidence bound when the conditional efficacy surface is known to be the same across the completed efficacy trials. Section~\ref{sec:sim} presents a simulation study. Section~\ref{sec:disc} concludes with some brief remarks.

All of our proofs can be found in Appendix~\ref{app:proofs}. Appendix~\ref{app:est} presents technical regularity conditions used to establish the validity of our estimator. Appendix~\ref{app:extensions} extends our results to two-phase sampling designs and monotonic vaccine efficacy curves. Appendix~\ref{app:nestedccsim} presents simulation results under a two-phase sampling design.

\section{Problem setup} \label{sec:setup}
\subsection{Notation and bridging assumptions}
Before presenting our problem, we introduce a few basic pieces of notation. For functions $f$ and $g$ mapping from some space $\mathcal{X}$ to another space $\mathcal{Z}$, we write $f\equiv g$ to denote equality of $f(x)$ and $g(x)$ for all $x$. We also write $f\equiv 0$ to denote that $f(x)=0$ for all $x$. We use ``$\triangleq$'' to denote a definition, e.g. $f(x)\triangleq x$ denotes the identity map. When we want to emphasize that $f$ is a function, we refer to $f$ as $w\mapsto f(w)$. As is standard in the empirical process literature, we let $P f$ denote the expectation of $f(X)$ when $X$ is drawn from the distribution $P$ \citep[e.g.,][]{vanderVaartWellner1996}. 

Consider the data structure $(W,A,Y)$, where $W$ is a (possibly multivariate) baseline (pre-randomization) biomarker, $A$ is a vaccination indicator, and $Y$ is the outcome of interest that occurs subsequent to vaccination. 
While all of the results in this work hold for general bounded $Y$, we focus on the case that $Y$ is an indicator of infection or disease, since this is our primary case of interest. To avoid introducing additional burdensome notation, we assume that the outcome $Y$ is observed on all individuals, though we note in the Discussion (Section~\ref{sec:disc}) that the extension to right-censored outcomes is straightforward. 
% For now we suppose that the outcome $Y$ is observed, but in Appendix~\ref{app:extensions} we show how to apply the methods in this work when $Y$ is right-censored.
Suppose that we have already observed trials in Populations $1,\ldots,S$. Each trial $s$ consists of $n_s$ independent and identically distributed (i.i.d.) observations $O_s[1],\ldots,O_s[n_s]$ of the structure $O_s\triangleq (W,A,Y)\sim P_s^0$, where $P_s^0$ is known to belong to the model $\mathcal{M}_s$ that at most places restrictions on the conditional distribution of $A$ given $W$ and such that each $P_s'\in\mathcal{M}_s$ satisfies $\delta<\min_{a\in\{0,1\}}P_s'(A=a|W)$ with $P_s^0$ probability one. Thus, we assume that all efficacy trials tested the same vaccine versus control/placebo and collected the same W and Y following the same protocol. While it is not essential that the data from each $P_s^0$ be drawn from a randomized trial, we refer to data from $P_s^0$ as data from ``trial $s$'' because that is our primary application of interest.

From each trial $s$ we have an estimate of the vaccine efficacy, given by $1-\frac{\E_s^0[\E_s^0[Y|A=1,W]]}{\E_s^0[\E_s^0[Y|A=0,W]]}$, where $\E_s^0$ is the expectation operator under $P_s^0$. If $A$ is independent of $W$, then this reduces to estimating a covariate-unadjusted vaccine efficacy. %$1-\frac{\E_s^0[Y|A=1]}{\E_s^0[Y|A=0]}$.
Our objective is to estimate the vaccine efficacy in a new population $\star$. Were we to run a trial in this population, we would observe i.i.d. copies of $(W,A,Y)\sim P_\star^{0,F}$, and we could then estimate the vaccine efficacy
\begin{align*}
\Psi(P_\star^{0,F})%= 1-\frac{\E_\star^{0,F}[Y|A=1]}{\E_\star^{0,F}[Y|A=0]} 
\triangleq 1-\frac{\E_\star^{0,F}\left[\E_\star^{0,F}[Y|A=1,W]\right]}{\E_\star^{0,F}\left[\E_\star^{0,F}[Y|A=0,W]\right]},
\end{align*}
where we use $\E_\star^{0,F}$ to denote the expectation operator under $P_\star^{0,F}$. In practice we do not observe a trial from population $\star$, but rather a size $n_\star$ i.i.d. sample of observations $O_\star[1],\ldots,O_\star[n_\star]$ containing only $O_\star\triangleq W\sim P_\star^0$, where $P_\star^0$ is the marginal distribution of $W$ under $P_\star^{0,F}$. We denote the nonparametric model for $P_\star^0$ by $\mathcal{M}_\star$. Our objective is to obtain a lower bound on $\Psi(P_\star^{0,F})$ based on assumptions of how $P_\star^{0,F}$ is related to the distributions $P_1,\ldots,P_S$ from the completed trials. We refer to these assumptions as bridging assumptions.
% We take care to only make assumptions that we consider biologically justifiable in a wide variety of examples.
Because we only make assumptions that are biologically justifiable in a wide variety of examples, we are often unable to identify $\Psi(P_\star^{0,F})$ with any parameter mapping of the collection of distributions $\mathcal{P}^0\triangleq (P_\star^0,P_1^0,\ldots,P_S^0)$. We are, however, able to identify a parameter mapping that lower bounds this quantity under our bridging assumptions. Obtaining such bounds is the objective of the partial identification literature \citep{Manski2003}. A consequence of our partial bridging is that our estimator of the lower bound does not generally converge to $\Psi(P_\star^{0,F})$ even as $n_\star,n_1,\ldots,n_S$ all grow to infinity.

Before proceeding, we introduce some notation. We let $\mathcal{P}^0\triangleq (P_\star^0,P_1^0,\ldots,P_S^0)$ and $n\triangleq n_{\star}+\sum_{s=1}^S n_s$. We treat each of the sample sizes $n_s$ as deterministic. We let $\mathcal{M}\triangleq\mathcal{M}_\star\times\prod_{s=1}^S \mathcal{M}_s$ denote the model for $\mathcal{P}^0$. %, where we note that this model assumes variation independence between the distributions corresponding to each of the $S+1$ populations.
% For each $s\in\mathcal{S}$, we let $\E_s^0\triangleq \E_{P_s^0}$ denote the expectation operator under $P_s^0$.
For members $P_s'$ of the general collection $\mathcal{P}'\triangleq\{P_\star',P_1',\ldots,P_S'\}$, we let $\E_s'$ denote the expectation under $\mathcal{P}'$. We also let $\mathcal{S}=\{\star,1,\ldots,S\}$. For Trials $s=1,\ldots,S$, we define the conditional vaccine efficacy as follows
\begin{align*}
\VE_s^0(w)\triangleq 1 - \frac{\E_s^0[Y|A=1,w]}{\E_s^0[Y|A=0,w]}.
\end{align*}
We define the unidentifiable conditional vaccine efficacy in population $\star$ similarly, with the expectations under $P_s^0$ above replaced by expectations under $P_\star^{0,F}$.

We now define a lower bound on the curve $w\mapsto \VE_\star^{0,F}(w)$. This lower bound is defined using data from the completed trials $s=1,\ldots,S$, and also possibly vaccinated/unvaccinated conditional risks from a user-defined pseudo-population, which we denote $s=0$. This pseudo-population can be used to make our vaccine efficacy curve assumption more plausible. Let $\mathbbmss{v}_s : \mathcal{W}\rightarrow [0,1]$, $s=0,1,\ldots,S$, be a collection of functions satisfying the convexity constraint $\sum_{s=0}^S \mathbbmss{v}_s(w)=1$ for all $w$. Let $d : \mathcal{A}\times \mathcal{W}\rightarrow[0,1]$ be a function for which $d(0,w)$ and $d(1,w)$ respectively represent the unvaccinated and vaccinated risk in the user-defined pseudo-trial. Define
\begin{align*}
\VEm^0(w)&\triangleq 1-\frac{\mathbbmss{v}_0(w) d(1,w) + \sum_{s=1}^S \mathbbmss{v}_s(w) \E_s[Y|A=1,w]}{\mathbbmss{v}_0(w) d(0,w) + \sum_{s=1}^S \mathbbmss{v}_s(w) \E_s[Y|A=0,w]}.
\end{align*}
For each $w$, the above expression gives weight $\mathbbmss{v}_s(w)$ for the pseudo-trial $s=0$ and each completed trial $s=1,\ldots,S$. That is, $\mathbbmss{v}_s(w)$ indicates the hypothetical size of the stratum of $w$ in trial $s$ relative to the stratum of $w$ in the other trials and the pseudo-trial. If $\mathbbmss{v}_0\equiv 0$, then the pseudo-trial gets zero weight, whereas if $\mathbbmss{v}_0(w)$ is large for many $w$, then the vaccinated and unvaccinated risks $d(1,w)$ and $d(0,w)$ in the pseudo-trial play a major role in determining $w\mapsto \VEm^0(w)$. Our first bridging assumption states that $w\mapsto \VEm^0(w)$ lower bounds $w\mapsto \VE_\star^{0,F}(w)$.
\begin{enumerate}[label=B\arabic*),ref=B\arabic*,series=en:bridgconds]
	\item\label{it:brVEbridge} $P_\star^0\{\VE_\star^{0,F}(W)\ge \VEm^0(W)\}=1$.
\end{enumerate}
If $\VE_\star^{0,F}\equiv \VE_1^0\equiv \ldots\equiv \VE_S^0$, then the stronger result $P_\star^0\{\VE_\star^{0,F}(W)= \VEm^0(W)\}=1$ holds for any $w\mapsto \VEm^0(w)$ for which the pseudo-trial is given zero weight, i.e. any $\mathbbmss{v}_0,\mathbbmss{v}_1,\ldots,\mathbbmss{v}_S$ satisfying $\mathbbmss{v}_0\equiv 0$. We give a brief derivation of this result in Section~\ref{sec:veconst}.

While the above conditions are helpful for bridging the marginal vaccine efficacy from the completed trials to population $\star$, they do not allow one to uniquely identify this marginal vaccine efficacy, even when the conditional efficacy is point identifiable in the sense that $P_\star^0\{\VE_\star^{0,F}(W)= \VEm^0(W)\}=1$. In particular, we still need to understand the behavior of the conditional risk among unvaccinated individuals in population $\star$. While getting a reasonable estimate of this baseline risk may be difficult in practice, experts may be able to give bounds on how small or large this risk may be. A flexible way of communicating this expert knowledge is as follows. Let $\mathbbmss{u}_0,\mathbbmss{u}_1,\ldots,\mathbbmss{u}_S$ be an expert-specified set of functions mapping from $\mathcal{W}$ to $\mathbb{R}$. Define the upper bound $\upsilon^0\triangleq \Upsilon(\mathcal{P}^0)$ on the unvaccinated risk pointwise by
\begin{align*}
\upsilon^0(w)\triangleq \mathbbmss{u}_0(w) + \sum_{s=1}^S \mathbbmss{u}_s(w) \E_s^0[Y|A=0,w].
\end{align*}
For the lower bound, let $\ell_0,\ell_1,\ldots,\ell_S$ be an expert-specified set of functions mapping from $\mathcal{W}$ to $\mathbb{R}$. The lower bound $\lambda^0\triangleq \Lambda^{\mathcal{P}^0}$ on the unvaccinated risk is defined pointwise by
\begin{align*}
\lambda^0(w)\triangleq \ell_0(w) + \sum_{s=1}^S \ell_s(w) \E_s^0[Y|A=0,w],
\end{align*}
where for each $w$ we require that $\lambda^0(w)\le \upsilon^0(w)$. We assume throughout that $\upsilon^0(w)-\lambda^0(w)$ is uniformly bounded below by some $\delta>0$. Forcing $\delta>0$ will not prove to be restrictive because $\Upsilon$ only needs to serve as an upper bound for the conditional baseline risk (known to fall in the closed unit interval). Having $\delta>0$ ensures that point identification of the vaccine efficacy $\Psi(P_\star^{0,F})$ is impossible when the vaccine efficacy curve is nonconstant because we cannot identify the baseline risk in population $\star$, i.e. $w\mapsto \E_\star^{0,F}[Y|A=0,w]$.

The baseline risk assumption can be written as follows.
\begin{enumerate}[resume*=en:bridgconds]
	\item\label{it:brdatadepub} $\lambda^0(W)\le \E_{\star}^{0,F}[Y|A=0,W]\le \upsilon^0(W)$ with $P_{\star}^0$ probability one.
\end{enumerate}
Our next bridging assumption essentially states that the support of population $\star$ must be contained in the support of the completed trials, though, as we explain below, is slightly more general than this stated condition.
\begin{enumerate}[resume*=en:bridgconds]
	\item\label{it:brcommonsupport} For each $s=1,\ldots,S$ and each event $E$ on $\mathcal{W}$, $P_s^0(E)=0$ implies that $\E_\star^0[\max\{\ell_s(W),\mathbbmss{u}_s(W),\mathbbmss{v}_s(W)\}\Ind_E] = 0$.
\end{enumerate}
The above is weaker than assuming that the marginal distribution of $W$ under $P_s^0$ dominates $P_\star^0$ for each $s=1,\ldots,S$ since $\ell_s(W)$, $\mathbbmss{u}_s(W)$, and $\mathbbmss{v}_s(W)$ may be selected to be $0$ for populations where there is no support. %We take advantage of the generality of this weaker support condition when bridging on age in our dengue example (see Appendix~\ref{app:moredengueinfo}).

We also assume that $w\mapsto \VEm^0(w)$ satisfies a boundedness condition.
\begin{enumerate}[resume*=en:bridgconds]
	\item\label{it:brvebdd} $w\mapsto \VEm^0(w)$ is uniformly bounded on the support of $\mathcal{W}$.
\end{enumerate}
The above ensures that integrals involving $w\mapsto \VEm^0(w)$ make sense -- while it could be replaced by a moment condition, the boundedness assumptions simplifies our proofs and seems to give up little since it is rare that $\VEm^0(w)$ can be arbitrarily close to $-\infty$.

\subsection{Partial bridging formula}
We now present lower bounds on the vaccine efficacy $\Psi(P_\star^{0,F})$ in population $\star$. We first present a loose lower bound that is attainable under only the bridging assumptions discussed thus far, and we then add a final bridging assumption that will often yield a tighter bound.

Given the bridging assumptions presented thus far, the tightest obtainable bound on $\Psi(P_\star^{0,F})$ is given by the solution to the following optimization problem:
\begin{alignat*}{2}
&\textnormal{Minimize }\hspace{0.5em} &&\frac{\E_{\star}^0\left[g(W)\right]}{\E_{\star}^0[f(W)]}\textnormal{ in }f,g : \mathcal{W}\rightarrow[0,1] \\
&\textnormal{subject to }\hspace{0.5em} &&\lambda^0(W)\le f(W)\le \upsilon^0(W),\;\,1-\frac{g(W)}{f(W)}\ge \VEm^0(W),
\end{alignat*}
where we use the convention that $x/0=0$ if $x=0$ and $x/0=+\infty$ if $x>0$. We now argue that the solution to this problem is undesirably loose in general, and that a further restriction will generally yield a tighter bound. For simplicity, we give the argument in the special case where one is unwilling to assume a lower bound on the unvaccinated risk so that $\lambda^0\equiv 0$. We then provide an alternative optimization problem that will typically give a tighter (larger) lower bound, present a closed-form solution to this new optimization problem, and finally we show why the above optimization problem is undesirably loose even in the case that $\lambda^0$ is not the constant function zero. To ease discussion, the remainder of this paragraph supposes that there exists a $w_{\minus}\in\mathcal{W}$ such that $P_\star^0\{\VEm^0(W)=\VEm^0(w_{\minus})\}>0$ and $P_\star^0\{\VEm^0(W)<\VEm^0(w_{\minus})\}=0$, i.e. that $\VEm^0(w_{\minus})$ is equal to the minimal value of $\VEm^0(w)$ on $\mathcal{W}$. In this case one can quickly see that a $w\mapsto f(w)$ optimizing the above is positive at $w_{\minus}$ and is zero everywhere else and the corresponding $w\mapsto g(w)$ is equal to $f(w_{\minus})\VEm^0(w_{\minus})$ at $w_{\minus}$ and $0$ everywhere else, so that at this $f,g$ we have that $1-\frac{\E_{\star}^0\left[g(W)\right]}{\E_{\star}^0[f(W)]}=\VEm^0(w_{\minus})$.

To overcome this problem, we will add an interpretable constraint to the optimization problem. For each $\mu>0$ in the unit interval, we give a lower bound on $\Psi(P_\star^{0,F})$ that is valid if $\mu$ is equal to the marginal unvaccinated risk $\E_\star^0[\E_\star^{0,F}[Y|A=0,W]]$. Each $\mu$-specific lower bound can then be interpreted as valid provided the already stated bridging assumptions hold and $\mu$ is equal to the the marginal unvaccinated risk. We derive our lower bound for $\Psi(P_\star^{0,F})$ by finding the worst-case conditional unvaccinated risk, namely by solving the following optimization problem:
\begin{alignat*}{2}
&\textnormal{Minimize }\hspace{0.5em} &&\frac{\E_{\star}^0\left[g(W)\right]}{\E_{\star}^0[f(W)]}\textnormal{ in }f,g : \mathcal{W}\rightarrow[0,1] \\
&\textnormal{subject to }\hspace{0.5em} &&\E_{\star}^0[f(W)] = \mu,\;\,\lambda^0(W)\le f(W)\le \upsilon^0(W),\;\,1-\frac{g(W)}{f(W)}\ge \VEm^0(W).
\end{alignat*}
The solution to our earlier problem in the special case where $\lambda^0(w)$ is always zero, namely $f(w_{\minus})>0$ and $f(w)=0$ for $w\not=w_{\minus}$, will fail to satisfy the constraint $\mu= \E_{\star}^0[f(W)]$ once $\mu$ is large enough.

We now introduce notation to express a solution to our refined optimization problem. We define all of these quantities at $\mathcal{P}^0$, but the definitions at general $\mathcal{P}'\in\mathcal{M}$ are completely analogous. For each $\beta : \mathcal{W}\rightarrow[0,1]$ and $w\in\mathcal{W}$, let 
\begin{align*}
&\UR^{\beta}(\mathcal{P}^0)(w)\triangleq \lambda^0(w) + \left[\upsilon^0(w)-\lambda^0(w)\right]\beta(w)\triangleq \UR^{\beta,0}(w).
\end{align*}
The set of all unvaccinated risk functions allowed by our bounds $\lambda^0,\upsilon^0$ is equal to $\{\UR^{\beta,0} : \beta\}$. For each $\UR^{\beta,0}$, the marginal unvaccinated risk is given by
\begin{align*}
&\Omega^{\beta}(\mathcal{P}^0)\triangleq \E_{\star}^0\left[\UR^{\beta,0}(W)\right]\triangleq \omega^{\beta,0}.
\end{align*}
We also define
\begin{align*}
\Gamma^\beta(\mathcal{P}^0)\triangleq \E_{\star}^0\left[\UR^{\beta,0}\VEm^0(W)\right]\triangleq \gamma^{\beta,0}.
\end{align*}
Often $\omega^{\beta,0}-\gamma^{\beta,0}$ can be interpreted as a marginal vaccinated risk, though there is not in general any guarantee that this quantity is bounded in $[0,1]$. We discuss this subtlety further following Lemma~\ref{lem:lb}.

We now define $\beta^0 : \mathcal{W}\rightarrow [0,1]$, which we will shortly show to be a $\beta$ indexing the worst-case unvaccinated risk. The function $\beta^0$ assigns unvaccinated risk according to the upper bound $\upsilon^0$ to as many covariate strata with small $\VEm^0(w)$ as allowed by the marginal unvaccinated risk constraint $\mu$. This threshold is defined by
\begin{align*}
&\Theta(\mathcal{P}^0)\triangleq \sup\Big\{\theta\in\mathbb{R} : \omega^{w\mapsto\Ind_{\{\VEm^0(w)<\theta\}},0}\le \mu\Big\}\triangleq \theta^0,
\end{align*}
where $\sup\emptyset = -\infty$ by convention. Let $\beta_\eta\triangleq w\mapsto  \Ind_{\{\VEm^0(w)<\theta^0\}} + \eta \Ind_{\{\VEm^0(w)=\theta^0\}}$, and define $\eta^0$ to be the smallest element of the set $\argmin_{\eta\in[0,1]}\left(\omega^{\beta_\eta,0}-\mu\right)^2$. %, where we note that $\upsilon^0-\lambda^0$ bounded away from zero implies that this set contains a single element if $\eta^0>0$.
Let $\beta^0\triangleq \beta_{\eta^0}$. One can show that, if $\E_\star^0[\lambda^0(W)]\le \mu\le \E_\star^0[\upsilon^0(W)]$, then $\omega^{\beta^0,0}=\mu$.

Finally, we define the bridging parameter that we will aim to estimate at $\mathcal{P}^0$, and note that the definition at general $\mathcal{P}'$ is entirely analogous:
\begin{align*}
\Phi(\mathcal{P}^0)&\triangleq \frac{\Gamma^{\beta^0}(\mathcal{P}^0)}{\Omega^{\beta^0}(\mathcal{P}^0)} = \frac{\gamma^{\beta^0,0}}{\omega^{\beta^0,0}}\triangleq \phi^0,
\end{align*}
where we remind the reader of the dependence of $\beta^0$ on $\mathcal{P}^0$. We now establish that $\phi^0$ provides a valid lower bound for $\Psi(P_\star^{0,F})$.
\begin{lemma} \label{lem:lb}
Suppose \ref{it:brVEbridge}, \ref{it:brdatadepub}, \ref{it:brcommonsupport}, and \ref{it:brvebdd} hold. If $\mu=\E_\star^0[\E_\star^{0,F}[Y|A=0,W]]$, then $\Psi(P_\star^{0,F})\ge \phi^0$.
\end{lemma}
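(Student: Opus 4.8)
The plan is to bound $\Psi(P_\star^{0,F})$ directly: rewrite it as a baseline-risk-weighted average of the conditional efficacy, lower-bound that average using \ref{it:brVEbridge}, and then recognize the result as the value, at a feasible unvaccinated-risk allocation, of the objective that $\beta^0$ minimizes. Everything is an exact population-level inequality at $\mathcal{P}^0$; no contradiction or limiting argument is needed.

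First I would recast the target. Write $r_a(w)\triangleq \E_\star^{0,F}[Y|A=a,w]$ for $a\in\{0,1\}$. Since the outer average over $W$ under $P_\star^{0,F}$ agrees with the average under its $W$-marginal $P_\star^0$, we have $\Psi(P_\star^{0,F})=1-\E_\star^0[r_1(W)]/\E_\star^0[r_0(W)]$. Substituting the hypothesis $\mu=\E_\star^0[r_0(W)]$ and using the pointwise identity $r_0(w)-r_1(w)=r_0(w)\,\VE_\star^{0,F}(w)$ gives
\begin{align*}
\Psi(P_\star^{0,F})=\frac{\E_\star^0[r_0(W)-r_1(W)]}{\mu}=\frac{\E_\star^0\big[r_0(W)\,\VE_\star^{0,F}(W)\big]}{\mu}.
\end{align*}
The only place this identity could fail is on $\{r_0=0\}$, but \ref{it:brVEbridge} together with the boundedness in \ref{it:brvebdd} precludes $r_1(w)>0=r_0(w)$ off a $P_\star^0$-null set, so the identity holds $P_\star^0$-almost surely.

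Next I would exhibit the allocation induced by the true baseline risk. By \ref{it:brdatadepub}, $\lambda^0\le r_0\le \upsilon^0$ holds $P_\star^0$-a.s., and since $\upsilon^0-\lambda^0\ge\delta>0$ the function $\beta^*\triangleq(r_0-\lambda^0)/(\upsilon^0-\lambda^0)$ takes values in $[0,1]$ and satisfies $\UR^{\beta^*,0}\equiv r_0$; hence $\omega^{\beta^*,0}=\E_\star^0[r_0]=\mu$. Because $\UR^{\beta^*,0}=r_0\ge 0$, assumption \ref{it:brVEbridge} ($\VE_\star^{0,F}\ge\VEm^0$ $P_\star^0$-a.s.) yields
\begin{align*}
\E_\star^0\big[r_0\,\VE_\star^{0,F}\big]\ge \E_\star^0\big[\UR^{\beta^*,0}\,\VEm^0\big]=\gamma^{\beta^*,0},
\end{align*}
the integral being finite by \ref{it:brvebdd} and well defined $P_\star^0$-a.s. by \ref{it:brcommonsupport}. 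Combining with the previous display, $\Psi(P_\star^{0,F})\ge \gamma^{\beta^*,0}/\mu=\gamma^{\beta^*,0}/\omega^{\beta^*,0}$.

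It remains to show $\gamma^{\beta^*,0}/\omega^{\beta^*,0}\ge\phi^0$. Since $\E_\star^0[\lambda^0]\le\mu\le\E_\star^0[\upsilon^0]$, the fact noted just before the lemma gives $\omega^{\beta^0,0}=\mu=\omega^{\beta^*,0}$, so the claim reduces to $\gamma^{\beta^*,0}\ge\gamma^{\beta^0,0}$, i.e. to showing that $\beta^0$ minimizes $\beta\mapsto\gamma^{\beta,0}$ over all $\beta:\mathcal{W}\to[0,1]$ with $\omega^{\beta,0}=\mu$. I expect this to be the main obstacle. The argument is a Neyman--Pearson / bathtub computation: for any feasible $\beta$, the identities $\UR^{\beta,0}-\UR^{\beta^0,0}=(\upsilon^0-\lambda^0)(\beta-\beta^0)$ and $\E_\star^0[(\upsilon^0-\lambda^0)(\beta-\beta^0)]=\omega^{\beta,0}-\omega^{\beta^0,0}=0$ let me write
\begin{align*}
\gamma^{\beta,0}-\gamma^{\beta^0,0}=\E_\star^0\big[(\upsilon^0-\lambda^0)(\beta-\beta^0)(\VEm^0-\theta^0)\big].
\end{align*}
The integrand is pointwise nonnegative by the threshold form of $\beta^0$: where $\VEm^0<\theta^0$ one has $\beta^0=1\ge\beta$ while $\VEm^0-\theta^0<0$; where $\VEm^0>\theta^0$ one has $\beta^0=0\le\beta$ while $\VEm^0-\theta^0>0$; and $\{\VEm^0=\theta^0\}$ contributes zero. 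The delicate points are the boundary stratum $\{\VEm^0=\theta^0\}$, where the randomization weight $\eta^0$ is exactly what makes $\omega^{\beta^0,0}=\mu$ and hence keeps $\beta^0$ feasible, and the degenerate case $\theta^0=+\infty$ (possible when $\mu=\E_\star^0[\upsilon^0]$), which forces $\beta^0\equiv 1$ and is handled directly, since then $\E_\star^0[(\upsilon^0-\lambda^0)(\beta-\beta^0)]=0$ with $\upsilon^0-\lambda^0>0$ forces $\beta\equiv\beta^0$. This yields $\gamma^{\beta^*,0}\ge\gamma^{\beta^0,0}$ and thus $\Psi(P_\star^{0,F})\ge\phi^0$.
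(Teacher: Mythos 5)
Your proof is correct and follows essentially the same route as the paper's: both first reduce to showing $\E_\star^0\bigl[\E_\star^{0,F}[Y|A=0,W]\,\VEm^0(W)\bigr]\ge\gamma^{\beta^0,0}$ via \ref{it:brVEbridge} and $\omega^{\beta^0,0}=\mu$, and both then establish this by the same threshold argument -- exploiting the sign of $\VEm^0-\theta^0$ on either side of the cutoff together with the constraint $\E_\star^0[\E_\star^{0,F}[Y|A=0,W]]=\mu=\omega^{\beta^0,0}$. Your repackaging of the true baseline risk as $\UR^{\beta^*,0}$ and the explicit optimality claim for $\beta^0$ over all feasible $\beta$ is only a cosmetic reorganization of the paper's direct computation with $r_0$.
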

The \hyperlink{proof:lemlb}{proof of Lemma~\ref*{lem:lb}} is given in Appendix~\ref{app:lb}. Typically the lower bound $\phi^0$ will be attainable, in the sense that there exists a distribution for population $\star$ satisfying \ref{it:brVEbridge}, \ref{it:brdatadepub}, and $\mu=\E_\star^0[\E_\star^{0,F}[Y|A=0,W]]$ such that the marginal vaccine efficacy is equal to $\phi^0$. In this case $\phi^0$ is the solution to the refined optimization problem presented earlier in this section. In particular, it will often be the case that a distribution with marginal vaccine efficacy $\phi^0$ is that with unvaccinated risk $\E_\star^{0,F}[Y|A=0,w] = \UR^{\beta^0,0}(w)$ and vaccine efficacy curve $w\mapsto \VEm^0(w)$. The only time no such distribution exists is when the claimed vaccinated risk $w\mapsto \UR^{\beta^0,0}(w)[1-\VEm^0(w)]$ fails to obey the bounds of the model, which may be larger than $1$ if (i) $\UR^{\beta^0,0}(w)$ is large and $1-\VEm^0(w)>0$ or (ii) $1-\VEm^0(w)$ is large. In practice it is unlikely that either of these quantities will be large: the unvaccinated risk will generally be small for rare outcomes, and the vaccine efficacy within a biomarker stratum will rarely be extremely negative, as this indicates that the vaccine is extremely harmful within this stratum and represents a situation where the bridging application would likely not be of interest and hence the method would not be applied.

\begin{remark}
Returning to our first optimization problem in this section, i.e. the optimization problem that did not include the constraint $\E_\star^0[f(W)]=\mu$, we see that the minimum occurs by minimizing the solution to the latter problem over all values of $\mu>0$. We then see that the solution to the first optimization problem is equal to the lowest point on our $\mu$-dependent curve, so that indeed the added restriction will generally improve our bound. For this observation to be true, one might anticipate that the reason that the solution to the first optimization problem is $\VEm(w_{\minus})$ in the case where the unvaccinated risk lower bound is the constant function zero, i.e. $\lambda^0\equiv 0$, is that $\phi_{\mu}^0$ is monotonic in $\mu$, where we write $\phi_{\mu}^0$ to emphasize the dependence of $\phi^0$ on $\mu$. Indeed, we show in Lemma~\ref{lem:lbmon} in Appendix~\ref{app:lb} that $\phi_\mu^0$ is monotonically nondecreasing in $\mu$ in this case. This observation enriches the interpretation of the lower bound $\phi_{\mu'}^0$ when $\lambda^0\equiv 0$, since in this case $\phi_{\mu'}^0$ can be interpreted as a valid lower bound for the marginal vaccine efficacy provided $\mu'\le \mu=\E_\star^0[\E_\star^{0,F}[Y|A=0,W]]$ and the conditions of Lemma \ref{lem:lb} hold at $\mu$: as $\phi_{\mu'}^0\le \phi_{\mu}^0$ and $\phi_{\mu}^0$ lower bounds the vaccine efficacy, $\phi_{\mu'}^0$ must also lower bound the vaccine efficacy. Thus, in this special case, $\phi_{\mu'}^0$ is a valid lower bound whenever $\mu'\le \E_\star^0[\E_\star^{0,F}[Y|A=0,W]]$.
\end{remark}

\section{First-order expansion of lower bound} \label{sec:monotone}
In this section, we present a result that we will use to derive a first-order expansion of the parameter $\Phi$. This expansion plays a key role in our estimation procedure. Before presenting this result, we quickly define a gradient for a general parameter $\Pi : \mathcal{M}\rightarrow\mathbb{R}$ in the $S+1$ sample problem. We only define this and other gradients in this section at $\mathcal{P}^0$, but the extension to a general $\mathcal{P}'$ only requires notational changes. For each $s\in\mathcal{S}$, let $h_s : \mathcal{O}_s\rightarrow\mathbb{R}$ satisfy $\E_s[h_s(O_s)]=0$ and $\sup_{o_s} |h_s(o_s)|\le 1$. For $\epsilon\in(-1,1)$, define $\frac{dP_s^\epsilon}{dP_s^0}(o_s) = 1+\epsilon h_s(o_s)$, $s\in\mathcal{S}$. Let $\mathcal{P}^\epsilon\triangleq \left(P_{\star}^\epsilon,P_1^\epsilon,\ldots,P_S^\epsilon\right)$. We call $\Pi$ pathwise differentiable if, for $s\in\mathcal{S}$, there exist functions $\nabla \Pi_s^0\in L^2(P_s^0)$ satisfying
\begin{align*}
\left.\frac{d}{d\epsilon}\Pi(\mathcal{P}^\epsilon)\right|_{\epsilon=0}&= \sum_{s\in\mathcal{S}} \int \nabla \Pi_s^0(o_s) h_s(o_s) dP_s^0(o_s).
\end{align*}

Our pathwise differentiability result will require one of the following three regularity conditions on the marginal distribution of $W$ in population $\star$. The first is given below.
\begin{enumerate}[label=C\arabic*),ref=C\arabic*,series=en:regconds]
	\item\label{it:omegacontnotflat} $\E_\star^0[\lambda^0(W)]< \mu < \E_\star^0[\upsilon^0(W)]$ and, for all $\theta$ in a neighborhood of $\theta^0$,
	\begin{align*}
	0&<\liminf_{t\rightarrow 0} \frac{P_{\star}^0\{\VEm^0(W)< \theta+t\}-P_{\star}^0\{\VEm^0(W)< \theta\}}{t} \\
	&\le \limsup_{t\rightarrow 0} \frac{P_{\star}^0\{\VEm^0(W)< \theta+t\}-P_{\star}^0\{\VEm^0(W)< \theta\}}{t} <\infty.
	\end{align*}
\end{enumerate}
The above implies both that $\theta^0$ is finite and that $\omega^{\beta^0,0}=\mu$.

For the choice of $\mu$ to be feasible, i.e. for it to be possible that $P_\star^{0,F}\in\mathcal{M}_\star$ satisfies both \ref{it:brdatadepub} and $\E_\star^0[\E_\star^{0,F}[Y|A=0,W]]=\mu$, we generally need that $\E_\star^0[\lambda^0(W)]\le \mu\le \E_\star^0[\upsilon^0(W)]$: one cannot otherwise have both $\E_\star^0[\E_\star^{0,F}[Y|A=0,W]]=\mu$ and $\lambda^0(w)\le \E_\star^{0,F}[Y|A=0,w]\le \upsilon^0(w)$ for all $w$. Nonetheless, it is useful to understand the first-order behavior of the parameter $\Phi$ for all values of $\mu$ since the marginal distribution of $W\sim P_\star^0$ must be estimated in practice. Thus, we offer conditions for both the case that $\mu$ is so large that it violates the upper bound on $\E_\star^{0,F}[Y|A=0,w]$, i.e. $\mu > \E_\star^0[\upsilon^0(W)]$, and then for the case that $\mu$ is so small that it violates the lower bound on $\E_\star^{0,F}[Y|A=0,w]$, i.e. $\mu < \E_\star^0[\lambda^0(W)]$. %Below we respectively give the conditions for when $\mu$ is too large and small.
\begin{enumerate}[label=\ref*{it:omegacontnotflat}'),ref=\ref*{it:omegacontnotflat}'] % use this list to create the Cx' label as an alternative to the condition \ref{it:omegacontnotflat}. There's no counter, so this really isn't an enumerate so much as it is an itemize, but with only one element it doesn't matter
	\item\label{it:mubig} The upper bound is too small: $\E_\star^0[\upsilon^0(W)]<\mu$.
\end{enumerate}
% We now give the condition for when $\mu$ is too small.
\begin{enumerate}[label=\ref*{it:omegacontnotflat}''),ref=\ref*{it:omegacontnotflat}'']
	\item\label{it:musmall} The lower bound is too large: $\E_\star^0[\lambda^0(W)]>\mu$.
\end{enumerate}
\begin{remark}
None of these three conditions (\ref{it:omegacontnotflat}, \ref{it:mubig}, \ref{it:musmall}) allow $P_\star^0(\VEm^0(W)=\theta^0)>0$. This is closely related to the non-pathwise differentiability of many parameters of interest in personalized medicine under so-called exceptional laws \citep{Robins04,Luedtke&vanderLaan2015b}, i.e. distributions for which the conditional average treatment effect is zero in some positive probability stratum of measured covariates.  The condition that $P_\star^0\{\VEm(W)=\theta^0\}=0$ may be unlikely to hold in settings where $\theta^0$ is below the lower limit of quantification of $W$, namely because small values of $W$ indicate a true value of zero for the biomarker and any deviation from zero is due to noise. If $\theta^0$ falls below the lower limit if quantification, then we expect that $P_\star^0\{\VEm^0(W)=\theta^0\}>0$ and indeed we will have no guarantee that $\Psi(P_\star^{0,F})\ge \phi^0$. In Appendix~\ref{app:mono}, we describe an alternative bridging parameter that requires that $W$ is univariate and $\VEm^0$ is monotonic rather than that $P_\star^0\{\VEm^0(W)=\theta^0\}>0$.

Finally, we note that, while the conditions of Theorem~\ref{thm:pd} are sufficient for the pathwise differentiability of $\Phi$, they are not necessary. For example, if $W$ is discrete and takes on a finite number of values, then one can give conditions under which $\Phi$ is pathwise differentiable even if $P_\star^0(\VEm^0(W)=\theta^0)>0$, though these conditions still appear to require that only one $w\in\mathcal{W}$ satisfies $\VEm^0(w)=\theta^0$. We do not consider this subtle case further in this work.  % In Appendix~$\clubsuit$ we discuss how to obtain inference when one does not believe that $P_\star^0(\VEm^0(W)=\theta^0)=0$, but instead has $W$ real-valued and believes that $\VEm^0$ is monotonic and $P_\star^0(\VEm^0(W)=\tilde{\theta}^0)$ for appropriately defined $\tilde{\theta}^0$. % When $W$ is continuous, we are guaranteed that $P_\star^0(\VEm^0(W)=\theta^0)=0$ so that this non-regularity is not an issue. In Appendix~\ref{app:nonmonotone} we describe how to conduct a partial bridging exercise when $w\mapsto \VEm^0(w)$ is not necessarily monotonic. In this case we require that $P_\star^0(\VEm^0(W)=\theta)=0$ for appropriately specified $\theta$. This condition appears to be less plausible in many examples of interest, since the vaccine efficacy curve may be constant across a positive probability stratum of $W$.
\end{remark}

The following objects, defined for each $s=1,\ldots,S$, will be useful for establishing the gradient of $\Phi$:
\begin{align}
&D_{\UR,s}^{\beta,\mathcal{P}^0}(o_s)\triangleq \left[\ell_s(w) + \beta(w)\{\mathbbmss{u}_s(w)-\ell_s(w)\}\right]\frac{\Ind_{\{a=0\}}}{P_s^0(a|w)}\left(y - \E_s^0[Y|a,w]\right),\;\beta : \mathcal{W}\rightarrow\mathbb{R}, \nonumber \\
&D_{\VE,s}^{\mathcal{P}^0}(o_s)\triangleq \frac{\mathbbmss{v}_s(w)\left[\Ind_{\{a=0\}}\VEm^0(w)-\Ind_{\{a=1\}}\right]}{P_s^0(a|w)\left[\mathbbmss{v}_0(w)d(0,w) + \sum_{s'=1}^S \mathbbmss{v}_{s'}(w) \E_{s'}^0[Y|A=0,w]\right]}(y-\E_s^0[Y|a,w]). \label{eq:DVEdef}
\end{align}
For ease of notation, we let $D_{\UR,s}^{\beta,0}\triangleq D_{\UR,s}^{\beta,\mathcal{P}^0}$ and $D_{\VE,s}^0\triangleq D_{\VE,s}^{\mathcal{P}^0}$.
For any $\beta : \mathcal{W}\rightarrow[0,1]$, define
\begin{alignat*}{2}
&\nabla \Omega_\star^{\beta,0}(o_\star)= \UR^{\beta,0}(w) - \omega^{\beta,0},& \\
&\nabla \Omega_s^{\beta,0}(o_s)= \frac{dP_{\star}^0}{dP_s^0}(w)D_{\UR,s}^{\beta,0}(o_s),\;&s=1,\ldots,S, \\
&\nabla \Gamma_\star^{\beta,0}(w)\triangleq \UR^{\beta,0}(w)\VEm^0(w)-\Gamma^{\beta}(\mathcal{P}^0),& \\
&\nabla \Gamma_s^{\beta,0}(o_s)\triangleq \frac{dP_{\star}^0}{dP_s^0}(w)\left[D_{\UR,s}^{\beta,0}(o_s)\VEm^0(w)+\UR^{\beta,0}(w)D_{\VE,s}^0(o_s)\right],\;&s=1,\ldots,S.
\end{alignat*}
We now give a theorem establishing that the parameter $\Phi$ is pathwise differentiable at $\mathcal{P}^0$.
\begin{theorem} \label{thm:pd}
If \ref{it:brcommonsupport} and \ref{it:brvebdd} hold and either \ref{it:omegacontnotflat}, \ref{it:mubig}, or \ref{it:musmall} holds, then $\Phi$ is pathwise differentiable and, for each $s\in\mathcal{S}$, the $P_s^0$ gradient is given by
\begin{align*}
\nabla \Phi_s^0(o_s)=& \begin{cases}
\dfrac{\nabla \Gamma_s^{\beta^0,0}(o_s)}{\omega^{\beta^0,0}} - \theta^0\dfrac{\nabla \Omega_s^{\beta^0,0}(o_s)}{\omega^{\beta^0,0}},&\mbox{ if }\E_\star^0[\lambda^0(W)]<\mu<\E_\star^0[\upsilon^0(W)], \\[1.3em]
\dfrac{\nabla \Gamma_s^{\beta^0,0}(o_s)}{\omega^{\beta^0,0}} - \gamma^{\beta^0,0}\dfrac{\nabla \Omega_s^{\beta^0,0}(o_s)}{[\omega^{\beta^0,0}]^2},&\mbox{ otherwise.}
\end{cases}.
\end{align*}
\end{theorem}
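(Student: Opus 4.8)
The plan is to verify pathwise differentiability directly from the definition. I fix an arbitrary one–dimensional submodel $\{\mathcal{P}^\epsilon\}$ with scores $h_s$ satisfying $\E_s^0[h_s(O_s)]=0$ and $\sup_{o_s}|h_s(o_s)|\le 1$, let $\theta^\epsilon,\eta^\epsilon,\beta^\epsilon$ denote the threshold, tie-breaking weight, and worst-case weighting function computed at $\mathcal{P}^\epsilon$, and compute $\frac{d}{d\epsilon}\Phi(\mathcal{P}^\epsilon)|_{\epsilon=0}$, matching it to $\sum_{s\in\mathcal{S}}\int \nabla\Phi_s^0 h_s\,dP_s^0$. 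The first task is to dispatch the building blocks: for a \emph{fixed} $\beta$, I would show that $\Omega^\beta$ and $\Gamma^\beta$ are pathwise differentiable with the gradients $\nabla\Omega_s^{\beta,0}$ and $\nabla\Gamma_s^{\beta,0}$ listed before the theorem. The $\star$-components come from differentiating the outer $P_\star^0$-expectation and centering; the trial-$s$ components come from differentiating the conditional means $\E_s^0[Y\mid a,w]$, represented by the propensity-weighted residual $\frac{\Ind_{\{a=0\}}}{P_s^0(a\mid w)}(y-\E_s^0[Y\mid a,w])$ together with the change-of-measure factor $\frac{dP_\star^0}{dP_s^0}(w)$ that converts an integral against $P_\star^0$ into an expectation against $P_s^0$. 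For $\Gamma^\beta$ one additionally applies the quotient rule to the ratio defining $\VEm^0$, producing the two-term structure $D_{\UR,s}^{\beta,0}\VEm^0 + \UR^{\beta,0}D_{\VE,s}^0$. These are routine delta-method computations; \ref{it:brcommonsupport} makes the density ratios well-defined on the relevant supports, and \ref{it:brvebdd} with the $\delta$-bounds on the propensity and on $\upsilon^0-\lambda^0$ places the gradients in $L^2(P_s^0)$.

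For the two ``otherwise'' regimes I would argue that $\beta^\epsilon$ is locally constant. Under \ref{it:mubig} the strict inequality $\mu>\E_\star^0[\upsilon^0(W)]$ persists for all small $\epsilon$ (by boundedness of the $h_s$ and continuity of $\epsilon\mapsto\E_\star^\epsilon[\upsilon^\epsilon(W)]$), forcing $\theta^\epsilon=+\infty$ and $\beta^\epsilon\equiv 1$; symmetrically \ref{it:musmall} forces $\beta^\epsilon\equiv 0$. With $\beta^\epsilon\equiv\beta^0$ fixed, $\Phi(\mathcal{P}^\epsilon)=\Gamma^{\beta^0}(\mathcal{P}^\epsilon)/\Omega^{\beta^0}(\mathcal{P}^\epsilon)$, so the quotient rule applied to the building-block gradients yields $\nabla\Gamma_s^{\beta^0,0}/\omega^{\beta^0,0}-\gamma^{\beta^0,0}\nabla\Omega_s^{\beta^0,0}/[\omega^{\beta^0,0}]^2$, the second branch of the claim.

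The substance is the interior case \ref{it:omegacontnotflat}, where $\beta^\epsilon$ genuinely moves but its variation can be absorbed by an envelope argument. Because the strict inequalities of \ref{it:omegacontnotflat} persist for small $\epsilon$, the marginal-risk constraint $\omega^{\beta^\epsilon,\epsilon}=\mu$ holds throughout (using the already-established fact $\omega^{\beta^0,0}=\mu$), so $\Phi(\mathcal{P}^\epsilon)=\gamma^{\beta^\epsilon,\epsilon}/\mu$ and it suffices to differentiate $\gamma^{\beta^\epsilon,\epsilon}$. I would split $\gamma^{\beta^\epsilon,\epsilon}-\gamma^{\beta^0,0}$ into a fixed-$\beta$ piece $\gamma^{\beta^0,\epsilon}-\gamma^{\beta^0,0}$, whose derivative is the building-block gradient $\sum_s\int\nabla\Gamma_s^{\beta^0,0}h_s\,dP_s^0$, and a moving-$\beta$ piece $\gamma^{\beta^\epsilon,\epsilon}-\gamma^{\beta^0,\epsilon}$, splitting $\omega$ identically. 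The key claim is that on the set where $\beta^\epsilon$ and $\beta^0$ disagree one has $\VEm^0\approx\theta^0$, so that $\gamma^{\beta^\epsilon,\epsilon}-\gamma^{\beta^0,\epsilon}=\theta^0\,[\omega^{\beta^\epsilon,\epsilon}-\omega^{\beta^0,\epsilon}]+o(\epsilon)$. Granting this, the constraint $\omega^{\beta^\epsilon,\epsilon}=\mu=\omega^{\beta^0,0}$ gives $\omega^{\beta^\epsilon,\epsilon}-\omega^{\beta^0,\epsilon}=-(\omega^{\beta^0,\epsilon}-\omega^{\beta^0,0})$, whose derivative is $-\sum_s\int\nabla\Omega_s^{\beta^0,0}h_s\,dP_s^0$; hence the moving-$\beta$ piece contributes $-\theta^0\sum_s\int\nabla\Omega_s^{\beta^0,0}h_s\,dP_s^0$, and dividing the total by $\mu=\omega^{\beta^0,0}$ reproduces the first branch.

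The main obstacle is justifying this moving-threshold estimate. I would first establish, via the implicit function theorem applied to the equation defining $\theta^\epsilon$ and the strictly positive lower bound on the density in \ref{it:omegacontnotflat}, that $|\theta^\epsilon-\theta^0|=O(\epsilon)$; combined with $\lVert\VEm^\epsilon-\VEm^0\rVert_\infty=O(\epsilon)$ this confines the disagreement set of $\beta^\epsilon$ and $\beta^0$ to $\{w:|\VEm^0(w)-\theta^0|\le C\epsilon\}$ up to $P_\star^0$-null sets. The finite upper density bound in \ref{it:omegacontnotflat} then forces this set to have $P_\star^0$-measure $O(\epsilon)$, on which $\VEm^0=\theta^0+O(\epsilon)$, so replacing $\VEm^0$ by $\theta^0$ in $\gamma^{\beta^\epsilon,\epsilon}-\gamma^{\beta^0,\epsilon}$ incurs error of order $\epsilon\times O(\epsilon)=o(\epsilon)$, with the bounded weights $\upsilon^0-\lambda^0$ and the boundedness of $\VEm^0$ from \ref{it:brvebdd} controlling the integrand. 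The same continuity of the law of $\VEm^0(W)$ at $\theta^0$ gives $P_\star^0\{\VEm^0(W)=\theta^0\}=0$, so the tie set and the weight $\eta^\epsilon$ contribute nothing at first order and need not be tracked. A dominated-convergence argument, again using the $\delta$-separations and \ref{it:brvebdd}, legitimizes passing the $\epsilon$-derivative through the integrals throughout.
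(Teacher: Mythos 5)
Your proposal is correct and follows essentially the same route as the paper: the ``otherwise'' regimes are handled by showing $\beta^\epsilon$ is eventually constant (the paper's Lemma~\ref{lem:thetaepsatbdry}) followed by the delta method, and the interior case uses $\omega^{\beta^\epsilon,\epsilon}=\mu$ to reduce to differentiating $\gamma^{\beta^\epsilon,\epsilon}$, split into a fixed-$\beta$ delta-method piece and a moving-$\beta$ piece that telescopes against $\theta^0[\omega^{\beta^\epsilon,\epsilon}-\omega^{\beta^0,\epsilon}]$ with the remainder controlled by confining the disagreement set to an $O(\epsilon)$-neighborhood of the threshold (the paper's Theorem~\ref{thm:pdLambda}, Lemma~\ref{lem:quantilederiv}, and Lemma~\ref{lem:thetacons}). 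The only cosmetic difference is that you invoke an implicit-function-theorem heuristic for $\theta^\epsilon-\theta^0=O(\epsilon)$ where the paper argues directly from sandwich bounds on $\omega^{\theta,\epsilon}$, since \ref{it:omegacontnotflat} only supplies liminf/limsup bounds on difference quotients rather than differentiability.
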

The \hyperlink{proof:thmpd}{proof of Theorem~\ref*{thm:pd}} is given in Appendix~\ref{app:pd}.

\begin{remark}
Though we have assumed that $\upsilon^0(w)-\lambda^0(w)$ is bounded away from zero, we will now briefly remark on a violation of this assumption, namely the case that $\ell_s\equiv \mathbbmss{u}_s$ so that $\upsilon^0\equiv \lambda^0$. In this case the gradients of $\Phi$ are given by the same expression as in Theorem~\ref{thm:pd} for the case where $\theta^0=+\infty$ so that $P_\star^0(\VEm^0(W)<\theta^0)=1$. The estimation scheme that we present in the next section will remain valid even when $\upsilon^0\equiv \lambda^0$ provided one uses the appropriate gradients, namely the gradients from Theorem~\ref{thm:pd} when $\theta^0=+\infty$, when constructing the confidence lower bound. If $\VE_\star^{0,F}=\VEm^0$, then the marginal vaccine efficacy in population $\star$ is point identified, and so our procedure is analogous to the earlier work by \cite{Rudolph&vanderLaan2016}, with the slight distinction that we focus on a multiplicative rather than additive parameter.
\end{remark}

\section{Estimation} \label{sec:est}
\subsection{Conditional vaccine efficacy not necessarily constant across $s=1,\ldots,S$} \label{sec:nonconstantVE}
We first consider the case where we do not assume that $\VE_1^0\equiv \ldots\equiv \VE_S^0$, but rather make the weaker assumption \ref{it:brVEbridge} for a prespecified set of functions $\mathbbmss{v}_0,\mathbbmss{v}_1,\ldots,\mathbbmss{v}_S$.

\subsubsection{Estimator.} \label{sec:eststeps}
Below and throughout, we let $Q_s^n$ denote the empirical distribution of the observations $O_s[1],\ldots,O_s[n_s]$ for each $s\in\mathcal{S}$. We also denote the expectation operator under $\mathcal{P}^n$ by $\E_s^n$. We also refer to parameters evaluated at the collection of distributions $\mathcal{P}^n$ rather than parameters evaluated at $\mathcal{P}^0$ by replacing the superscript zero by superscript $n$, e.g. we write $\upsilon^n\triangleq \Upsilon(\mathcal{P}^n)$ rather than $\upsilon^0\triangleq \Upsilon(\mathcal{P}^0)$. We do the same for gradients, e.g. $\nabla \Phi_s^n$ rather than $\nabla \Phi_s^0$. When we define objects that may be confused with parameter mappings applied directly to $\mathcal{P}^n$, we add a hat: in particular, we will define $\widehat{\beta}^n$, $\widehat{\eta}^n$, $\widehat{\theta}^n$, $\widehat{\omega}^{\widehat{\beta}^n,n}$, $\widehat{\gamma}^{\widehat{\beta}^n,n}$, and $\widehat{\phi}^n$.

Below we present an estimation scheme to be used when either (i) the chosen $\ell_s(w)$ is a constant multiple of the chosen $\mathbbmss{u}_s(w)$, where the multiple is independent of $s$ and $w$, or (ii) the chosen $\ell_s\equiv 0$ for all $s$ so that $\lambda^0\equiv 0$. In Appendix~\ref{app:altalg}, we present alternatives to the targeted minimum loss-based (TMLE) steps \ref{it:tmle1}, \ref{it:tmle2}, and \ref{it:tmle3} that replace the univariate logistic regression by a bivariate logistic regression. The bivariate regression is inappropriate when (i) holds because the two predictors in the proposed logistic regression will be perfectly correlated, and so, while analytically correct, the method may encounter numerical challenges in practice. When (ii) holds running the bivariate regression presented in the appendix is simply unnecessary because the latter covariate is always zero. The proof of asymptotic linearity of our estimation scheme assumes that the user runs the appropriate estimation scheme, either the below or that in Appendix~\ref{app:altalg}.
\begin{enumerate}
	\item\label{it:initests} Let $(a,w)\mapsto \E_s^{n,\textnormal{init}}[Y|a,w]$, $w\mapsto P_s^{n,\textnormal{init}}(A=1|w)$, and $w\mapsto \frac{dP_\star^{n,\textnormal{init}}}{dP_s^{n,\textnormal{init}}}(w)$ represent estimates of $(a,w)\mapsto \E_s^0[Y|a,w]$, $w\mapsto P_s^0(A=1|w)$, and $w\mapsto \frac{dP_\star^0}{dP_s^0}(w)$, respectively.
	\item\label{it:tmle1} Fit a univariate logistic regression with outcome $\left(y_s[i] : s=1,\ldots,S;\,i=1,\ldots,n_s\right)$, covariate $\left(\frac{\Ind_{\{a_s[i]=0\}}\mathbbmss{u}_s(w)}{n_s P_s^{n,\textnormal{init}}(A=0|w_s[i])}\frac{dP_\star^{n,\textnormal{init}}}{dP_s^{n,\textnormal{init}}}(w_s[i]) : s=1,\ldots,S;\,i=1,\ldots,n_s\right)$, and fixed, subject-level intercept $\left(\logit\left(\E_s^{n,\textnormal{init}}[Y|A=0,w_s[i]]\right) : s=1,\ldots,S;\,i=1,\ldots,n_s\right)$. Denote the fitted coefficient in front of the covariate by $\epsilon_n$.
	\item\label{it:tmle2} For each $s=1,\ldots,S$, let $(a,w)\mapsto \E_s^{n,\epsilon_n}[Y|a,w]$ denote the function
	\begin{align*}
	(a,w)\mapsto \logit^{-1}\left[\logit\left(\E_s^{n,\textnormal{init}}[Y|a,w]\right) + \epsilon_n\frac{\Ind_{\{a=0\}}\mathbbmss{u}_s(w)}{n_s P_s^{n,\textnormal{init}}(A=0|w)}\frac{dP_\star^{n,\textnormal{init}}}{dP_s^{n,\textnormal{init}}}(w)\right].
	\end{align*}
	\item\label{it:tmle3} Let $\mathcal{P}^n=(P_\star^n,P_1^n,\ldots,P_s^n)$ denote any collection of distributions satisfying that, for all $(a,w)$, $\E_s^n[Y|a,w] = \E_s^{n,\epsilon_n}[Y|a,w]$, $P_s^n(A=1|w) = P_s^{n,\textnormal{init}}(A=1|w)$, and $\frac{dP_\star^n}{dP_s^n}(w) = \frac{dP_\star^{n,\textnormal{init}}}{dP_s^{n,\textnormal{init}}}(w)$.
	\item\label{it:omegan} For each $\beta : \mathcal{W}\rightarrow[0,1]$, let $\widehat{\omega}^{\beta,n}\triangleq \omega^{\beta,n} + \sum_{s\in\mathcal{S}} Q_s^n \nabla  \Omega_s^{\beta,n}$, and note that $\widehat{\omega}^{\beta,n}$ rewrites as $Q_\star^n \UR^{\beta,n} + \sum_{s=1}^S Q_s^n \nabla \Omega_s^{\beta,n}$.
	\item\label{it:thetan} Let $\widehat{\theta}^n\triangleq\sup\{\theta : \widehat{\omega}^{w\mapsto \Ind_{\{\VEm^n(w)<\theta\}},n} \le \mu\}$, where $\sup\emptyset=-\infty$.
	\item\label{it:etan} Let $\widehat{\eta}^n$ be the smallest element of the set $\argmin_{\eta\in[0,1]}\left(\widehat{\omega}^{\beta_\eta,n}-\mu\right)^2$, where $\beta_\eta\triangleq w\mapsto \Ind_{\{\VEm^n(w)<\widehat{\theta}^n\}} + \eta \Ind_{\{\VEm^n(w)=\widehat{\theta}^n\}}$.
	\item\label{it:betan} Let $\widehat{\beta}^n\triangleq \beta_{\widehat{\eta}^n}$.
	\item\label{it:onestep} Estimate $\gamma^{\widehat{\beta}^n,0}$ with
	\begin{align*}
	\widehat{\gamma}^{\widehat{\beta}^n,n}&\triangleq \gamma^{\widehat{\beta}^n,n} + \sum_{s\in\mathcal{S}} Q_s^n \nabla \Gamma_s^{\widehat{\beta}^n,n} \\
	&= n_\star^{-1}\sum_{i=1}^{n_\star} \UR^{\widehat{\beta}^n,n}(w_\star[i])\VE^n(w_\star[i]) + \sum_{s=1}^S Q_s^n \nabla \Gamma_s^{\widehat{\beta}^n,n}.
	\end{align*}
	\item\label{it:phin} Estimate $\phi^0$ with $\widehat{\phi}^n\triangleq \frac{\widehat{\gamma}^{\widehat{\beta}^n,n}}{\widehat{\omega}^{\widehat{\beta}^n,n}}$.
\end{enumerate}
The initial estimates in Step~\ref{it:initests} can be obtained by any methods deemed appropriate by the investigators. We encourage the use of data adaptive regression and machine learning techniques to estimate these features of $\mathcal{P}^0$, since the study of our estimator in the next section relies on consistency conditions given in Appendix~\ref{app:est} that are most plausible when one does not restrict the estimates in Step~\ref{it:initests} to those obtained by classical parametric models. 

\begin{remark}
The above represents a hybrid between a TMLE and a one-step estimator. When $\theta^0=\pm \infty$, it uses a TMLE to estimate $\omega^{\beta^0,0}$ (Steps \ref{it:tmle1}, \ref{it:tmle2}, \ref{it:tmle3}). It uses a one-step estimator for $\gamma^{\widehat{\beta}^n,0}$ (Step \ref{it:onestep}) and to correct the bias of $\omega^{\beta,n}$ for $\omega^{\beta,0}$ when defining $\widehat{\beta}^n$ (Step \ref{it:omegan}). For Step \ref{it:onestep}, one could specify a fluctuation of the initial estimate of $\mathcal{P}^0$ with score given by the sum of the empirical means of the $S+1$ gradients to replace the one-step estimator by a TMLE to have a full-fledged TMLE. One could analyze this estimator similarly to that of our hybrid estimator \citep{vanderLaan&Rose11}. The TMLE is preferred to a one-step estimator in many problems because it yields a plug-in estimator of the form $\gamma^{\widehat{\beta}^n,n}$ for some distribution $\mathcal{P}^n$, and is thus guaranteed to respect the known bounds on $\gamma^{\widehat{\beta}^n,0}$. Most fluctuation submodels indexing the TMLE appear to require iterating the fluctuation step until convergence. Alternatively, one could use the newly developed universal least favorable submodel, which has the advantage of always being fit in one step \citep{vdL&Gruber2016}. We leave the development of full-fledged (non-hybrid) TMLEs for this problem to future work.
\end{remark}

\begin{remark}
The upper and lower bounds $\upsilon^0$ and $\lambda^0$ on the conditional unvaccinated risk imply bounds on the plausible range of values of $\mu$. These bounds can be estimated using $\widehat{\omega}^{\beta,n}$, where $\beta$ is taken to be the constant function returning one and zero, respectively. Under the same conditions as those given in the upcoming Theorem~\ref{thm:al}, one can show that, for each of these choices of $\beta$, $\widehat{\omega}^{\beta,n}-\omega^{\beta,0}\approx \sum_{s\in\mathcal{S}} (Q_s^n-P_s^0) \nabla \Omega_s^{\beta,0}$, so that confidence intervals for $\upsilon^0$ and $\lambda^0$ can be developed using the analogous asymptotic normality results to those to be used to develop a confidence lower bound for $\phi^0$ in Section~\ref{sec:conflb}.
\end{remark}

\subsubsection{Asymptotic linearity of our estimator.}
In the appendix we give several additional assumptions used to establish that the estimator $\widehat{\phi}^n$ satisfies a multiple sample version of asymptotic linearity, namely that
\begin{align}
\widehat{\phi}^n-\phi^0&= \sum_{s\in\mathcal{S}} (Q_s^n-P_s^0) \nabla \Phi_s^0 + o_P(\nmin^{-1/2}), \label{eq:multisampleal}
\end{align}
where, for $\nmin\triangleq \min\{n_s : s\in\mathcal{S}\}$, we use $o_P(\nmin^{-1/2})$ to denote a random variable $X_n$ that is a function of all of the $n$ observations from Populations $\star,1,\ldots,S$ with the property that, for each $\delta>0$, the event $\{\nmin^{1/2}X_n>\delta\}$ has probability converging to zero whenever $\nmin$ grows to infinity. We discuss those conditions below the theorem.

The following theorem establishes that our estimator minus the truth behaves as a sum of empirical means across the $S+1$ data sets.
\begin{theorem}\label{thm:al}
If \ref{it:brcommonsupport}, \ref{it:brvebdd}, \ref{it:CSconsistency}, \ref{it:thetaDonsker}, \ref{it:thetaempproc}, and \ref{it:VEnbdd} hold and either \ref{it:omegacontnotflat}, \ref{it:goodquantile}, and \ref{it:betangood} hold, \ref{it:mubig} holds, or \ref{it:musmall} holds, then $\widehat{\phi}^n$ satisfies (\ref{eq:multisampleal}).
\end{theorem}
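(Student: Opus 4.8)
The plan is to establish the expansion (\ref{eq:multisampleal}) by first obtaining the asymptotic linearity of the two corrected estimators $\widehat{\omega}^{\widehat{\beta}^n,n}$ and $\widehat{\gamma}^{\widehat{\beta}^n,n}$, and then combining them through the ratio $\widehat{\phi}^n=\widehat{\gamma}^{\widehat{\beta}^n,n}/\widehat{\omega}^{\widehat{\beta}^n,n}$. I would treat the interior case \ref{it:omegacontnotflat} and the boundary cases \ref{it:mubig}, \ref{it:musmall} separately, since they yield the two different gradients in Theorem~\ref{thm:pd}. In the boundary cases the threshold is infinite, so $\beta^0$ is a constant function, $\widehat{\beta}^n=\beta^0$ carries no data-adaptivity, and both $\widehat{\omega}^{\widehat{\beta}^n,n}$ and $\widehat{\gamma}^{\widehat{\beta}^n,n}$ are ordinary one-step/TMLE estimators; a standard delta-method expansion of the ratio then produces the ``otherwise'' gradient directly. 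The interior case carries essentially all the work, and I outline it below.

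First I would prove, uniformly over $\beta$ in a shrinking $L^2(P_\star^0)$ neighborhood of $\beta^0$, the one-step expansion
\[
\widehat{\omega}^{\beta,n}-\omega^{\beta,0}=\sum_{s\in\mathcal{S}}(Q_s^n-P_s^0)\nabla\Omega_s^{\beta,0}+o_P(\nmin^{-1/2}),
\]
and its analogue for $\widehat{\gamma}^{\beta,n}-\gamma^{\beta,0}$ with $\nabla\Gamma_s^{\beta,0}$. Each follows from the exact second-order expansion of the plug-in, with the TMLE fluctuation of Steps~\ref{it:tmle1}--\ref{it:tmle3} (or the one-step correction of Step~\ref{it:omegan}) removing the first-order bias; the empirical-process term replacing $\nabla\Omega_s^{\beta,n}$ by $\nabla\Omega_s^{\beta,0}$ and the double-robust second-order remainder are both $o_P(\nmin^{-1/2})$ under the consistency rates \ref{it:CSconsistency} and the Donsker condition \ref{it:thetaDonsker}. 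Evaluating at the random index $\beta=\widehat{\beta}^n$ and using stochastic equicontinuity (\ref{it:thetaDonsker}, \ref{it:thetaempproc}, \ref{it:VEnbdd}) to replace $\widehat{\beta}^n$ by $\beta^0$ inside the empirical-process terms, I obtain $\widehat{\omega}^{\widehat{\beta}^n,n}-\omega^{\widehat{\beta}^n,0}=\sum_{s}(Q_s^n-P_s^0)\nabla\Omega_s^{\beta^0,0}+o_P(\nmin^{-1/2})$ and likewise for $\widehat{\gamma}$.

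The remaining and most delicate step is to control the deterministic drift from estimating $\beta^0$ by $\widehat{\beta}^n$. Because $\UR^{\beta,0}$ is affine in $\beta$, the bias in $\gamma$ is $\gamma^{\widehat{\beta}^n,0}-\gamma^{\beta^0,0}=\E_\star^0[(\upsilon^0-\lambda^0)(\widehat{\beta}^n-\beta^0)\VEm^0(W)]$, and $\widehat{\beta}^n-\beta^0$ is supported on the ``wedge'' where $\Ind_{\{\VEm^n(W)<\widehat{\theta}^n\}}$ and $\Ind_{\{\VEm^0(W)<\theta^0\}}$ disagree. On that wedge $|\VEm^0(W)-\theta^0|\le |\widehat{\theta}^n-\theta^0|+\|\VEm^n-\VEm^0\|_{\infty}$, and by the density bound in \ref{it:omegacontnotflat} the wedge mass is $O_P(|\widehat{\theta}^n-\theta^0|+\|\VEm^n-\VEm^0\|_\infty)$; their product is $o_P(\nmin^{-1/2})$ under the threshold rate \ref{it:goodquantile} together with \ref{it:betangood} and \ref{it:CSconsistency}, so that replacing $\VEm^0(W)$ by $\theta^0$ yields the key margin identity
\[
\gamma^{\widehat{\beta}^n,0}-\gamma^{\beta^0,0}=\theta^0\bigl(\omega^{\widehat{\beta}^n,0}-\omega^{\beta^0,0}\bigr)+o_P(\nmin^{-1/2}).
\]
Now Steps~\ref{it:thetan}--\ref{it:betan} force $\widehat{\omega}^{\widehat{\beta}^n,n}=\mu=\omega^{\beta^0,0}$, so the denominator is pinned at $\mu$ and $\omega^{\widehat{\beta}^n,0}-\omega^{\beta^0,0}=-(\widehat{\omega}^{\widehat{\beta}^n,n}-\omega^{\widehat{\beta}^n,0})=-\sum_{s}(Q_s^n-P_s^0)\nabla\Omega_s^{\beta^0,0}+o_P(\nmin^{-1/2})$ by the previous paragraph. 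Substituting into the exact identity $\widehat{\phi}^n-\phi^0=\mu^{-1}(\widehat{\gamma}^{\widehat{\beta}^n,n}-\gamma^{\beta^0,0})$ and combining the one-step term for $\widehat{\gamma}$ with this drift gives $\mu^{-1}\sum_{s}(Q_s^n-P_s^0)[\nabla\Gamma_s^{\beta^0,0}-\theta^0\nabla\Omega_s^{\beta^0,0}]+o_P(\nmin^{-1/2})$, which is exactly the interior-case gradient of Theorem~\ref{thm:pd}.

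The hard part is the margin identity of the third paragraph: I must show simultaneously that $\VEm^0$ in the $\gamma$-bias may be replaced by the constant $\theta^0$ and that the resulting $\omega$-drift is absorbed by the constraint $\widehat{\omega}^{\widehat{\beta}^n,n}=\mu$, all up to $o_P(\nmin^{-1/2})$. This is where the non-smoothness of the thresholding in $\widehat{\beta}^n$ bites, and where the density condition \ref{it:omegacontnotflat} (which rules out $P_\star^0\{\VEm^0(W)=\theta^0\}>0$, cf.\ the exceptional-law phenomenon) is indispensable: without it the wedge carries non-negligible mass at distance $O(1)$ from $\theta^0$ and the replacement fails. Coupling the deterministic wedge bound to the empirical-process control of $\widehat{\theta}^n$ and $\widehat{\beta}^n$, so that the data-adaptive index does not inflate the remainder, is the main technical obstacle, and is precisely what conditions \ref{it:thetaDonsker}, \ref{it:thetaempproc}, \ref{it:goodquantile}, and \ref{it:betangood} are designed to handle.
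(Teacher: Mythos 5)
Your proposal is correct and follows essentially the same route as the paper: the same decomposition into the one-step linearization of $\widehat{\gamma}^{\widehat{\beta}^n,n}$, the drift term $\gamma^{\widehat{\beta}^n,0}-\gamma^{\beta^0,0}=\theta^0\bigl(\omega^{\widehat{\beta}^n,0}-\omega^{\beta^0,0}\bigr)+o_P(\nmin^{-1/2})$ absorbed through the constraint $\widehat{\omega}^{\widehat{\beta}^n,n}\approx\mu$, and the collapse of the boundary cases \ref{it:mubig}/\ref{it:musmall} to a standard delta-method argument after showing the threshold is infinite with probability approaching one. The only remark worth making is that the ``margin identity'' you single out as the hard step is precisely the content of assumption \ref{it:betangood}, so in the theorem it is assumed rather than derived; the wedge/density argument you sketch is essentially the paper's sufficient condition for \ref{it:betangood}, given separately in Appendix~\ref{app:moreinterpretablebetangood} under an additional margin condition and rate conditions on $\VEm^n-\VEm^0$ that are not among the theorem's hypotheses.
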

The \hyperref[app:thmalproof]{proof of Theorem~\ref*{thm:al}} is given in Appendix~\ref{app:thmalproof}. We now summarize the additional conditions used in the theorem. Condition~\ref{it:CSconsistency} requires that the initial estimate of $\mathcal{P}^0$ is consistent and that certain remainder terms converge at a sufficient rate, namely $o_P(\nmin^{-1/2})$. Only requiring remainder terms to behave like $o_P(\nmin^{-1/2})$ makes our conditions on estimates of parameters from a full efficacy trial $s$ more plausible if $n_s\gg \nmin$, since we typically expect these parameters to converge at an $n_s^{-1/2}$ rate in a parametric model and a slower rate in a nonparametric model, but the slower rate may still be much faster than $\nmin^{-1/2}$. Furthermore, the fact that $W$ is univariate leads us to believe that, under some smoothness, we expect the nonparametric rate for estimating the needed functionals of $\mathcal{P}^n$ to be only slightly slower than the parametric rate. Conditions \ref{it:thetaDonsker} and \ref{it:thetaempproc} represent an empirical process condition and mild consistency condition on the initial estimate $\mathcal{P}^n$. Condition~\ref{it:goodquantile} ensures that the estimate $\widehat{\beta}^n$ of $\beta^0$ satisfies $\widehat{\omega}^{\widehat{\beta}^n,n}\approx \mu$, where $\widehat{\omega}^{\widehat{\beta}^n,n}$ is a one-step estimate of $\omega^{\widehat{\beta}^n,0}$. This is to be expected under \ref{it:omegacontnotflat} given that $\omega^{\beta^0,0}=\mu$ in this case, much as we expect the empirical cumulative distribution function evaluated at the sample median to be equal to $1/2 + o_P(n^{-1/2})$. In fact, we often expect $\widehat{\omega}^{\widehat{\beta}^n,n}$ to be exactly equal to $\mu$, and this can be formally checked in any given application by looking at $\widehat{\omega}^{\widehat{\beta}^n,n}$. Condition~\ref{it:betangood} requires that $\VEm^n$ estimates $\VEm^0$ sufficiently well so that $\UR^{\widehat{\beta}^n,0}$ is approximately equal to the worst-case unvaccinated risk function $\UR^{\beta^0,0}$. Appendix~\ref{app:moreinterpretablebetangood} relates this condition to the margin conditions appearing in the classification literature. Condition~\ref{it:VEnbdd} simply states that the estimate of the vaccine efficacy curve $w\mapsto \VEm^0(w)$ is bounded away from negative infinity, which is plausible given that $w\mapsto \VEm^0(w)$ is bounded under \ref{it:brvebdd}.

\subsubsection{Constructing a lower confidence bound.} \label{sec:conflb}
We now describe how Theorem \ref{thm:al} allows the construction of a lower confidence bound for the vaccine efficacy lower bound $\phi^0$. For all $s\in\mathcal{S}$, let $\sigma_s^2\triangleq \E_s^0[\nabla \Phi_s^0(O)^2]$, i.e. the variance of $\nabla \Phi_s^0(O)$ when $O\sim P_s^0$. Accepting slight notational overload, we define $\sigma_n^2\triangleq \sum_{s\in\mathcal{S}} \frac{\nmin}{n_s} \sigma_s^2$. We propose using $\widehat{\phi}^n - 1.64 \nmin^{-1/2}\hat{\sigma}_n$ as a lower confidence bound for $\phi^0$, where $\hat{\sigma}_n\triangleq \sum_{s\in\mathcal{S}} \frac{\nmin}{n_s} \hat{\sigma}_s^2$ for empirical estimates $\hat{\sigma}_s^2\triangleq Q_s^n [\nabla \Phi_s^n-Q_s^n \nabla \Phi_s^n]^2$ of $\sigma_s^2$. Our lower confidence bound on $\phi^0$ deviates from $\widehat{\phi}^n$ by the order $\nmin^{-1/2}\sigma_n$. As $\sigma_n<\sum_{s\in\mathcal{S}} \sigma_s^2<\infty$ and $\widehat{\phi}^n$ deviates from $\phi^0$ on the order $O(\nmin^{-1/2})$ by our asymptotic linearity result, the deviation of our lower confidence bound from $\phi^0$ is of the order of the inverse square root of the smallest trial. Given that the trials in Populations $1,\ldots,S$ will often have been efficacy studies, we will often expect the smallest sample to be of subjects drawn from population $\star$.

We can repeat the above procedure for a range of plausible $\mu$ values and report a graph of how the lower bound varies with the choice of $\mu$. We remind the reader that the estimate $\widehat{\phi}^n$ and each $\hat{\sigma}_s^2$ rely on $\mu$, and so these values must be recalculated for each choice of $\mu$. 
%Figure~\ref{fig:dengueall} in our data example suggests an informative way to display the vaccine efficacy lower bound.
Figure~\ref{fig:avglb} from our simulation study suggests an informative way to display the vaccine efficacy lower bound, with the ``average lower bound'' $y$-axis (averaged across Monte Carlo replications of our simulation) replaced by the estimated lower bound. Given a range $\mathcal{U}$ of plausible values for $\mu$, one can read a lower uncertainty interval \citep{Vansteelandtetal2006} for the marginal vaccine efficacy $\Psi(P_\star^{0,F})$ in population $\star$ by looking at the smallest $\mu$-specific confidence lower bound across all $\mu\in\mathcal{U}$.

We now justify our lower confidence bound. Note that (\ref{eq:multisampleal}) rewrites as
\begin{align}
\frac{\sqrt{\nmin}[\widehat{\phi}^n-\phi^0]}{\sigma_n}&= \sum_{s\in\mathcal{S}} \sum_{i=1}^{n_s} \frac{\sqrt{\nmin}}{n_{s} \sigma_n} \nabla \Phi_s^0(O_s[i]) + o_P(1). \label{eq:normalizedal}
\end{align}
For each $s$, observe that
\begin{align*}
\Var_{s}^0\left[\frac{\sqrt{\nmin}}{n_{s} \sigma_n} \nabla \Phi_s(O_s[i])\right]&= \frac{1}{n_s} \frac{n_{s}^{-1}\sigma_{s}^2}{[n_\star^{-1}\sigma_\star^2 + \sum_{s'=1}^S n_{s'}^{-1} \sigma_{s'}^2]}.
\end{align*}
The latter fraction on the right-hand side is bounded between $0$ and $1$. Thus the variance on the left shrinks at rate $O(n_s^{-1})$. The above and the independence of all of the observations also shows that the double sum on the right-hand side of (\ref{eq:normalizedal}) has variance $1$.

These two facts readily allow one to show that the conditions of the Lindeberg central limit theorem hold, thereby establishing that $\frac{\sqrt{\nmin}[\widehat{\phi}^n-\phi^0]}{\sigma_n}$ converges to a standard normal distribution \citep{Billingsley1999}. It follows that a valid 95\% level lower confidence bound for $\phi^0$ is given by $\widehat{\phi}^n - 1.64 \nmin^{-1/2}\sigma_n$. In practice, $\sigma_n$ is not known, but we can estimate it with $\hat{\sigma}_n$. Glivenko-Cantelli conditions on the functions $\nabla \Phi_s^n$ ensure that $\hat{\sigma}_n^2\rightarrow \sigma_n^2$ in probability, and under this convergence Slutsky's theorem then ensures that a 95\% lower bound for $\phi^0$ is given by $\widehat{\phi}^n - 1.64 \nmin^{-1/2}\hat{\sigma}_n$.

\subsubsection{Intuition for the estimator's construction and sketch of asymptotic linearity proof.}
Our estimator is a hybrid of a TMLE and a one-step estimator. Before discussing our specific estimator, we give some intuition by discussing one-step estimation of a general pathwise differentiable parameter $\Pi : \mathcal{M}\rightarrow \mathbb{R}$. Suppose that $\mathcal{P}^n$ is an initial estimate of $\mathcal{P}^0$ and we wish to estimate $\pi^0\triangleq \Pi(\mathcal{P}^0)$. One possible estimate is obtained via plug-in estimation: $\pi^n\triangleq \Pi(\mathcal{P}^n)$. If $\mathcal{P}^n$ is a good estimate of $\mathcal{P}^0$, often in the sense that certain regression functions under $\mathcal{P}^n$ are close to the corresponding regressions under $\mathcal{P}^0$, then we generally expect that $\pi^n - \pi^0\approx -\sum_{s\in\mathcal{S}} P_s^0 \nabla \Pi_s^n$, where $\nabla \Pi_s^n$ are gradients of $\Pi$ evaluated at $\mathcal{P}^n$. It will often be the case that the right-hand side of this approximation is not $O_P(\nmin^{-1/2})$, and thus that $\pi^n$ converges to $\pi^0$ at a slower than $\nmin^{-1/2}$ rate. In an effort to improve the initial estimate $\pi^n$, a one-step estimator adds the empirical version of $\sum_{s\in\mathcal{S}} P_s^0 \nabla \Pi_s^n$ to the initial estimate $\pi^n$, i.e. one uses as estimate $\widehat{\pi}^n\triangleq \pi^n + \sum_{s\in\mathcal{S}} Q_s^n \nabla \Pi_s^n$, so that $\widehat{\pi}^n - \pi^0\approx \sum_{s\in\mathcal{S}} (Q_s^n-P_s^0) \nabla \Pi_s^n$. Under empirical process conditions, one can replace the gradients $\nabla \Pi_s^n$ under $\mathcal{P}^n$ by the gradients $\nabla \Pi_s^0$ under $\mathcal{P}^0$ so that $\widehat{\pi}^n - \pi^0\approx \sum_{s\in\mathcal{S}} (Q_s^n-P_s^0) \nabla \Pi_s^0$ and it is possible to apply a central limit theorem to understand the behavior of a scaled version of the one-step estimator minus the truth. Typically TMLEs follow a similar derivation, but the initial estimate $\mathcal{P}^n$ of $\mathcal{P}^0$ is carefully selected so that the bias correction $\sum_{s\in\mathcal{S}} Q_s^n \nabla \Pi_s^0$ is small or zero.

The remainder of the discussion is focused on our proposed estimator. Steps \ref{it:tmle1}, \ref{it:tmle2}, and \ref{it:tmle3} are designed to ensure that the score $\sum_{s=1}^S Q_s^n \nabla \Omega_s^{\beta,n}$ is equal to zero when $\beta\equiv 1$, which is an important score equation to solve if $\widehat{\theta}^n=+\infty$. If each $\ell_s(w)$ is a constant multiple of $\mathbbmss{u}_s(w)$ or if each $\ell_s\equiv 0$, then the same score equation is equal to zero when $\beta\equiv 0$, which is an important score equation to solve when $\widehat{\theta}^n=-\infty$. Otherwise, Steps \hyperlink{it:tmle1prime}{\ref*{it:tmle1}'}, \hyperlink{it:tmle1prime}{\ref*{it:tmle2}'}, and \hyperlink{it:tmle1prime}{\ref*{it:tmle3}'} in Appendix \ref{app:altalg} ensure that the score equation is solved at both $\beta\equiv 0$ and $\beta\equiv 1$. If $\widehat{\theta}=+\infty$, then $\Ind_{\{\VEm^n(w_\star[i])<\widehat{\theta}^n\}}=1$ for all $i=1,\ldots,n_\star$ so that the preceding argument implies that $\sum_{s=1}^S Q_s^n \nabla \Omega_s^{\widehat{\beta}^n,n} = 0$ and Step \ref{it:omegan} trivially yields that $Q_{\star}^n \nabla \Omega_{\star}^{\widehat{\beta}^n,n}=0$. Similarly, $\widehat{\theta}^n=-\infty$ implies that $\sum_{s=1}^S Q_s^n \nabla \Omega_s^{\widehat{\beta}^n,n} = 0$. 

We now discuss the choice of $\widehat{\beta}^n$. For each $\beta$, $\widehat{\omega}^{\beta,n}$ is a one-step estimator for $\omega^{\beta,0}$. It is therefore not surprising that, under the conditions given in Appendix~\ref{app:est}, $\widehat{\omega}^{\beta,n} = \omega^{\beta,0} + O_P(\nmin^{-1/2})$. If $\theta^0$ is finite, then one can show $\widehat{\theta}^n\approx \theta^0$ (see Lemma~\ref{lem:thetancons} in Appendix~\ref{app:moreinterpretablebetangood}). Furthermore, if $\mu$ is too large so that $\theta^0=+\infty$, then our conditions imply that $\widehat{\theta}^n=\infty$ with probability approaching one (see Lemma~\ref{lem:QstarnWeq1} in the Appendix~\ref{app:thmalproof}), and an analogous result holds if $\mu$ is too small.

Given an estimate of $\widehat{\beta}^n$, it remains to estimate $\phi^0$. We show in Appendix~\ref{app:thmalproof} that
\begin{align*}
\widehat{\phi}^n-\phi^0&\approx \frac{\widehat{\gamma}^{\widehat{\beta}^n,n}-\gamma^{\widehat{\beta}^n,0}}{\omega^{\beta^0,0}} - \frac{\gamma^{\beta^0,0}}{[\omega^{\beta^0,0}]^2}\left[\widehat{\omega}^{\widehat{\beta}^n,n} - \omega^{\beta^0,0}\right] + \frac{\gamma^{\widehat{\beta}^n,0} - \gamma^{\beta^0,0}}{\omega^{\beta^0,0}}.
\end{align*}
For the first term on the right, we note that, $\widehat{\gamma}^{\beta,n}$ is a one-step estimator for $\gamma^{\beta,0}$ for each $\beta$, and so it is not surprising that this estimator is asymptotically linear for each $\beta$. Furthermore, the fact that $\widehat{\beta}^n$ converges to the fixed quantity $\beta^0$ suggests that these same asymptotic linearity statements should hold with $\beta$ replaced by $\widehat{\beta}^n$ (can be formally shown via empirical process arguments). Under \ref{it:omegacontnotflat} and \ref{it:goodquantile}, $\widehat{\omega}^{\widehat{\beta}^n,n}\approx \mu = \omega^{\beta^0,0}$ so that the second term above is negligible. Under \ref{it:mubig} or \ref{it:musmall}, the $\widehat{\beta}^n$ is correctly specified with probability approaching one so that analyzing the second term is like analyzing the one-step estimator $\widehat{\omega}^{\beta^0,n}$ of $\omega^{\beta^0,0}$, which is asymptotically linear under reasonable conditions. Additionally, \ref{it:mubig} or \ref{it:musmall} implies that the final term above is equal to zero with probability approaching one. For the final term above under \ref{it:omegacontnotflat}, we note that
\begin{align*}
\gamma^{\widehat{\beta}^n,0} - \gamma^{\beta^0,0}&= \beta^0\left[\omega^{\widehat{\beta}^n,0} - \omega^{\beta^0,0}\right] + \left\{\gamma^{\widehat{\beta}^n,0} - \gamma^{\beta^0,0} - \beta^0\left[\omega^{\widehat{\beta}^n,0} - \omega^{\beta^0,0}\right]\right\},
\end{align*}
where the latter term is negligible by assumption \ref{it:betangood} (though a more interpretable sufficient condition is given in Appendix~\ref{app:moreinterpretablebetangood}). We establish the behavior of $\omega^{\widehat{\beta}^n,0} - \omega^{\beta^0,0}$ by using that $\widehat{\omega}^{\widehat{\beta},n}\approx \mu = \omega^{\beta^0,0}$, and so it suffices to study the behavior of $-[\widehat{\omega}^{\widehat{\beta}^n,0}-\omega^{\widehat{\beta}^n,0}]$. The study of this term is standard since $\widehat{\omega}^{\beta,0}$ is a one-step estimator of the $\beta$-specific parameter $\omega^{\beta,0}$.

\subsection{Conditional vaccine efficacy constant across $s=1,\ldots,S$} \label{sec:veconst}
The lower bound in the above section relies on the validity of the bridging assumption \ref{it:brVEbridge}, which is indexed by $d$ and $\mathbbmss{v}_0,\mathbbmss{v}_1,\ldots,\mathbbmss{v}_S$. Suppose that the following condition holds:
\begin{enumerate}[resume*=en:bridgconds]
	\item\label{it:VEconst} The vaccine efficacy curve $w\mapsto \VE_s(w)$ is constant across trials $s=1,\ldots,S$ and the chosen $\mathbbmss{v}_0\equiv 0$.
\end{enumerate}
Choosing $\mathbbmss{v}_0\equiv 0$ indicates that the user believes that the vaccine efficacy curves in the completed trials lower bound the vaccine efficacy curve in population $\star$.% The above assumption appears to be reasonable for the trials in our dengue application \citep{Hadinegoroetal2015}.

We now show that the above condition allows one to choose $\mathbbmss{v}\triangleq (\mathbbmss{v}_s : s=1,\ldots,S)$ to maximize statistical efficiency for estimating our lower bound. In what follows we make the dependence of $w\mapsto \VEm^0(w)$ on $\mathbbmss{v}$ explicit by writing $\VE_{\minus}^{\mathbbmss{v},0}$. The result relies on the fact that $\VE_{\minus}^{\mathbbmss{v},0}$ is invariant in $\mathbbmss{v}$. In particular, for each $w$ we have that
\begin{align*}
\VE_{\minus}^{\mathbbmss{v},0}(w)%&= 1-\frac{\sum_{s=1}^S \mathbbmss{v}_s(w) \E_s[Y|A=1,w]}{\sum_{s=1}^S \mathbbmss{v}_s(w) \E_s[Y|A=0,w]} \\
&= \frac{\sum_{s=1}^S \mathbbmss{v}_s(w) \E_s[Y|A=0,w]\VE_s^0(w)}{\sum_{s=1}^S \mathbbmss{v}_s(w) \E_s[Y|A=0,w]} \\
&= \VE_1^0(w)\frac{\sum_{s=1}^S \mathbbmss{v}_s(w) \E_s[Y|A=0,w]}{\sum_{s=1}^S \mathbbmss{v}_s(w) \E_s[Y|A=0,w]},
% &= \VE_1(w).
\end{align*}
which is equal to $\VE_1^0(w)$. Thus, under \ref{it:VEconst}, if \ref{it:brVEbridge} holds for some set of functions $\mathbbmss{v}$, then it holds for all $\mathbbmss{v}$. While the above derivation implies that the lower bound parameter $\Phi$ is also invariant to $\mathbbmss{v}$, the same is not true of the gradients presented in Theorem~\ref{thm:pd}. As the gradient determines the efficiency of our estimator, and therefore the width of our confidence interval, it follows that one can choose $\mathbbmss{v}$ to (approximately) maximize the efficiency of our procedure provided \ref{it:VEconst} holds. Let $\hat{\sigma}_n^2(v)$ and $\sigma_n^2(v)$ respectively denote the values of $\hat{\sigma}_n^2$ and $\sigma_n^2$ defined in Section~\ref{sec:conflb}, but now making the dependence on $\mathbbmss{v}$ explicit in the notation. If the consistency conditions of Theorem~\ref{thm:al} hold uniformly over all $\mathbbmss{v}$ in some class $\mathcal{V}$, then one expects that
\begin{align}
\sup_{v\in\mathcal{V}}|\hat{\sigma}_n^2(v) - \sigma_n^2(v)|\rightarrow 0. \label{eq:GC}
\end{align}
As $\hat{\sigma}_n^2(v)$ can be written as an empirical mean of random functions, the primary condition that one needs to add to Theorem~\ref{thm:al} for this convergence to be valid is that $\mathcal{V}$ is not too large, namely that functions in the sets in $\mathcal{V}$ belong to a Glivenko-Cantelli class. We will return to this requirement shortly. For now, suppose that we have selected a class $\mathcal{V}$ small enough so that (\ref{eq:GC}) is plausible. One can then select $\mathbbmss{v}_n$ as the minimizer of $\hat{\sigma}_n^2(v)$ over $v\in\mathcal{V}$. If $\mathcal{V}$ satisfies a more restrictive condition, namely that the functions in the sets it contains are Donsker, then we expect that we can run the estimator described in Section~\ref{sec:eststeps} at the selected $\mathbbmss{v}_n$ and report the lower confidence bound as defined in Section~\ref{sec:conflb}. A simple suggestion for a class $\mathcal{V}$ that satisfies the conditions of the previous theorem is that $\mathcal{V}$ consists of constant functions, namely each $\mathbbmss{v}_s(w)\in[0,1]$ does not depend on $w$. One could alternatively parameterize $\mathbbmss{v}_s$ using, e.g., a linear logistic regression formulation so that $\mathbbmss{v}_s(w)\propto \logit^{-1}(\beta_0 + \beta_1 w)$ to further improve efficiency.  

The arguments that we have provided here are only a sketch: more work would be needed to make them precise, though these arguments are fairly standard when studying M-estimators \citep{vanderVaartWellner1996}, and a detailed study specific to an estimator with a nuisance parameter selected to minimize the variance were given in \cite{Rubin&vanderLaan08}. A careful analysis would show that, when \ref{it:VEconst} holds and $\mathcal{V}$ is not too large, the conditions needed to ensure the validity of the procedure discussed in this section are not much stronger than the conditions needed to establish the validity of a procedure at fixed $\mathbbmss{v}$. Indeed, in Section~\ref{sec:sim} we show via simulation that this procedure yields less conservative lower bounds while still maintaining nominal coverage.

\section{Simulation} \label{sec:sim}
\subsection{Simulation settings}
We evaluated the performance of our method via simulation in \texttt{R} \citep{R2014}. We have $S=2$ completed efficacy trials in our simulation. We first run our simulation without missingness in the biomarker $W$, and then we simulate data from a two-phase sampling scheme.

Let $Z$ be a standard normal random variable. The marginal distribution of $W$ has the same distribution as $\logit^{-1}(Z-2)$ when $s=1$, $\logit^{-1}(5[Z-1])$ when $s=2$, and $\logit^{-1}(2[Z-1/2])$ when $s=\star$. All trials assign treatment with probability $1/2$, regardless of the value of baseline covariates. The vaccine efficacy is the same across Populations 1 and 2. In particular, both have vaccine efficacy
\begin{align*}
\VE(w)&= 1-\frac{\logit^{-1}\left(-1-w-3[0.3 + (w-0.2)^+]\right)}{\logit^{-1}\left(-1-w\right)},
\end{align*}
where $x^+$ denotes the positive part of $x$. As $\E_s^0[Y|A=1,w]$ equals $[1-\VE(w)]\E_s^0[Y|A=0,w]$, it suffices to define $\E_s^0[Y|A=0,w]$ for each of the three trials. We let $\E_1^0[Y|A=0,w] = \logit^{-1}\left(-1-w\right)$ and $\E_2^0[Y|A=0,w] = \logit^{-1}\left(-1\right)\approx 0.27$.%, and $\E_{\star}^{0,F}[Y|A=0,w] = \logit^{-1}\left(-1.5\right)$.

\begin{sloppypar}
We consider two sample size settings, namely $(n_\star,n_1,n_2)$ equal to $(100,2000,2000)$ and $(200,4000,4000)$. We respectively refer to these sample sizes as the ``Smaller Sample Size'' and ``Larger Sample Size'' in our figures. We use three choices for the coefficients indexing our unvaccinated risk lower and upper bounds. The first setting, labeled ``Loosest'' in our figures, sets $\ell_s\equiv\mathbbmss{u}_s\equiv 0$, $s=1,2$, and $\ell_0\equiv 0$, $\mathbbmss{u}_0\equiv 1$. The second, labeled ``Moderate'' in our figures, sets $\ell_s\equiv 0.25$ and $\mathbbmss{u}_s\equiv 0.75$, $s=1,2$, and $\ell_0\equiv \mathbbmss{u}_0\equiv 0$. The third, labeled ``Tight'' in our figures, sets $\ell_s(w)=0.4$ and $\mathbbmss{u}_s(w)=0.6$, $s=1,2$, and $\ell_0\equiv \mathbbmss{u}_0\equiv 0$.
\end{sloppypar}

We suppose that the user \textit{a priori} believes that $\VE_\star^{0,F}(w)\ge \VE_1(w)$ and that $\VE_s(w)$ is invariant in $s$. In this case $\mathbbmss{v}_0(w)=0$, while $w\mapsto \VEm^0(w)$ is invariant in the choice of $\mathbbmss{v}$. We implement the version of our estimator presented in the main text since $\ell_s(w)$ is a multiple of $\mathbbmss{u}_s(w)$ for all simulation settings, both for fixed $\mathbbmss{v}_1(w)=\mathbbmss{v}_2(w)=1/2$, all $w$, indexing $w\mapsto \VEm^0(w)$ and also for the procedure described in Section~\ref{sec:veconst} that adaptively selects a $w$-invariant convex combination $\mathbbmss{v}_1(w),\mathbbmss{v}_2(2)$ to maximize the estimator's precision. We respectively refer to these estimation schemes as ``Fixed'' and ``Adaptive'' in our figures. The marginal distributions of the biomarker within each population are estimated using a ratio of kernel density estimates as provided by the \texttt{ks} package \citep{kspackage}, with the option \texttt{unit.interval} set to \texttt{TRUE} so that the estimator respects the bounds on the biomarker $W$ and the bandwidth selected according to the univariate plug-in selector presented in \cite{Wand&Jones1994}. The ratio is then standardized so that the empirical mean of each $s$-specific estimate of $\frac{dP_\star^0}{dP_s^0}$ has empirical mean $1$ within the sample of observations from Trial $s$. Although the probability of treatment given covariates is known in each trial, we estimate these quantities because of the known efficiency gains resulting from doing so \citep{vdL02}. In particular, we regress $A$ against $W$ using the ensemble algorithm found in the \texttt{SuperLearner} package \citep{vanderLaan&Polley&Hubbard07,SuperLearner2013}, and provide the algorithm with a candidate library containing $\texttt{SL.glm}$ and $\texttt{SL.glm.interaction}$. We estimate the outcome regressions by regressing $Y$ against $A$ and $W$ using the \texttt{SuperLearner} package, using a candidate library containing $\texttt{SL.mean}$, $\texttt{SL.glm}$, $\texttt{SL.glm.interaction}$, $\texttt{SL.step.interaction}$, $\texttt{SL.gam}$, and $\texttt{SL.nnet}$.

We then repeat our simulation with missingness in the biomarker in the completed efficacy trials via a 1:1 nested case:control sampling design. In particular, we suppose that biomarker values are observed on all $m$ cases (participants with $Y=1$), and, within each trial $s$, on a random sample of controls of size $m$. We run the method described in Appendix~\ref{app:extensions}, first using the known observation weights, and then estimating these weights by using the known unity weights among cases and, among controls, running a logistic regression of observation status against a linear interaction model of observation status against an indicator for belonging to Trial $s=1,2$ and an observed covariate $X=\logit^{-1}[\logit(W) + Z]$, where $Z$ is a standard normal. The covariate $X$ is meant to represent a biomarker that is inexpensive to measure on all participants, but that is highly predictive of the biomarker $W$ needed for the bridging exercise.

The function $\VEm^0$ used in our simulation is monotonic, and therefore the results in Appendix~\ref{app:mono} may be used to weaken the assumptions for our method's applicability. The results were nearly identical for this data generating distribution and so are omitted.

\subsection{Simulation results}
Figure~\ref{fig:covg} demonstrates that our lower confidence bounds achieve approximately 95\% or better coverage for the lower bound on the vaccine efficacy across simulation settings and values of $\mu$. The bound is generally conservative for the Moderate and Tight settings. For a given data generating distribution, there is one true value of the marginal unvaccinated risk $\mu_0$, and the interpretation of our curve as a 95\% lower confidence bound is valid if the lower bound coverage is approximately 95\% when $\mu=\mu_0$. 
The methods that choose the convex combination in the $w\mapsto \VEm^0(w)$ bridging parameter to minimize the standard error had comparable coverage than the methods that used a fixed convex combination.

\begin{figure}
	\centering
	\includegraphics[width=\linewidth]{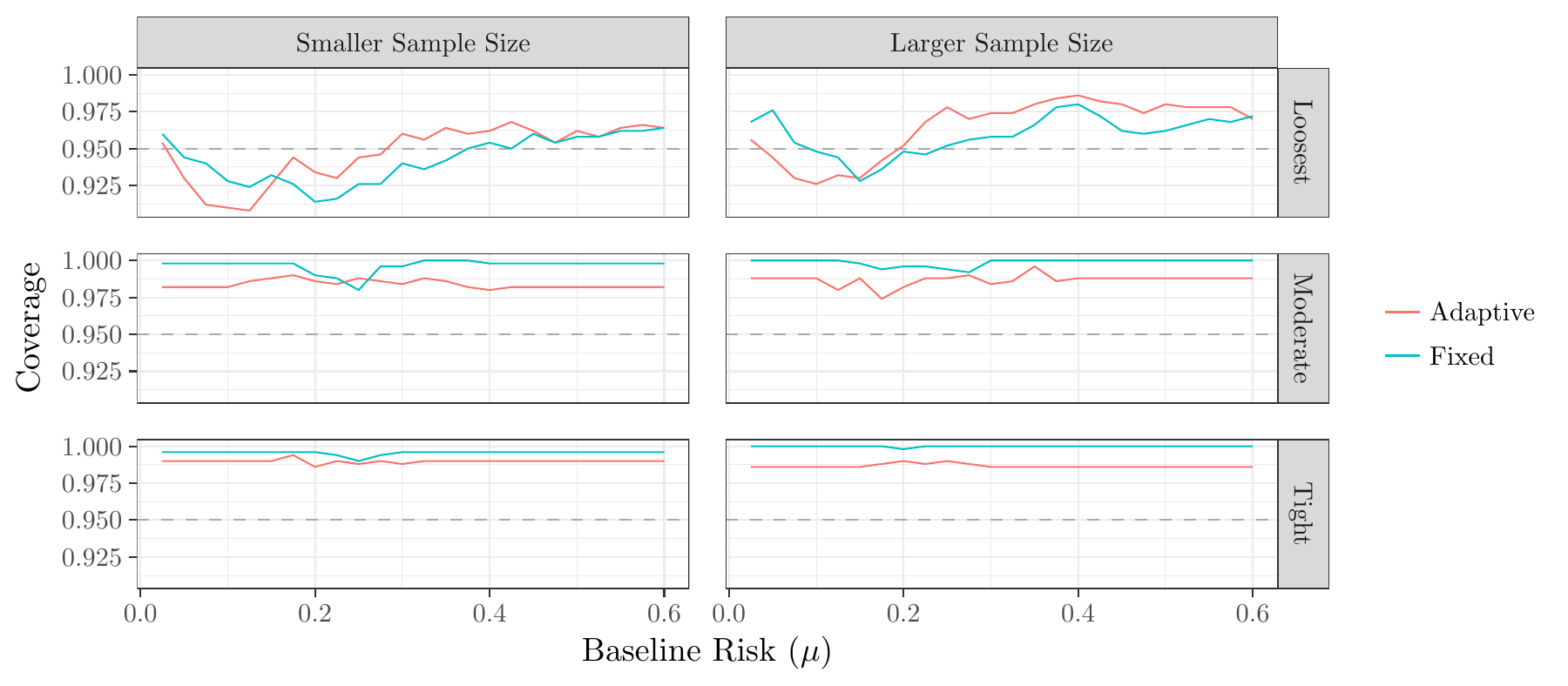}
	\caption{Coverage of our lower confidence bound for $\phi^0$, i.e. the lower bound on the vaccine efficacy, in our simulation. Conducted at both smaller and larger sample sizes, respectively with $(n_\star,n_1,n_2)$ equal to $(100,2000,2000)$ and $(200,4000,4000)$, and for different choices of $\ell_s$ and $\mathbbmss{u}_s$, determining the tightness of the unvaccinated risk bounds. Horizontal dashed lines drawn at 95\% coverage.}
	\label{fig:covg}
\end{figure}

Figure~\ref{fig:avglb} demonstrates that the adaptive methods tend to have a slightly higher average lower bound than the fixed methods that use a convex combination of $(1/2,1/2)$ for the $\VE$ bridging assumption. Given the approximately appropriate coverage of both methods for the (typically conservative) lower bound on the true vaccine efficacy, it seems prudent in practice to use the less conservative adaptive approach.

\begin{figure}
	\centering
	\includegraphics[width=\linewidth]{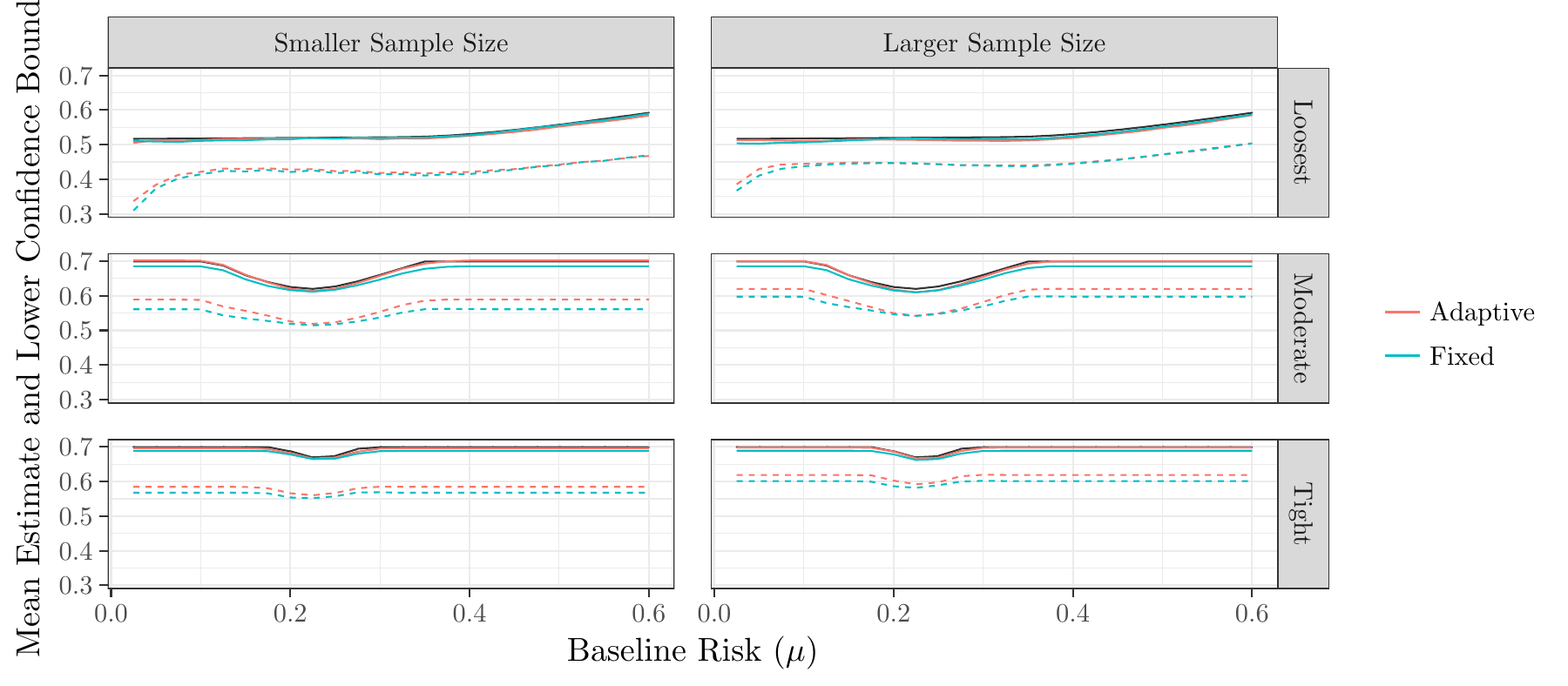}
	\caption{Average estimates (solid lines) and lower confidence bounds (dashed lines) for $\phi^0$, i.e. the lower bound on the vaccine efficacy, in our simulation. Conducted at both smaller and larger sample sizes, respectively with $(n_\star,n_1,n_2)$ equal to $(100,2000,2000)$ and $(200,4000,4000)$, and for different choices of $\ell_s$ and $\mathbbmss{u}_s$, determining the tightness of the unvaccinated risk bounds. Black trend lines denote true $(\ell_s,\mathbbmss{u}_s,\mu)$-specific lower bound.}
	\label{fig:avglb}
\end{figure}

In Appendix~\ref{app:nestedccsim}, we display analogous plots for the fixed two-phase sampling method. Results were similar for the adaptive method. These plots show that the two-phase sampling procedure still maintains the desired coverage level, with a moderate loss in precision in our setting. This is not surprising given that the user has less data for the two-phase methods.

\section{Discussion} \label{sec:disc}
We have presented a method for using data from completed efficacy trials to partially bridge the vaccine efficacy to a new population. We first developed conditions that identify a population level parameter representing a lower bound on the vaccine efficacy in the new population. We then provided a nonparametric estimator of this quantity that respects the fact that this is a multiple sample problem and enables the use of modern data adaptive regression and density estimation techniques to estimate the underlying baseline biomarker distribution, worst-case unvaccinated risk, and vaccine efficacy lower bound.

The validity of our population level lower bound on the vaccine efficacy in the new population relies on three main conditions. One of these conditions essentially states that the support of the key biomarker in the new setting is contained in that of the old efficacy trials so that it is possible to learn about the new setting from these trials. Another states that the results from the earlier trials can be used to derive a lower bound of the conditional vaccine efficacy curve in the new setting. A special case of this condition is that the conditional vaccine efficacy curve is constant across trials. While these two conditions alone allow one to get a lower bound on the marginal vaccine efficacy, this lower bound can be very loose when the biomarker is truly a modifier of vaccine efficacy because it involves finding the worst case distribution for the conditional unvaccinated risk among the class of all such possible conditional risk distributions. We thus add a third condition, namely that the marginal unvaccinated risk is equal to a user-specified constant $\mu$. Adding this constraint can greatly tighten the lower bound. The user can then report lower confidence bounds resulting from our procedure at a range of plausible values of $\mu$.

In this work, we have assumed that one is able to define a lower bound on the vaccine efficacy conditional only on baseline covariates and not on (counterfactual) post-vaccination immune responses. There are advantages and disadvantages to this problem setup. The advantage is that, if the lower bound assumption is valid, investigators can partially bridge the vaccine efficacy to a new setting without vaccinating any individuals in the new population and subsequently waiting to measure the immune response. Avoiding running a phase I immunogenicity trial can save significant financial resources and can also accelerate the introduction of a potentially life-saving vaccine into the new population. The disadvantage of bridging using only using baseline covariates is that the post-vaccination immune response is often more strongly associated with vaccine efficacy, and so a partial bridging assumption that accounts for these responses may be more plausible or less conservative. \cite{Gilbert&Huang2016} consider bridging within a principal stratification framework, thereby allowing for conditioning on counterfactual immune responses. In a future work, we will present a doubly robust partial bridging procedure that allows for conditioning on post-vaccination biomarkers.

Our vaccine efficacy lower bound depends on the marginal unvaccinated risk level $\mu$. While in many settings subject-matter experts may be able to suggest a plausible range of values for $\mu$, in other cases there may not be sufficient knowledge of the marginal risk for this to be possible. To deal with this issue, one could follow the suggestion of \cite{Gilbert&Huang2016} to formally incorporate epidemiologic surveillance data to learn such a range. While there may not be biomarker and baseline risk information jointly available, one may at times have reasonably accurate active surveillance data available to estimate the marginal unvaccinated risk in the new population. In this case, one could estimate our $\mu$ parameter rather than report a range of plausible values. One could then either treat this estimate as the truth, or modify our inferential procedure to incorporate the added uncertainty. While active surveillance data should be incorporated when it exists, often only data from national surveillance systems is available. These data may not be sufficiently accurate to inform plausible values of $\mu$. For example, recent studies of national surveillance of dengue incidence have shown up to 19-fold underreporting compared to active surveillance \citep{Nealonetal2016,Sartietal2016}.

The results in this paper readily extend to a large class of contrasts $f(\mu_1,\mu_0)$ between the marginal vaccinated risk $\mu_1$ and unvaccinated risk $\mu_0$. We have studied the multiplicative efficacy $f(\mu_1,\mu_0)=1-\mu_1/\mu_0$. If one is willing to make the three bridging assumptions made in this paper, namely a support condition, lower bounded conditional \textit{multiplicative} vaccine efficacy, and specified unvaccinated risk, then obtaining a lower bound for more general $f(\mu_1,\mu_0)$ is straightforward. Suppose that, for each $\mu_0$, $f(\mu_1,\mu_0)$ is sufficiently smooth and
monotonically decreasing in $\mu_1$ (the higher the risk among vaccinated individuals, the lower the efficacy). For a user-defined $\mu$, one can then conservatively estimate $f(\mu_1,\mu)$ by upper bounding $\mu_1$, which is indeed what we do in Section~\ref{sec:setup}. When $\mu=\mu_0$, one then gets a lower bound for the efficacy parameter of interest. All typical contrasts, including additive and odds-ratio type contrasts, appear to satisfy this monotonicity condition.

Finally, we note that, while we have focused our discussion on the intervention $A$ being a vaccination indicator, the results given in this paper apply immediately to any binary intervention $A$ and multiplicative efficacy parameter.

\section*{Acknowledgements}
This work was partially supported by the National Institute of Allergy and Infectious Disease at the National Institutes of Health under award  numbers R37AI054165 and UM1 AI068635.  The content is solely the responsibility of the authors and does not necessarily represent the official views of the National Institutes of Health. 
% The authors thank Sanofi Pasteur for providing the dengue vaccine efficacy trial data. 
The authors thank Ying Huang, Kara Rudolph, Ying Chen, Holly Janes, Dobromir Dimitrov, Jonathan Sugimoto, Laura Matrajt and Paul Edlefsen for their thoughtful comments on the early stages of this project.

\bibliographystyle{Chicago}
 \bibliography{persrule}

\newpage
\appendix
% The next two lines will reset the equation number to zero and then use A.1, A.2, etc. in Appendix to number equations
\setcounter{equation}{0}
\renewcommand{\theequation}{A.\arabic{equation}}
% Ditto for theorems
\setcounter{theorem}{0}
\renewcommand{\thetheorem}{A.\arabic{theorem}}
\renewcommand{\thecorollary}{A.\arabic{theorem}}
\renewcommand{\thelemma}{A.\arabic{theorem}}
\renewcommand{\theproposition}{A.\arabic{theorem}}
\renewcommand{\theconjecture}{A.\arabic{theorem}}

\section*{Appendix}
\section{Additional regularity conditions for Theorem~\ref{thm:al}} \label{app:est}
For general $\mathcal{P}'\in\mathcal{M}$, below we write $\VEm'$ and $D_{\VE,s}'$ for the analogues of $w\mapsto \VEm^0(w)$ and $D_{\VE,s}^0$ under $\mathcal{P}'$ rather than $\mathcal{P}^0$. Define
\begin{align*}
&\Rem_1(\mathcal{P}',\mathcal{P}^0)(w) \\
&\;\triangleq \sum_{s=1}^S \left|\E_s'[Y|A=0,w]-\E_s^0[Y|A=0,w]+\E_s^0\left[\frac{\Ind_{\{A=0\}}}{P_s'(A|w)}\left(Y - \E_s^0[Y|A,w]\right)\middle|w\right]\right| \\
&\;= \sum_{s=1}^S \left|\left(1-\frac{P_s^0(A=0|w)}{P_s'(A=0|w)}\right)\left(\E_s'[Y|A=0,w] - \E_s^0[Y|A=0,w]\right)\right|, \\
&\Rem_2(\mathcal{P}',\mathcal{P}^0)(w) \\
&\;\triangleq c(w) - \VEm^0(w) + \sum_{s=1}^S \E_s^0\left[D_{\VE,s}'(O)\middle|w\right] \\
&\;= \left[\VEm'(w)-\VEm^0(w)\right]\left[1-\frac{\mathbbmss{v}_0(w)d(0,w) + \sum_{s=1}^S \mathbbmss{v}_s(w)\E_s^0\left[Y|A=0,w\right]}{\mathbbmss{v}_0(w)d(0,w) + \sum_{s=1}^S \mathbbmss{v}_s(w)\E_s'\left[Y|A=0,w\right]}\right] \\
&\hspace{2em}- \frac{\VEm'(w)\sum_s \mathbbmss{v}_s(w) \left[1-\frac{P_s^0(A=0|w)}{P_s'(A=0|w)}\right]\left(\E_s'[Y|A=0,w] - \E_s^0[Y|A=0,w]\right)}{\mathbbmss{v}_0(w)d(0,w) + \sum_{s=1}^S \mathbbmss{v}_s(w)\E_s^0\left[Y|A=0,w\right]}.
\end{align*}
We now state the additional regularity conditions needed for Theorem~\ref{thm:al}.
\begin{enumerate}[resume*=en:regconds]
	\item\label{it:CSconsistency} $\mathcal{P}^n$ is consistent in the sense that all of the following terms are $o_P(\nmin^{-1/2})$:
	\begin{align*}
	&\norm{\Rem_1(\mathcal{P}^n,\mathcal{P}^0)}_{2,P_\star^0}, \\
	&\norm{\Rem_2(\mathcal{P}^n,\mathcal{P}^0)}_{2,P_\star^0}.
	\end{align*}
	Furthermore, for each $s=1,\ldots,S$, all of the following terms are also $o_P(\nmin^{-1/2})$:
	\begin{align*}
	&\sup_{\beta : \mathcal{W}\rightarrow[0,1]}\norm{w\mapsto \E_s^0[D_{\UR,s}^{\beta,n}|w]-\E_s^0[D_{\UR,s}^{\beta,0}|w]}_{2,P_s^0} \norm{\frac{dP_{\star}^n}{dP_s^n}-\frac{dP_{\star}^0}{dP_s^0}}_{2,P_s^0}, \\
	&\norm{\frac{dP_{\star}^n}{dP_s^n}-\frac{dP_{\star}^0}{dP_s^0}}_{2,P_s^0} \sup_{\beta : \mathcal{W}\rightarrow[0,1]} \norm{w\mapsto \E_s^0[D_{\UR,s}^{\beta,n}(O)|w]-\E_s^0[D_{\UR,s}^{\beta,0}(O)|w]}_{2,P_s^0}, \\
	&\norm{\frac{dP_{\star}^n}{dP_s^n}-\frac{dP_{\star}^0}{dP_s^0}}_{2,P_s^0} \norm{w\mapsto \E_s^0[D_{\VE,s}^n(O)|w]-\E_s^0[D_{\VE,s}^0(O)|w]}_{2,P_s^0}, \\
	&\norm{\upsilon^n-\upsilon^0}_{2,P_s^0} \norm{\VE^n-\VEm^0}_{2,P_s^0}.
	\end{align*}
\end{enumerate}
Like Donsker conditions used in one-sample problems, \ref{it:thetaDonsker} requires certain centered and scaled empirical means to be tight random elements. The next assumption makes this requirement formal.
\begin{enumerate}[resume*=en:regconds]
	\item\label{it:thetaDonsker} $(Q_s^n-P_s^0)\nabla \Omega_s^{\beta^0,n}=O_P(\nmin^{-1/2})$, $(Q_s^n-P_s^0)\nabla \Omega_s^{\widehat{\beta}^n,n}=O_P(\nmin^{-1/2})$ for each $s\in\mathcal{S}$.
\end{enumerate}
In our multiple sample problem, one can analyze $(Q_s^n-P_s^0) \nabla \Omega_s^{\beta,n}$ for $\beta$ fixed ($\beta=\beta^0$) or random ($\beta=\widehat{\beta}^n$) for each $s$ by conditioning on all observations not belonging to trial $s$, and then applying a maximal inequality \citep[which relies on the bracketing or uniform entropy integral of the class to which $\nabla \Omega_s^{\beta,n}$ belongs, see][]{vanderVaartWellner1996} to bound the randomness in this term resulting from observations in trial $s$.

We also require a condition that is similar to asymptotic equicontinuity conditions implied by the use of Donsker classes in one-sample problems.
\begin{enumerate}[resume*=en:regconds]
	\item\label{it:thetaempproc} $(Q_s^n-P_s^0)[\nabla \Omega_s^{\widehat{\beta}^n,n}-\nabla \Omega_s^{\beta^0,0}]$, $(Q_s^n-P_s^0)[\nabla \Gamma_s^{\widehat{\beta}^n,n}-\nabla \Gamma_s^{\beta^0,0}]$ for each $s\in\mathcal{S}$.
\end{enumerate}
Using the same arguments outlined following \ref{it:thetaDonsker}, one can show that, if $\nabla \Omega_s^{\widehat{\beta}^n,n}$ and $\nabla \Gamma_s^{\widehat{\beta}^n,n}$ respectively converge to $\nabla \Omega_s^{\beta^0,0}$ and $\nabla \Gamma_s^{\beta^0,0}$ in $P_s^0$ mean-square, we will allow be able to establish \ref{it:thetaempproc}. While we do not formally give conditions under which this mean-square convergence occurs, the primary assumption needed in addition to the consistency conditions used in \ref{it:CSconsistency} is that $\widehat{\theta}^n$ is consistent for $\theta^0$. We give sufficient conditions for this convergence in Lemma~\ref{lem:thetancons}.

The next assumption is made for simplicity, though it is very mild, especially under \ref{it:brvebdd}, which states that $w\mapsto \VEm^0(w)$ is bounded away from $-\infty$.
\begin{enumerate}[resume*=en:regconds]
	\item\label{it:VEnbdd} $\inf_w \VE^n(w)>-c>-\infty$ with probability approaching one, where the constant $c<\infty$ does not depend on sample size.
\end{enumerate}

We make two additional assumptions when \ref{it:omegacontnotflat} holds. We first assume that the empirical version of $\omega^{\beta^0,0}$ is also close to $\mu$, which seems reasonable given that $\omega^{\beta^0,0}=\mu$ under \ref{it:omegacontnotflat}.
\begin{enumerate}[resume*=en:regconds]
	\item\label{it:goodquantile} $\widehat{\omega}^{\widehat{\beta}^n,n} - \mu=o_P(\nmin^{-1/2})$.
\end{enumerate}
% This is similar to assuming that, for an i.i.d. sample $X[1],\ldots,X[n]\sim Q$, we expect that $\frac{1}{n}\sum_{i=1}^n \Ind_{\{X[i]<t_n\}}= 0.5 + o_P(n^{-1/2})$ if $t_n$ is the empirical mean of the observations. In this case we actually expect the $o_P(n^{-1/2})$ term to be an $O_P(n^{-1})$ term, and indeed we expect something similar here.
This condition can formally be supported in practice by looking at the size of $\widehat{\omega}^{\widehat{\beta}^n,n} - \mu$. The next regularity condition ensures that $\widehat{\beta}^n$ is a sufficiently good estimate of $\beta^0$ so that $\UR^{\widehat{\beta}^n,0}$ can be interpreted as being approximately equal to the the worst-case conditional unvaccinated risk.
\begin{enumerate}[resume*=en:regconds]
	\item\label{it:betangood} $\Rem_3^n\triangleq \gamma^{\widehat{\beta}^n,0} - \gamma^{\beta^0,0} - \theta^0[\omega^{\widehat{\beta}^n,0} - \omega^{\beta^0,0}]=o_P(\nmin^{-1/2})$.
\end{enumerate}
We give more interpretable sufficient conditions for \ref{it:betangood} in Appendix~\ref{app:moreinterpretablebetangood}. These conditions are a variant of the margin assumptions used in the classification literature.

\section{Proofs} \label{app:proofs}
\subsection{Proof of lower bound} \label{app:lb}
\begin{proof}[Proof of Lemma~\ref{lem:lb}]\hypertarget{proof:lemlb}{}
% Case 1: $\mu=\mu_0$.
Because $\mu=\E_\star^0[\E_\star^{0,F}[Y|A=0,W]]$ and \ref{it:brdatadepub} holds, $\omega^{\beta^0,0}=\mu$. By \ref{it:brVEbridge}, $\omega^{\beta^0,0}\Psi(P_\star^{0,F})\ge \E_\star^0[\E_\star^{0,F}[Y|A=0,W]\VEm^0(W)]$. Below we respectively use $\mathcal{E}_1$, $\mathcal{E}_2$, and $\mathcal{E}_3$ to denote the events $\{\VEm^0(W)<\theta^0\}$, $\{\VEm^0(W)=\theta^0\}$, and $\{\VEm^0(W)>\theta^0\}$, we see that
\begin{align*}
\omega^{\beta^0,0}&\Psi(P_\star^{0,F}) - \gamma^{\beta^0,0}\ge \E_\star^0[\E_\star^{0,F}[Y|A=0,W]\VEm^0(W)] - \gamma^{\beta^0,0} \\
=&\, \E_\star^0\left[(\Ind_{\mathcal{E}_1} + \eta^0 \Ind_{\mathcal{E}_2})\left\{\E_\star^{0,F}[Y|A=0,W]-\upsilon^0(W)\right\}\VEm^0(W)\right] \\
&+ \E_\star^0\left[(\Ind_{\mathcal{E}_3} + [1-\eta^0] \Ind_{\mathcal{E}_2})\left\{\E_\star^{0,F}[Y|A=0,W]-\lambda^0(W)\right\}\VEm^0(W)\right].
\end{align*}
By \ref{it:brdatadepub} and the fact that $\eta^0\in [0,1]$, replacing each instance of $\VEm^0(W)$ on the right-hand side by $\theta^0$ yields a lower bound of the form $\theta^0\left[\mu - \omega^{\beta^0,0}\right]$. As $\omega^{\beta^0,0}=\mu$, this lower bound is equal to zero. Dividing both sides by $\omega^{\beta^0,0}>0$ gives the result.
\end{proof}

\begin{lemma} \label{lem:lbmon}
If $\lambda^0\equiv 0$, \ref{it:brcommonsupport}, and \ref{it:brvebdd}, then $\phi_\mu^0$ is monotonically nondecreasing in $\mu$.
\end{lemma}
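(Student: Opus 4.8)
The plan is to exploit the explicit form of the worst-case allocation and to interpret $\phi_\mu^0$ as a weighted tail average of the conditional efficacy curve. With $\lambda^0\equiv 0$ we have $\UR^{\beta,0}=\upsilon^0\beta$, so $\omega^{\beta,0}=\E_\star^0[\upsilon^0(W)\beta(W)]$ and $\gamma^{\beta,0}=\E_\star^0[\upsilon^0(W)\beta(W)\VEm^0(W)]$; assumption \ref{it:brvebdd} guarantees these are finite and \ref{it:brcommonsupport} that they are well defined. I would first restrict to the feasible range $0<\mu\le \E_\star^0[\upsilon^0(W)]$, where the discussion preceding Lemma~\ref{lem:lb} gives $\omega^{\beta^0,0}=\mu$ and hence $\phi_\mu^0=\gamma^{\beta^0,0}/\mu$. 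In this regime $\phi_\mu^0$ is precisely the $\upsilon^0 P_\star^0$-weighted average of $\VEm^0$ over the strata of smallest conditional efficacy, where the total weight allocated equals $\mu$ and the cutoff is the threshold $\theta^0=\Theta(\mathcal{P}^0)$.

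Writing $\theta^0(\mu)$, $\eta^0(\mu)$, $\beta^0(\mu)$ to emphasize the dependence on $\mu$, the first key step is to show that the allocation is monotone in $\mu$. Since $\theta\mapsto \omega^{w\mapsto\Ind_{\{\VEm^0(w)<\theta\}},0}$ is nondecreasing, the threshold $\theta^0(\mu)$ is nondecreasing in $\mu$. Taking $\mu_1\le\mu_2$ in the feasible range and writing $\theta_i=\theta^0(\mu_i)$, a short case analysis on the signs of $\VEm^0(w)-\theta_1$ and $\VEm^0(w)-\theta_2$ shows that $\beta^0(\mu_1)(w)\le \beta^0(\mu_2)(w)$ for $P_\star^0$-almost every $w$; on the atom $\{\VEm^0=\theta_1=\theta_2\}$ this reduces to the monotonicity of $\eta^0(\mu_i)=(\mu_i-\E_\star^0[\upsilon^0\Ind_{\{\VEm^0<\theta_1\}}])/\E_\star^0[\upsilon^0\Ind_{\{\VEm^0=\theta_1\}}]$ in $\mu_i$. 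Crucially, wherever $\beta^0(\mu_2)(w)>\beta^0(\mu_1)(w)$ one has $\VEm^0(w)\ge\theta_1$, so the mass added in passing from $\mu_1$ to $\mu_2$ sits entirely on strata with conditional efficacy at least $\theta_1$.

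Combining these observations yields the conclusion. Because $\E_\star^0[\upsilon^0(\beta^0(\mu_2)-\beta^0(\mu_1))]=\mu_2-\mu_1$ and the integrand $\VEm^0$ is at least $\theta_1$ on the support of the nonnegative increment $\beta^0(\mu_2)-\beta^0(\mu_1)$, I obtain $\gamma^{\beta^0(\mu_2),0}-\gamma^{\beta^0(\mu_1),0}\ge\theta_1(\mu_2-\mu_1)$. On the other hand, $\phi_{\mu_1}^0$ is a weighted average of values $\VEm^0\le\theta_1$, so $\phi_{\mu_1}^0\le\theta_1$. Substituting $\gamma^{\beta^0(\mu_i),0}=\mu_i\phi_{\mu_i}^0$ gives $\mu_2\phi_{\mu_2}^0-\mu_1\phi_{\mu_1}^0\ge\theta_1(\mu_2-\mu_1)\ge\phi_{\mu_1}^0(\mu_2-\mu_1)$, which rearranges to $\phi_{\mu_2}^0\ge\phi_{\mu_1}^0$. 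Finally, for $\mu>\E_\star^0[\upsilon^0(W)]$ the threshold is $+\infty$, $\beta^0\equiv 1$, and $\phi_\mu^0=\E_\star^0[\upsilon^0\VEm^0]/\E_\star^0[\upsilon^0]$ is constant, matching its value at the feasible boundary, so monotonicity holds on all of $(0,1]$.

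I expect the main obstacle to be the careful bookkeeping at the boundary atom $\{\VEm^0=\theta^0\}$: verifying that the fractional weight $\eta^0(\mu)$ is itself nondecreasing, that the allocation stays monotone across jumps of $\theta^0(\mu)$, and that the degenerate cases (an atom of zero $\upsilon^0 P_\star^0$-mass, or $\theta^0=\pm\infty$) are immaterial. Everything else is a direct consequence of the tail-average interpretation of $\phi_\mu^0$.
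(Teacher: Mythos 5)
Your proposal is correct and follows essentially the same route as the paper's proof: both arguments rest on the two facts that the worst-case allocation $\beta^0$ is pointwise nondecreasing in $\mu$ with its increment supported on $\{\VEm^0(W)\ge\theta_1\}$, and that $\gamma^{\beta_1,0}\le\theta_1\omega^{\beta_1,0}$ (i.e.\ $\phi_{\mu_1}^0\le\theta_1$), the only difference being that you bound the increments of the numerator $\gamma$ directly while the paper bounds the cross-multiplied difference of the two ratios. The boundary bookkeeping you flag (the atom $\{\VEm^0=\theta^0\}$ and the monotonicity of the fractional weight) is handled in the paper with the same one-line observation you give.
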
 % The \hyperlink{proof:lemlbmon}{proof of Lemma~\ref*{lem:lbmon}} is given in Appendix~\ref{app:lb}.
\begin{proof}[Proof of Lemma \ref{lem:lbmon}]\hypertarget{proof:lemlbmon}{}
Fix $\theta_1\le \theta_2$ and $x_1,x_2\in[0,1]$. If $\theta_1=\theta_2$, then suppose also that $x_1\le x_2$. For $k=1,2$, define $\beta_k\triangleq w\mapsto \Ind_{\{\VEm^0(w)<\theta_k\}} + x_k \Ind_{\{\VEm^0(w)=\theta_k\}}$. Note that $\beta_2(w)-\beta_1(w)\ge 0$ and is strictly positive only if $\VEm^0(W)\ge \theta_1$. Observe that
\begin{align*}
\frac{\gamma^{\beta_1,0}}{\omega^{\beta_1,0}}-\frac{\gamma^{\beta_2,0}}{\omega^{\beta_2,0}}&= \frac{\left[\omega^{\beta_2,0}-\omega^{\beta_1,0}\right]\gamma^{\beta_1,0} - \omega^{\beta_1,0}\left[\gamma^{\beta_2,0}-\gamma^{\beta_1,0}\right]}{\omega^{\beta_1,0}\omega^{\beta_2,0}} \\
&= \frac{\E_\star^0\left[\{\beta_2(W)-\beta_1(W)\}\upsilon^0(W)\{\gamma^{\beta_1,0}-\VEm^0(W)\omega^{\beta_1,0}\}\right]}{\omega^{\beta_1,0}\omega^{\beta_2,0}} \\
&\le \frac{\E_\star^0\left[\{\beta_2(W)-\beta_1(W)\}\upsilon^0(W)\right]\left[\gamma^{\beta_1,0}-\theta_1\omega^{\beta_1,0}\right]}{\omega^{\beta_1,0}\omega^{\beta_2,0}}.
\end{align*}
Noting that
\begin{align*}
\gamma^{\beta_1,0}&= \E_\star^0[\upsilon^0(W)\{\Ind_{\{\VEm^0(w)<\theta_1\}} + x_1 \Ind_{\{\VEm^0(w)=\theta_1\}}\}\VEm^0(W)] \\
&\le \theta_1 \E_\star^0[\upsilon^0(W)\{\Ind_{\{\VEm^0(w)<\theta_1\}} + x_1 \Ind_{\{\VEm^0(w)=\theta_1\}}\}] = \theta_1\omega^{\beta_1,0},
\end{align*}
we see that $\frac{\gamma^{\beta_1,0}}{\omega^{\beta_1,0}}\le \frac{\gamma^{\beta_2,0}}{\omega^{\beta_2,0}}$.

We now write $\theta_\mu^0$ and $x_\mu^0$ to make the dependence of $\theta^0$, $\eta^0$ on $\mu$ explicit. As $\upsilon^0(w)>0$ for all $w$, clearly $\mu_1<\mu_2$ implies that $\theta_{\mu_1}^0\le \theta_{\mu_2}^0$ and, if $\theta_{\mu_1}^0= \theta_{\mu_2}^0$, then also that $x_{\mu_1}^0\le x_{\mu_2}^0$. This completes the proof.
\end{proof}

\subsection{Pathwise derivative of $\Phi$ (Theorem~\ref{thm:pd})} \label{app:pd}
% The proof of Theorem~\ref{thm:pd} is broken into two sections. First, we consider the easier case that \ref{it:mubig} holds. The proof in this case amounts to a standard application of the delta method once we establish that $P_\star^\epsilon\{W<\theta^\epsilon\}=1$, and thus $\omega^{\theta^\epsilon,\epsilon}=\E_\star^\epsilon[\upsilon^\epsilon(W)]$, for all $\epsilon$ small enough. We then consider the case where \ref{it:omegacontnotflat} holds. We show that, under \ref{it:omegacontnotflat}, $\omega^{\theta^\epsilon,\epsilon}=\mu$ for all $\epsilon$ small enough, so gradients of $\Psi$ are given by gradients of $\Gamma$ divided by the constant $\mu$. Theorem~\ref{thm:pdLambda} provides the pathwise derivative of $\Gamma$.

Throughout this section, we refer to parameters evaluated at the collection of distributions $\mathcal{P}^\epsilon$ rather than parameters evaluated at $\mathcal{P}^0$ by replacing the superscript zero by superscript $\epsilon$, e.g. we write $\upsilon^\epsilon\triangleq \Upsilon(\mathcal{P}^\epsilon)$ rather than $\upsilon^0\triangleq \Upsilon(\mathcal{P}^0)$.

We now prove Theorem~\ref{thm:pd}. The proof references results that we prove later in this section.
\begin{proof}[Proof of Theorem~\ref{thm:pd}]\hypertarget{proof:thmpd}{}
We consider the cases that \ref{it:omegacontnotflat} holds and that \ref{it:mubig} holds separately. The proof in the case where \ref{it:musmall} holds is nearly identical that under \ref{it:mubig}, and so the proof is omitted.\newline
\textit{Case 1:} \ref{it:omegacontnotflat} holds. By \ref{it:omegacontnotflat} and the fact that $\upsilon^0$ is bounded away from zero, $\omega^{\beta^0,0} = \mu$ and $\theta^0$ is finite. By Lemma~\ref{lem:thetacons}, $\theta^\epsilon$ is finite for all $\epsilon$ small enough so that  $\omega^{\beta^\epsilon,\epsilon} = \mu$ for these $\epsilon$. Hence, $\phi^\epsilon - \phi^0 = \frac{\gamma^{\beta^\epsilon,\epsilon}-\gamma^{\beta^0,0}}{\mu}$ for all $\epsilon$ sufficiently small. Dividing both sides by $\epsilon$, taking the limit as $\epsilon\rightarrow 0$, and applying Theorem~\ref{thm:pdLambda} (Section~\ref{sec:pdLambda}) shows that the gradients of $\Phi$ at $\mathcal{P}^0$ are given by the desired expressions.\newline
\textit{Case 2:} \ref{it:mubig} holds. The key to proving Theorem~\ref{thm:pd} under \ref{it:mubig} is proving that $P_\star^\epsilon\{\VEm^\epsilon(W)<\theta^\epsilon\}=P_\star^0\{\VEm^0(W)<\theta^0\}=1$ for all $\epsilon$ small enough. We formally prove this in Lemma~\ref{lem:thetaepsatbdry} (Section~\ref{sec:thetaepsatbdry}). It follows that
\begin{align*}
\Phi(\mathcal{P}^\epsilon) - \Phi(\mathcal{P}^0)&= \frac{\E_\star^\epsilon\left[\upsilon^\epsilon(W)\VE^\epsilon(W)\right]}{\E_\star^\epsilon\left[\upsilon^\epsilon(W)\right]} - \frac{\E_\star^0\left[\upsilon^0(W)\VEm^0(W)\right]}{\E_\star^0\left[\upsilon^0(W)\right]}.
\end{align*}
Dividing both sides by $\epsilon$ and taking the limit as $\epsilon\rightarrow 0$ yields the same expression that would be used to evaluate the gradients of the parameter $\mathcal{P}'\rightarrow \frac{\int \Upsilon(\mathcal{P}')(w)\VEm'(w) dP_\star'(w)}{\int \Upsilon(\mathcal{P}')(w) dP_\star'(w)}$. It is straightforward to derive expressions for the gradients of this parameter via the delta method.
\end{proof}

\subsubsection{Lemma used to Prove Theorem~\ref{thm:pd} under \ref{it:mubig}.} \label{sec:thetaepsatbdry}
\begin{lemma}\label{lem:thetaepsatbdry}
If \ref{it:brcommonsupport}, \ref{it:brvebdd}, and \ref{it:mubig}, then $P_\star^\epsilon\{\VEm^\epsilon(W)<\theta^\epsilon\}=1$ for all $\epsilon$ small enough.
\end{lemma}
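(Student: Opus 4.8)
The plan is to reduce the lemma to a single persistence claim: that the strict inequality $\E_\star^0[\upsilon^0(W)]<\mu$ furnished by \ref{it:mubig} survives small perturbations, i.e. $\E_\star^\epsilon[\upsilon^\epsilon(W)]<\mu$ for all $\epsilon$ near $0$. Granting this, everything else is formal. For any $\beta:\mathcal{W}\to[0,1]$ one has $\UR^{\beta,\epsilon}(w)=\lambda^\epsilon(w)+[\upsilon^\epsilon(w)-\lambda^\epsilon(w)]\beta(w)\le\upsilon^\epsilon(w)$, since $\upsilon^\epsilon\ge\lambda^\epsilon$ and $\beta\le1$, with equality at the constant function $\beta\equiv1$; integrating against $P_\star^\epsilon$ gives $\omega^{\beta,\epsilon}\le\E_\star^\epsilon[\upsilon^\epsilon(W)]$. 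Taking $\beta=w\mapsto\Ind_{\{\VEm^\epsilon(w)<\theta\}}$ shows that once $\E_\star^\epsilon[\upsilon^\epsilon(W)]<\mu$, the constraint set defining $\Theta(\mathcal{P}^\epsilon)$ contains every $\theta\in\mathbb{R}$, so $\theta^\epsilon=+\infty$. Because $\VEm^\epsilon(w)$ is finite $P_\star^\epsilon$-almost surely (justified below), we conclude $P_\star^\epsilon\{\VEm^\epsilon(W)<\theta^\epsilon\}=P_\star^\epsilon\{\VEm^\epsilon(W)<+\infty\}=1$, which is the assertion.

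To prove the persistence claim I would establish continuity of $\epsilon\mapsto\E_\star^\epsilon[\upsilon^\epsilon(W)]$ at $\epsilon=0$; since \ref{it:mubig} is a strict inequality, continuity then yields an $\bar\epsilon>0$ with $\E_\star^\epsilon[\upsilon^\epsilon(W)]<\mu$ whenever $|\epsilon|<\bar\epsilon$. Writing
\begin{equation*}
\E_\star^\epsilon[\upsilon^\epsilon(W)]=\int\Big[\mathbbmss{u}_0(w)+\sum_{s=1}^S\mathbbmss{u}_s(w)\,\E_s^\epsilon[Y|A=0,w]\Big]\big(1+\epsilon h_\star(w)\big)\,dP_\star^0(w),
\end{equation*}
the only $\epsilon$-dependence is through the trial-$s$ regressions and the Radon--Nikodym factor $1+\epsilon h_\star$. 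The key computation is that, expressing the perturbed conditional expectation as a ratio of quantities linear in $\epsilon$ whose denominator equals $P_s^0(A=0|w)+\epsilon\,\E_s^0[\Ind_{\{A=0\}}h_s(O_s)|w]$, the model restriction $P_s^0(A=0|w)\ge\delta$ together with $|h_s|\le1$ and $Y\in[0,1]$ gives the uniform bound $\sup_w|\E_s^\epsilon[Y|A=0,w]-\E_s^0[Y|A=0,w]|\le 4|\epsilon|/\delta^2$ for $|\epsilon|\le\delta/2$. Hence $\upsilon^\epsilon\to\upsilon^0$ uniformly, $|\epsilon h_\star|\le|\epsilon|$, and the integral converges to $\E_\star^0[\upsilon^0(W)]$ (the support condition \ref{it:brcommonsupport} ensures the bracketed integrand is defined $P_\star^0$-almost surely). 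The same uniform bound keeps the denominator of $\VEm^\epsilon$, namely $\mathbbmss{v}_0(w)d(0,w)+\sum_{s=1}^S\mathbbmss{v}_s(w)\E_s^\epsilon[Y|A=0,w]$, within $O(|\epsilon|)$ of its value at $\epsilon=0$, which is bounded away from zero by \ref{it:brvebdd}; thus $\VEm^\epsilon$ is uniformly bounded, in particular finite, for all small $\epsilon$, discharging the claim used in the first paragraph.

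The main obstacle is the simultaneous control of two distinct perturbations: the outer expectation is taken under the separately fluctuated marginal $P_\star^\epsilon$, while the integrand $\upsilon^\epsilon$ varies through the trial-$s$ conditional expectations. The device that resolves it is the uniform-in-$w$ rate $4|\epsilon|/\delta^2$ above, which is available precisely because the model bounds the propensity below by $\delta$; this single estimate supplies both the uniform convergence $\upsilon^\epsilon\to\upsilon^0$ needed for continuity and the uniform lower bound on the denominator of $\VEm^\epsilon$ needed for finiteness, while \ref{it:brcommonsupport} is what legitimizes integrating the trial-$s$ regressions against the population-$\star$ law in the first place.
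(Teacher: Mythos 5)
Your proposal is correct and follows essentially the same route as the paper: both reduce the lemma to showing that $\E_\star^\epsilon[\upsilon^\epsilon(W)]$ stays strictly below $\mu$ for small $\epsilon$ (so that every $\theta$ lies in the constraint set defining $\Theta(\mathcal{P}^\epsilon)$ and hence $\theta^\epsilon=+\infty$), the only difference being that the paper obtains the $O(\epsilon)$ perturbation of $\E_\star^\epsilon[\upsilon^\epsilon(W)]$ by citing standard pathwise-differentiability calculations while you verify it by hand via the uniform bound $4|\epsilon|/\delta^2$ on the perturbed conditional means. Your final finiteness step is also fine (indeed $\VEm^\epsilon\le 1$ always, so $P_\star^\epsilon\{\VEm^\epsilon(W)<+\infty\}=1$ is immediate).
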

\begin{proof}[Proof of Lemma~\ref{lem:thetaepsatbdry}]
Standard pathwise differentiability calculations show that
\begin{align*}
\frac{\E_\star^\epsilon[\upsilon^\epsilon(W)]-\omega^{\beta^0,0}}{\epsilon}&= \frac{\E_\star^\epsilon[\upsilon^\epsilon(W)]-\E_\star^0[\upsilon^0(W)]}{\epsilon} \\
&= \sum_{s\in\mathcal{S}} P_s^\epsilon \nabla \Omega_s^{\beta^0,0} + o(1),
\end{align*}
and the fact that each $\nabla \Omega_s^{\beta^0,0}(O_s)$ has mean zero for $O_s\sim P_s^0$ implies that the right-hand side is $O(1)$. By \ref{it:mubig}, $\omega^{\beta^0,0}<\mu$. Hence, $\E_\star^\epsilon[\upsilon^\epsilon(W)] < \mu + O(\epsilon)$. It follows that, for all $\epsilon$ sufficiently small, $\E_\star^\epsilon[\UR^{w\mapsto\Ind_{\{\VEm^\epsilon(w)<\theta\}},\epsilon}(W)]<\mu$ for all $\theta$. Hence, $\theta^\epsilon=+\infty$ for all $\epsilon$ sufficiently small, completing the proof.
\end{proof}

\subsubsection{Theorem used to prove Theorem~\ref{thm:pd} under \ref{it:omegacontnotflat}.} \label{sec:pdLambda}
The following theorem establishes the pathwise differentiability of the parameter $\mathcal{P}'\mapsto \Gamma^{\beta'}(\mathcal{P}')$, where the $\beta'$ in the subscript is the analogue of $\beta^0$ but defined at parameter input $\mathcal{P}'$ rather than at $\mathcal{P}^0$.
\begin{theorem} \label{thm:pdLambda}
If \ref{it:brcommonsupport}, \ref{it:brvebdd}, and \ref{it:omegacontnotflat}, then $\mathcal{P}'\mapsto \Gamma^{\beta'}(\mathcal{P}')$ is pathwise differentiable at $\mathcal{P}^0$ with $P_s^0$ gradients $o_s\mapsto \nabla \Gamma_s^{\beta^0,0}(o_s) - \theta^0 \nabla \Omega_s^{\beta^0,0}(o_s)$, $s\in\mathcal{S}$.
\end{theorem}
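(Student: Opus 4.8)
The plan is to treat $\mathcal{P}'\mapsto\Gamma^{\beta'}(\mathcal{P}')$ as a parameter whose dependence on $\mathcal{P}'$ enters both directly (through $\upsilon'$, $\lambda'$, $\VEm'$, and the $W$-marginal $P_\star'$, with the threshold function held fixed) and indirectly (through the data-dependent $\beta'$). Along a path $\mathcal{P}^\epsilon$ I would write
\begin{align*}
\Gamma^{\beta^\epsilon}(\mathcal{P}^\epsilon)-\Gamma^{\beta^0}(\mathcal{P}^0)
&= \left[\Gamma^{\beta^\epsilon}(\mathcal{P}^\epsilon)-\Gamma^{\beta^0}(\mathcal{P}^\epsilon)\right]
+ \left[\Gamma^{\beta^0}(\mathcal{P}^\epsilon)-\Gamma^{\beta^0}(\mathcal{P}^0)\right],
\end{align*}
so that the second bracket is the increment of the fixed-$\beta^0$ parameter while the first isolates the contribution of letting $\beta$ vary. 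The envelope-type heuristic is that, because $\beta^\epsilon$ thresholds $\VEm^\epsilon$ at a cutoff $\theta^\epsilon\approx\theta^0$, the first bracket feeds back only through the marginal-risk constraint and collapses to $-\theta^0$ times the increment of $\Omega^{\beta^0}$.

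For the second bracket I would compute the ordinary pathwise derivative of $\mathcal{P}'\mapsto\Gamma^{\beta^0}(\mathcal{P}')=\E_\star'[\UR^{\beta^0}(\mathcal{P}')(W)\,\VEm'(W)]$ with $\beta^0$ held fixed. This is a composition of the $P_\star'$-marginal with the regression functionals $\upsilon'$, $\lambda'$, and $\VEm'$, so its gradient follows from the product and chain rules (the delta method) applied to the already-derived gradients $D_{\UR,s}^{\beta^0,0}$ and $D_{\VE,s}^0$ of the constituent pieces; the result is exactly $\nabla\Gamma_s^{\beta^0,0}$ for each $s\in\mathcal{S}$. This step is routine.

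The substance is in the first bracket. Using $\UR^{\beta^\epsilon}(\mathcal{P}^\epsilon)-\UR^{\beta^0}(\mathcal{P}^\epsilon)=[\upsilon^\epsilon-\lambda^\epsilon](\beta^\epsilon-\beta^0)$, I would write
\begin{align*}
\Gamma^{\beta^\epsilon}(\mathcal{P}^\epsilon)-\Gamma^{\beta^0}(\mathcal{P}^\epsilon)
&= \theta^0\,\E_\star^\epsilon\!\left[(\upsilon^\epsilon-\lambda^\epsilon)(\beta^\epsilon-\beta^0)\right] + R_\epsilon,
\end{align*}
where $R_\epsilon\triangleq\E_\star^\epsilon[(\upsilon^\epsilon-\lambda^\epsilon)(\beta^\epsilon-\beta^0)(\VEm^\epsilon-\theta^0)]$. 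The leading term equals $\theta^0[\omega^{\beta^\epsilon,\epsilon}-\omega^{\beta^0,\epsilon}]$, and by Lemma~\ref{lem:thetacons} the constraint $\omega^{\beta^\epsilon,\epsilon}=\mu=\omega^{\beta^0,0}$ holds for all small $\epsilon$ under \ref{it:omegacontnotflat}, so it rewrites as $-\theta^0[\omega^{\beta^0,\epsilon}-\omega^{\beta^0,0}]$, i.e.\ $-\theta^0$ times the fixed-$\beta^0$ increment of $\Omega^{\beta^0}$, whose derivative is $\sum_{s\in\mathcal{S}}\int\nabla\Omega_s^{\beta^0,0}h_s\,dP_s^0$. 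Adding the two brackets then produces the stated gradients $\nabla\Gamma_s^{\beta^0,0}-\theta^0\nabla\Omega_s^{\beta^0,0}$.

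The main obstacle is showing $R_\epsilon=o(\epsilon)$. The indicator $\beta^\epsilon-\beta^0$ is supported on the symmetric difference of $\{\VEm^\epsilon(W)<\theta^\epsilon\}$ and $\{\VEm^0(W)<\theta^0\}$; on that set a short argument bounds $|\VEm^\epsilon-\theta^0|$ by $|\theta^\epsilon-\theta^0|+\norm{\VEm^\epsilon-\VEm^0}_\infty$, which is $O(\epsilon)$ once one has the consistency and first-order smoothness of $\theta^\epsilon$ and $\VEm^\epsilon$ supplied by Lemma~\ref{lem:thetacons} together with standard pathwise-differentiability calculations. Since $\upsilon^\epsilon-\lambda^\epsilon\le1$, it then suffices that $P_\star^\epsilon(\beta^\epsilon\neq\beta^0)\to0$, which follows because $\theta^\epsilon\to\theta^0$ together with the density bounds in \ref{it:omegacontnotflat} forbid an atom of $\VEm^0$ at $\theta^0$ and control the mass of $\VEm^0$ in shrinking neighborhoods of $\theta^0$; this yields $|R_\epsilon|\le O(\epsilon)\cdot o(1)=o(\epsilon)$. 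The delicate points are interchanging the $\E_\star^\epsilon$ and $\E_\star^0$ expectations (handled by \ref{it:brcommonsupport} and \ref{it:brvebdd}, which keep the integrands dominated) and confirming that the density-bound half of \ref{it:omegacontnotflat} indeed forces the symmetric-difference mass to vanish.
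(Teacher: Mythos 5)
Your proposal is correct and follows essentially the same route as the paper: the identical two-term decomposition into a fixed-$\beta^0$ delta-method piece and a varying-$\beta$ piece, the same telescoping of the latter against $\theta^0$ so that the quantile constraint $\omega^{\beta^\epsilon,\epsilon}=\mu=\omega^{\beta^0,0}$ (via Lemma~\ref{lem:thetacons}) converts it to $-\theta^0[\omega^{\beta^0,\epsilon}-\omega^{\beta^0,0}]$, and the same $O(\epsilon)\cdot o(1)$ bound on the remainder supported on the symmetric difference of the threshold events (this is exactly the paper's $T_1^\epsilon$ in Lemma~\ref{lem:quantilederiv}). No substantive differences to report.
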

\begin{proof}[Proof of Theorem~\ref{thm:pdLambda}]
Observe that
\begin{align*}
\frac{\gamma^{\beta^\epsilon,\epsilon}-\gamma^{\beta^0,0}}{\epsilon}&= \frac{\gamma^{\beta^0,\epsilon}-\gamma^{\beta^0,0}}{\epsilon}+\frac{\gamma^{\beta^\epsilon,\epsilon}-\gamma^{\beta^0,\epsilon}}{\epsilon}.
\end{align*}
We wish to obtain an expression for the limit on the left as $\epsilon\rightarrow 0$. If the limits exist for both terms on the right, then the limit on the left is given by their sum. A standard delta method argument shows that the limit as $\epsilon\rightarrow 0$ on the right is given by $\sum_{s\in\mathcal{S}} P_s^0 \nabla \Gamma_s^{\beta^0,0} h_s$. Lemma~\ref{lem:quantilederiv} shows that the limit exists for the latter term and provides its closed form expression.
\end{proof}

\begin{lemma} \label{lem:quantilederiv}
Under the conditions of Theorem~\ref{thm:pdLambda},
\begin{align*}
\lim_{\epsilon\rightarrow 0}\frac{\gamma^{\beta^\epsilon,\epsilon}-\gamma^{\beta^0,\epsilon}}{\epsilon}&= -\theta^0\sum_{s\in\mathcal{S}} \int \nabla \Omega_s^{\beta^0,0}(o_s) h_s(o_s) dP_s^0(o_s).
\end{align*}
\end{lemma}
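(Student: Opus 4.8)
The plan is to isolate the effect of varying the index function $\beta$ while holding the evaluation distribution fixed at $\mathcal{P}^\epsilon$, and to recognize the result as a quantile-type derivative. Since $\gamma^{\beta,\epsilon}=\E_\star^\epsilon[\UR^{\beta,\epsilon}(W)\VEm^\epsilon(W)]$ and $\UR^{\beta,\epsilon}(w)=\lambda^\epsilon(w)+[\upsilon^\epsilon(w)-\lambda^\epsilon(w)]\beta(w)$ is affine in $\beta$, the numerator can be written exactly as
\begin{align*}
\gamma^{\beta^\epsilon,\epsilon}-\gamma^{\beta^0,\epsilon}=\E_\star^\epsilon\Big[[\upsilon^\epsilon(W)-\lambda^\epsilon(W)]\{\beta^\epsilon(W)-\beta^0(W)\}\VEm^\epsilon(W)\Big],
\end{align*}
where under \ref{it:omegacontnotflat} the sets $\{\VEm^\epsilon=\theta^\epsilon\}$ and $\{\VEm^0=\theta^0\}$ have $P_\star$-measure zero, so that $\beta^\epsilon(W)-\beta^0(W)=\Ind_{\{\VEm^\epsilon(W)<\theta^\epsilon\}}-\Ind_{\{\VEm^0(W)<\theta^0\}}$ with $P_\star^\epsilon$-probability one.

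The heart of the argument, and the main obstacle, is to replace $\VEm^\epsilon(W)$ by the threshold $\theta^0$ in the display above at cost $o(\epsilon)$. The point is that $\beta^\epsilon-\beta^0$ is supported on the symmetric difference $\{\VEm^\epsilon<\theta^\epsilon\}\triangle\{\VEm^0<\theta^0\}$, and on this set one checks that $\VEm^0(w)$ is trapped within $O(|\theta^\epsilon-\theta^0|+\norm{\VEm^\epsilon-\VEm^0}_\infty)$ of $\theta^0$, and hence so is $\VEm^\epsilon(w)$. I would first record that $\VEm^\epsilon-\VEm^0=O(\epsilon)$ uniformly (the conditional risks depend smoothly on $\epsilon$ through $dP_s^\epsilon/dP_s^0=1+\epsilon h_s$, and $\VEm$ is a smooth, bounded functional of them with denominator bounded away from zero under \ref{it:brvebdd}) and that $\theta^\epsilon-\theta^0=O(\epsilon)$, the latter following from Lemma~\ref{lem:thetacons} together with the two-sided density bound in \ref{it:omegacontnotflat}, which makes the derivative of $\theta\mapsto\omega^{w\mapsto\Ind_{\{\VEm^\epsilon(w)<\theta\}},\epsilon}$ bounded away from zero near $\theta^0$ and yields a Lipschitz implicit-function rate. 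The same density bound shows the symmetric difference is contained in $\{|\VEm^0(W)-\theta^0|=O(\epsilon)\}$ and so has $P_\star^\epsilon$-measure $O(\epsilon)$. As $\upsilon^\epsilon-\lambda^\epsilon$ is bounded, the discrepancy
\begin{align*}
\E_\star^\epsilon\Big[[\upsilon^\epsilon(W)-\lambda^\epsilon(W)]\{\beta^\epsilon(W)-\beta^0(W)\}\{\VEm^\epsilon(W)-\theta^0\}\Big]
\end{align*}
is then of order $O(\epsilon)\cdot O(\epsilon)=o(\epsilon)$.

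After this replacement the numerator equals $\theta^0[\omega^{\beta^\epsilon,\epsilon}-\omega^{\beta^0,\epsilon}]+o(\epsilon)$, and I would exploit the two marginal-risk constraints: $\omega^{\beta^0,0}=\mu$ holds under \ref{it:omegacontnotflat}, while $\omega^{\beta^\epsilon,\epsilon}=\mu$ for $\epsilon$ small because $\theta^\epsilon$ is finite (as established in the proof of Theorem~\ref{thm:pd}, Case 1, via Lemma~\ref{lem:thetacons}). Hence $\omega^{\beta^\epsilon,\epsilon}-\omega^{\beta^0,\epsilon}=\mu-\omega^{\beta^0,\epsilon}=\omega^{\beta^0,0}-\omega^{\beta^0,\epsilon}$, giving
\begin{align*}
\frac{\gamma^{\beta^\epsilon,\epsilon}-\gamma^{\beta^0,\epsilon}}{\epsilon}=-\theta^0\,\frac{\omega^{\beta^0,\epsilon}-\omega^{\beta^0,0}}{\epsilon}+o(1).
\end{align*}
Finally, because $\beta^0$ is a fixed bounded function, $\mathcal{P}'\mapsto\omega^{\beta^0,\prime}=\Omega^{\beta^0}(\mathcal{P}')$ is pathwise differentiable with the already-defined gradients $\nabla\Omega_s^{\beta^0,0}$, so a routine delta-method computation gives $\lim_{\epsilon\to0}\epsilon^{-1}[\omega^{\beta^0,\epsilon}-\omega^{\beta^0,0}]=\sum_{s\in\mathcal{S}}\int\nabla\Omega_s^{\beta^0,0}(o_s)h_s(o_s)\,dP_s^0(o_s)$. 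Combining the last two displays yields the claimed limit. The one delicate point to verify is that the $o(\epsilon)$ error in the quantile step is genuinely uniform as $\epsilon\to0$, which is precisely where the two-sided density bound of \ref{it:omegacontnotflat} is essential; the remaining manipulations are bookkeeping.
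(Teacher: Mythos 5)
Your proposal is correct and follows essentially the same route as the paper's proof: the same telescoping into a term weighted by $\VEm^\epsilon(W)-\theta^0$ (shown to be $o(\epsilon)$ via $\theta^\epsilon-\theta^0=O(\epsilon)$ from Lemma~\ref{lem:thetacons} and the concentration of the symmetric difference near the boundary under \ref{it:omegacontnotflat}), the cancellation $\omega^{\beta^\epsilon,\epsilon}=\mu=\omega^{\beta^0,0}$, and a delta-method evaluation of $\epsilon^{-1}[\omega^{\beta^0,\epsilon}-\omega^{\beta^0,0}]$. The only cosmetic difference is that you bound the measure of the symmetric difference by $O(\epsilon)$ using the $\limsup$ part of \ref{it:omegacontnotflat}, whereas the paper only needs it to be $o(1)$.
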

\begin{proof}[Proof of Lemma~\ref{lem:quantilederiv}]
Telescoping yields that
\begin{align*}
\gamma^{\beta^\epsilon,\epsilon}-\gamma^{\beta^0,\epsilon}=&\, \E_{\star}^\epsilon\left[\left\{\UR^{\beta^\epsilon,\epsilon}(W)-\UR^{\beta^0,\epsilon}(W)\right\}\VE^\epsilon(W)\right] \\
=&\, \E_{\star}^\epsilon\left[\left\{\UR^{\beta^\epsilon,\epsilon}(W)-\UR^{\beta^0,\epsilon}(W)\right\}\left\{\VE^\epsilon(W)-\theta^0\right\}\right] \\
&+ \theta^0\int \left\{\UR^{\beta^\epsilon,\epsilon}(w) dP_\star^\epsilon(w)-\UR^{\beta^0,0}(w) dP_\star^0(w)\right\} \\
&- \theta^0\int \left\{\UR^{\beta^0,\epsilon}(w) dP_\star^\epsilon(w)-\UR^{\beta^0,0}(w) dP_\star^0(w)\right\}.
\end{align*}
We denote the first and second terms in the final equality by $T_1^\epsilon$ and $T_2^\epsilon$. If we can show that $T_1^\epsilon$ and $T_2^\epsilon$ are $o(\epsilon)$, then the result is immediate by dividing both sides by $\epsilon$ and applying the chain rule to the third term.

The remainder of this proof uses a positive constant $c$ that may vary line by line. We first establish that $T_1^\epsilon=o(\epsilon)$. %We have that
% \begin{align*}
% |T_1^\epsilon|&\le \E_{\star}^\epsilon\left|\left\{\UR^{\beta^\epsilon,\epsilon}(W)-\UR^{\beta^0,\epsilon}(W)\right\}\left\{\VE^\epsilon(W)-\theta^0\right\}\right|. %\\
% \end{align*}
Straightforward calculations show that
\begin{align*}
|T_1^\epsilon|\le (1+c|\epsilon|)\E_{\star}^0\Bigg[&\left\{\left|\UR^{\beta^\epsilon,0}(W)-\UR^{\beta^0,0}(W)\right| + c|\epsilon|\right\} \\
&\times\left\{\left|\VEm^0(W)-\theta^0\right| + c|\epsilon|\right\}\Bigg].
\end{align*}
For simplicity we suppose that $\eta^\epsilon=0$ for the remainder of the proof, but the proof with $\eta^\epsilon\not=0$ is nearly identical. For each $w$,
\begin{align*}
&\left|\UR^{\beta^\epsilon,0}(w)-\UR^{\beta^0,0}(w)\right| = \left|\Ind_{\{\VEm^\epsilon(w)<\theta^\epsilon\}}-\Ind_{\{\VEm^0(w)<\theta^0\}}\right|[\upsilon^0(w)-\lambda^0(w)].
\end{align*}
Using that $\Ind_{\{\VEm^\epsilon(w)<\theta^\epsilon\}}\not=\Ind_{\{\VEm^0(w)<\theta^0\}}$ implies that $0\le |\VEm^0(w)-\theta^0|<|\VEm^\epsilon(w)-\theta^\epsilon-\VEm^0(w)+\theta^0|$ then shows that
\begin{align*}
|T_1^\epsilon|\le (1+c|\epsilon|)\E_{\star}^0\Bigg|&\Ind_{\{0\le |\VEm^0(w)-\theta^0|<|\VEm^\epsilon(w)-\theta^\epsilon-\VEm^0(w)+\theta^0|\}} \\
&\times\left\{[\upsilon^0(W)-\lambda^0(W)]|\VEm^0(W)-\theta^0| + c|\epsilon|\right\}\Bigg|.
\end{align*}
The indicator above can be replaced by $\Ind_{\{0< |\VEm^0(w)-\theta^0|<|\VEm^\epsilon(w)-\theta^\epsilon-\VEm^0(w)+\theta^0|\}}$ because \ref{it:omegacontnotflat} implies that $P_\star^0(\VEm^0(W)=\theta^0)=0$. Combining this with the fact that $\upsilon^0(W)-\lambda^0(W)\le 1$ and $|\VEm^\epsilon(w)-\VEm^0(w)|\le c|\epsilon|$ shows that
\begin{align*}
|T_1^\epsilon|&\le (1+c|\epsilon|)\E_{\star}^0\left[\Ind_{\{0< |\VEm^0(w)-\theta^0|<|\VEm^\epsilon(w)-\theta^\epsilon-\VEm^0(w)+\theta^0|\}}\left\{|\theta^\epsilon-\theta^0| + c|\epsilon|\right\}\right].
\end{align*}
By Lemma~\ref{lem:thetacons}, $\theta^\epsilon-\theta^0=O(\epsilon)$ so that, for all $\epsilon$ large enough,
\begin{align*}
|T_1^\epsilon|\le c|\epsilon|P_\star^0\left\{0< |\VEm^0(w)-\theta^0|<c|\epsilon|\right\}.
\end{align*}
The probability statement is $o(1)$. Thus, $T_1^\epsilon=o(\epsilon)$.

Lemma~\ref{lem:thetacons} establishes that $\theta^\epsilon$ is finite for all $\epsilon$ small enough, and 

The assumption that \ref{it:omegacontnotflat} establishes that $\theta^0$ is finite, and thus Lemma~\ref{lem:thetacons} establishes that $\theta^\epsilon$ is finite for all $\epsilon$ small enough. It follows that $\omega^{\beta^\epsilon,\epsilon}=\mu$ for all $\epsilon$ small enough, and thus $T_2^\epsilon=\omega^{\beta^\epsilon,\epsilon}-\omega^{\beta^0,0}=0$ for these $\epsilon$.
\end{proof}

\begin{lemma} \label{lem:F0notflat}
For each $t$, let $\tilde{\beta}_t\triangleq w\mapsto \Ind_{\{\VEm^0(w)<t\}}$. If \ref{it:omegacontnotflat} holds, then, for all $\theta$ in a neighborhood of $\theta^0$,
\begin{align*}
0&<\liminf_{t\rightarrow 0}\frac{\omega^{\tilde{\beta}_{\theta+t},0}-\omega^{\tilde{\beta}_{\theta},0}}{t}\le \limsup_{t\rightarrow 0}\frac{\omega^{\tilde{\beta}_{\theta+t},0}-\omega^{\tilde{\beta}_{\theta},0}}{t} <\infty,
\end{align*}
\end{lemma}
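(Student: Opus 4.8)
The plan is to recognize $\theta\mapsto\omega^{\tilde\beta_\theta,0}$ as a weighted version of the distribution function $F(\theta)\triangleq P_\star^0\{\VEm^0(W)<\theta\}$, with weights $\upsilon^0-\lambda^0$ confined to $[\delta,1]$, and then to transfer the derivative bounds that \ref{it:omegacontnotflat} places on $F$ to $\omega^{\tilde\beta_\cdot,0}$ through a sandwich argument.

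First I would unfold the definitions. From $\UR^{\tilde\beta_t,0}(w)=\lambda^0(w)+[\upsilon^0(w)-\lambda^0(w)]\Ind_{\{\VEm^0(w)<t\}}$ the $\lambda^0$ contribution cancels in the difference, giving
\begin{align*}
\omega^{\tilde\beta_{\theta+t},0}-\omega^{\tilde\beta_\theta,0}
=\E_\star^0\!\left[[\upsilon^0(W)-\lambda^0(W)]\left(\Ind_{\{\VEm^0(W)<\theta+t\}}-\Ind_{\{\VEm^0(W)<\theta\}}\right)\right].
\end{align*}
For $t>0$ the indicator difference equals $\Ind_{\{\theta\le\VEm^0(W)<\theta+t\}}$, while for $t<0$ it equals $-\Ind_{\{\theta+t\le\VEm^0(W)<\theta\}}$; in both cases, once divided by $t$, the resulting difference quotient is nonnegative and its numerator is an expectation over precisely the event whose probability equals $|F(\theta+t)-F(\theta)|$.

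Next I would apply the bounds $\delta\le\upsilon^0(w)-\lambda^0(w)\le 1$ pointwise inside the expectation, the lower bound being the standing assumption $\upsilon^0-\lambda^0\ge\delta>0$ and the upper bound holding because $\lambda^0,\upsilon^0$ take values in $[0,1]$. Checking the two sign cases separately, this yields, for every small $t\ne0$,
\begin{align*}
\delta\,\frac{F(\theta+t)-F(\theta)}{t}\;\le\;\frac{\omega^{\tilde\beta_{\theta+t},0}-\omega^{\tilde\beta_\theta,0}}{t}\;\le\;\frac{F(\theta+t)-F(\theta)}{t},
\end{align*}
where each of the three quotients is nonnegative because $F$ is nondecreasing. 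Taking the $\liminf$ as $t\to0$ through the left inequality and using $\delta>0$ bounds the left-hand side of the claim below by $\delta$ times the lower quantity in \ref{it:omegacontnotflat}, which is strictly positive; taking the $\limsup$ through the right inequality bounds the right-hand side of the claim above by the upper quantity in \ref{it:omegacontnotflat}, which is finite. The remaining inequality $\liminf\le\limsup$ is automatic, closing the chain for every $\theta$ in the neighborhood of $\theta^0$ on which \ref{it:omegacontnotflat} is assumed.

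I do not expect a substantive obstacle here: this is a direct weight-sandwich transfer, and the only care needed is bookkeeping, namely verifying the signs in the $t<0$ case so that each difference quotient remains nonnegative, and noting $\upsilon^0-\lambda^0\le 1$ so that the upper envelope is exactly the difference quotient of $F$. With those two points in hand, both the strict positivity and the finiteness of the one-sided limits follow immediately from \ref{it:omegacontnotflat}.
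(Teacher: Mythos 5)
Your proposal is correct and follows essentially the same route as the paper: write the increment $\omega^{\tilde\beta_{\theta+t},0}-\omega^{\tilde\beta_{\theta},0}$ as an expectation of $\upsilon^0-\lambda^0$ over the event $\{\theta\le \VEm^0(W)<\theta+t\}$ and sandwich the weight between $\delta$ and $1$ to transfer the difference-quotient bounds of \ref{it:omegacontnotflat} on $P_\star^0\{\VEm^0(W)<\cdot\}$ to $\omega^{\tilde\beta_\cdot,0}$. The paper only writes out the $\liminf$ direction and declares the $\limsup$ ``similar,'' whereas you make the $t<0$ sign bookkeeping and the upper envelope $\upsilon^0-\lambda^0\le 1$ explicit; that is a completeness difference, not a different argument.
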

\begin{proof}[Proof of Lemma~\ref{lem:F0notflat}]
As $\upsilon^0(w)-\lambda^0(w)\ge \delta$ for all $w$, the facts that $t\mapsto P_\star^0\{\VEm^0(W)<\theta^0+t\}$ is monotonically non-decreasing and that \ref{it:omegacontnotflat} holds yield that
\begin{align*}
\liminf_{t\rightarrow 0}\frac{\omega^{\tilde{\beta}_{\theta+t},0}-\omega^{\tilde{\beta}_{\theta},0}}{t}&= \liminf_{t\rightarrow 0}\frac{\E_\star^0[\{\upsilon^0(W)-\lambda^0(W)\}I\{\theta^0\le \VEm^0(W)< \theta^0+t\}]}{t} \\
&\ge\delta\liminf_{t\rightarrow 0}\frac{P_\star^0\{\VEm^0(W)<\theta^0+t\}-P_\star^0\{\VEm^0(W)<\theta^0\}}{t}>0.
\end{align*}
The limit supremum result is proven similarly.
\end{proof}

\begin{lemma} \label{lem:thetacons}
Under the conditions of Theorem~\ref{thm:pdLambda}, $\theta^\epsilon= \theta^0 + O(\epsilon)$.
\end{lemma}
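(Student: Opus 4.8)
The plan is to recast $\theta^\epsilon$ as a generalized $\mu$-quantile of a monotone ``weighted distribution function'' and then show this quantile is Lipschitz in $\epsilon$ near $\theta^0$, by pairing a margin bound at $\epsilon=0$ with a uniform $O(\epsilon)$ perturbation bound. Concretely, for each $\epsilon$ define $G^\epsilon(\theta)\triangleq \omega^{w\mapsto \Ind_{\{\VEm^\epsilon(w)<\theta\}},\epsilon}$, so that $\theta^\epsilon=\sup\{\theta:G^\epsilon(\theta)\le \mu\}$ and $G^\epsilon$ is non-decreasing in $\theta$. Under \ref{it:omegacontnotflat} we have $P_\star^0\{\VEm^0(W)=\theta^0\}=0$, so the contribution of $\eta^0$ vanishes and $G^0(\theta^0)=\omega^{\beta^0,0}=\mu$; moreover Lemma~\ref{lem:F0notflat} supplies a two-sided margin, i.e.\ there exist $m>0$ and $\Delta_0>0$ with $G^0(\theta^0+\Delta)-\mu\ge m\Delta$ and $\mu-G^0(\theta^0-\Delta)\ge m\Delta$ for all $0<\Delta\le \Delta_0$.

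The core estimate to establish is the uniform closeness
\[
\sup_{|\theta-\theta^0|\le \Delta_0}\left|G^\epsilon(\theta)-G^0(\theta)\right|=O(\epsilon).
\]
I would decompose $G^\epsilon(\theta)-G^0(\theta)$ into three pieces. The change-of-measure piece $(\E_\star^\epsilon-\E_\star^0)[\UR^{w\mapsto\Ind_{\{\VEm^\epsilon(w)<\theta\}},\epsilon}]$ is $O(\epsilon)$ uniformly in $\theta$ since the integrand lies in $[0,1]$ and $\frac{dP_\star^\epsilon}{dP_\star^0}=1+\epsilon h_\star$ with $|h_\star|\le 1$. The piece from replacing $\lambda^\epsilon,\upsilon^\epsilon$ by $\lambda^0,\upsilon^0$ (at the fixed integrand $\Ind_{\{\VEm^\epsilon(w)<\theta\}}\in[0,1]$) is $O(\epsilon)$ uniformly because, by the positivity $\delta<P_s^0(A=a|W)$ and boundedness of the scores, the conditional risks $\E_s^\epsilon[Y|A=0,w]$, and hence $\lambda^\epsilon$ and $\upsilon^\epsilon$, differ from their $\epsilon=0$ counterparts by at most $c|\epsilon|$ uniformly in $w$. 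The remaining piece is the indicator-set difference $\E_\star^0[(\upsilon^0-\lambda^0)(W)\{\Ind_{\{\VEm^\epsilon(W)<\theta\}}-\Ind_{\{\VEm^0(W)<\theta\}}\}]$, whose integrand is supported where $\VEm^\epsilon(W)$ and $\VEm^0(W)$ straddle $\theta$; using the same uniform bound $\|\VEm^\epsilon-\VEm^0\|_\infty\le c|\epsilon|$ already exploited in the proof of Lemma~\ref{lem:quantilederiv}, this support is contained in $\{|\VEm^0(W)-\theta|\le c|\epsilon|\}$. Since $\upsilon^0-\lambda^0\le 1$ and the finite upper-derivative bound in \ref{it:omegacontnotflat} gives $P_\star^0\{|\VEm^0(W)-\theta|\le c|\epsilon|\}=O(\epsilon)$ uniformly for $\theta$ near $\theta^0$, this piece is also $O(\epsilon)$.

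Finally I would close the argument by a squeeze. Writing $r_\epsilon\triangleq \sup_{|\theta-\theta^0|\le\Delta_0}|G^\epsilon(\theta)-G^0(\theta)|=O(\epsilon)$ and taking $\Delta\triangleq 2r_\epsilon/m$ (which is $\le \Delta_0$ once $\epsilon$ is small), the margin bound yields $G^\epsilon(\theta^0+\Delta)\ge G^0(\theta^0+\Delta)-r_\epsilon\ge \mu+m\Delta-r_\epsilon=\mu+r_\epsilon>\mu$, and monotonicity of $G^\epsilon$ then forces $\theta^\epsilon\le \theta^0+\Delta$. Symmetrically, $G^\epsilon(\theta^0-\Delta)\le G^0(\theta^0-\Delta)+r_\epsilon\le \mu-r_\epsilon<\mu$ shows $\theta^0-\Delta$ lies in the defining set, so $\theta^\epsilon\ge \theta^0-\Delta$. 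Hence $|\theta^\epsilon-\theta^0|\le \Delta=O(\epsilon)$, as claimed. The main obstacle is the third (indicator-set) term: it is precisely there that both the uniform perturbation bound on $\VEm^\epsilon$ and the bounded-density condition \ref{it:omegacontnotflat} are indispensable, and one must take care to make every bound uniform over $\theta$ in the fixed neighborhood of $\theta^0$ rather than merely pointwise at $\theta^0$.
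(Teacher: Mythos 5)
Your quantile-squeeze strategy is sound in outline and close in spirit to the paper's argument (which likewise pairs a margin at $\theta^0$ with an $O(\epsilon)$ perturbation of $\omega^{\cdot,\epsilon}$), but the step where you claim $\sup_{|\theta-\theta^0|\le\Delta_0}|G^\epsilon(\theta)-G^0(\theta)|=O(\epsilon)$ has a genuine gap, concentrated in your third (indicator-set) piece. You bound that piece by $P_\star^0\{|\VEm^0(W)-\theta|\le c|\epsilon|\}$ and assert this is $O(\epsilon)$ uniformly for $\theta$ near $\theta^0$ by the finite upper-derivative bound in \ref{it:omegacontnotflat}. But \ref{it:omegacontnotflat} is a pointwise condition: it bounds $\limsup_{t\rightarrow 0}t^{-1}[F(\theta+t)-F(\theta)]$ at each $\theta$ in a neighborhood, where $F(\theta)\triangleq P_\star^0\{\VEm^0(W)<\theta\}$. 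A monotone $F$ with finite upper Dini derivative at every point of an interval need not be Lipschitz on that interval (the pointwise constants need not be uniform), so the uniform estimate $F(\theta+c\epsilon)-F(\theta-c\epsilon)=O(\epsilon)$ does not follow. Nor can you simply localize to the two points actually used: evaluating at $\theta=\theta^0\pm\Delta$ with $\Delta\asymp r_\epsilon$ bounds the third piece by roughly $L(\Delta+c\epsilon)$, with $L$ the limsup constant at $\theta^0$, and the squeeze then requires $(m-L)\Delta$ to dominate an $O(\epsilon)$ term, which fails whenever the margin constant $m$ is smaller than $L$; nothing in \ref{it:omegacontnotflat} forces $m\ge L$.

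The paper sidesteps this by never bounding the symmetric difference of indicators. It uses the one-sided monotone comparison $\Ind_{\{\VEm^\epsilon(w)<\theta\}}\le\Ind_{\{\VEm^0(w)<\theta+c|\epsilon|\}}$ (and its mirror image), yielding the sandwich $(1-c|\epsilon|)\omega^{\theta-c|\epsilon|,0}-c|\epsilon|\le\omega^{\theta,\epsilon}\le(1+c|\epsilon|)\omega^{\theta+c|\epsilon|,0}+c|\epsilon|$: the perturbation of $\VEm^\epsilon$ is absorbed as a shift of the argument $\theta$ rather than as an additive error, so only the liminf (margin) half of \ref{it:omegacontnotflat} at $\theta^0$, via Lemma~\ref{lem:F0notflat}, is ever invoked, and only after a preliminary consistency step $\theta^\epsilon\rightarrow\theta^0$. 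Your proof is repaired by making the same substitution, e.g.\ $G^\epsilon(\theta^0+\Delta)\ge(1-c|\epsilon|)G^0(\theta^0+\Delta-c|\epsilon|)-c'|\epsilon|$ with $\Delta=(c+c'')|\epsilon|$ and $c''$ large enough then gives $G^\epsilon(\theta^0+\Delta)>\mu$ using only the margin at $\theta^0$. The remainder of your argument (monotonicity of $G^\epsilon$, $G^0(\theta^0)=\mu$ since $\eta^0=0$ under \ref{it:omegacontnotflat}, and the $O(\epsilon)$ control of the change-of-measure and $\lambda^\epsilon,\upsilon^\epsilon$ pieces) is fine.
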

\begin{proof}[Proof of Lemma~\ref{lem:thetacons}]
In what follows we abuse notation and write $\omega^{\theta,\epsilon}$ for $\omega^{w\mapsto \Ind_{\{\VEm^\epsilon(w)<\theta\}},\epsilon}$ for all $\theta,\epsilon$. Note that, by \ref{it:omegacontnotflat}, $\eta^0=0$ so that $\omega^{\beta^0,0}=\omega^{\theta^0,0}$. We similarly abuse notation by letting $\UR^{\theta,\epsilon}\triangleq \UR^{w\mapsto \Ind_{\{\VEm^\epsilon(w)<\theta\}},\epsilon}$ and $D_{\UR,s}^{\theta,\epsilon}\triangleq D_{\UR,s}^{w\mapsto \Ind_{\{\VEm^\epsilon(w)<\theta\}},\epsilon}$.

We first establish that $\theta^\epsilon\rightarrow\theta^0$. For $t>0$, the definition of $\theta^\epsilon$ yields that $|\theta^\epsilon-\theta^0|>t$ implies that either $\omega^{\theta^0-t,\epsilon}>\mu$ or $\omega^{\theta^0+t,\epsilon}\le \mu$. We will establish that $\theta^\epsilon\rightarrow\theta^0$ by showing that, for any $t>0$, both (i) $\omega^{\theta^0-t,\epsilon}\le\mu$ and (ii) $\omega^{\theta^0+t,\epsilon}> \mu$, and therefore $|\theta^\epsilon-\theta^0|\le t$.

For a constant $c>0$ that may vary line by line,
\begin{align*}
\omega^{\theta^0-t,\epsilon}&\le (1+c|\epsilon|)\E_{\star}^0\left[\UR^{\theta^0-t,\epsilon}(W)\right] \\
&= (1+c|\epsilon|)\E_{\star}^0\left[\lambda^\epsilon(W) + \{\upsilon^\epsilon(W)-\lambda^\epsilon(W)\}\Ind\{\VEm^\epsilon(W)<\theta^0-t\}\right] \\
&\le (1+c|\epsilon|)\E_{\star}^0\left[\lambda^\epsilon(W) + \{\upsilon^\epsilon(W)-\lambda^\epsilon(W)\}\Ind\{\VEm^0(W)<\theta^0-t+c|\epsilon|\}\right] \\
% &\le (1+c|\epsilon|)\E_{\star}^0\left[\lambda^0(W) + \{\upsilon^0(W)-\lambda^0(W)\}\Ind\{\VEm^0(W)<\theta^0-t+c|\epsilon|\}\right] + c|\epsilon| \\
&= (1+c|\epsilon|)\omega^{\theta^0-t+c|\epsilon|,0} + c|\epsilon|.
\end{align*}
Now, we know that $\omega^{\theta^0-t+c|\epsilon|,0}\le \mu$ by the definition of the supremum, and by \ref{it:omegacontnotflat} the inequality is strict. It follows that $\omega^{\theta^0-t,\epsilon}< \mu + c|\epsilon|(1+\mu)$, and so $\omega^{\theta^0-t,\epsilon}< \mu$ for all $|\epsilon|$ sufficiently small. Thus (i) holds. We now establish (ii). Similarly to the above, we have that $\omega^{\theta^0+t,\epsilon}\ge (1-c|\epsilon|)\omega^{\theta^0+t-c|\epsilon|,0}-c|\epsilon|$ for a constant $c>0$. By the definition of the supremum, $\omega^{\theta^0+t-c|\epsilon|,0}>\mu$ for all $|\epsilon|$ sufficiently small. Thus, (ii) holds for all $|\epsilon|$ sufficiently small. From the observation at the beginning of the proof, the fact that (i) and (ii) hold for all $|\epsilon|$ sufficiently small implies that $\limsup_{\epsilon\rightarrow 0}|\theta^\epsilon-\theta^0|\le t$, and as $t$ was arbitrary this implies that $\theta^\epsilon-\theta^0 = o(1)$.

We now establish that $\theta^\epsilon-\theta^0 = O(\epsilon)$. Using similar techniques to those used above, one can show that there exists a $c>0$ such that, for all $\theta$,
\begin{align}
(1-c|\epsilon|)\omega^{\theta-c|\epsilon|,0}-c|\epsilon|\le \omega^{\theta,\epsilon}\le (1+c|\epsilon|)\omega^{\theta+c|\epsilon|,0}+c|\epsilon|. \label{eq:F0bd}
\end{align}
Noting that $\mu\ge \omega^{\theta^\epsilon,\epsilon}$, we have that
\begin{align*}
0&\ge \omega^{\theta^\epsilon,\epsilon} - \omega^{\theta^0,0}\ge (1-c|\epsilon|)\omega^{\theta^\epsilon-c|\epsilon|,0}-c|\epsilon| - \omega^{\theta^0,0} \\
&= (1-c|\epsilon|)\left[\omega^{\theta^\epsilon-c|\epsilon|,0} - \omega^{\theta^0,0}\right] - c|\epsilon|(1+\omega^{\theta^0,0}).
\end{align*}
As \ref{it:omegacontnotflat} implies $\liminf_{t\rightarrow 0}\frac{\omega^{\theta^0+t,0}-\omega^{\theta^0,0}}{t}>0$ by Lemma~\ref{lem:F0notflat}, $B\triangleq \liminf_{\epsilon\rightarrow 0} B^\epsilon\triangleq \liminf_{\epsilon\rightarrow 0}\frac{\omega^{\theta^\epsilon-c|\epsilon|,0} - \omega^{\theta^0,0}}{\theta^\epsilon-\theta^0-c|\epsilon|}$ is greater than zero. The above implies that, for all $\epsilon$ small enough,
\begin{align*}
\frac{c|\epsilon|(1+\omega^{\theta^0,0})}{1-c|\epsilon|}\ge \omega^{\theta^\epsilon-c|\epsilon|,0} - \omega^{\theta^0,0}= (\theta^\epsilon-\theta^0-c|\epsilon|)B^\epsilon.
\end{align*}
Dividing both sides by $B^\epsilon$ and taking the limit supremum of the left-hand side as $\epsilon\rightarrow 0$ shows that $\theta^\epsilon-\theta^0\le O(\epsilon)$.

The proof of the fact that $\theta^\epsilon-\theta^0\ge O(\epsilon)$ is analogous, making use of the upper bound in (\ref{eq:F0bd}) and the fact that $\mu< \omega^{\theta^\epsilon+c|\epsilon|,\epsilon}$ if $\theta^\epsilon$ is finite (guaranteed for all $\epsilon$ small enough).
\end{proof}

\subsection{Asymptotic linearity (Theorem~\ref{thm:al})}\label{app:thmalproof}
In what follows, we write \ref{it:omegacontnotflat}/\ref{it:goodquantile}/\ref{it:betangood} to mean that \ref{it:omegacontnotflat}, \ref{it:goodquantile}, and \ref{it:betangood} hold. The proof under \ref{it:musmall} is essentially equivalent to that under \ref{it:mubig}, so for simplicity we only give the proof under \ref{it:mubig} and \ref{it:omegacontnotflat}/\ref{it:goodquantile}/\ref{it:betangood}. Outside of formal theorem statements, when we write that \ref{it:omegacontnotflat}/\ref{it:goodquantile}/\ref{it:betangood} holds or that \ref{it:mubig} holds, we mean that the stated condition(s) and the other conditions of Theorem~\ref{thm:al}, namely \ref{it:brcommonsupport}, \ref{it:brvebdd}, \ref{it:CSconsistency}, \ref{it:thetaDonsker}, \ref{it:thetaempproc}, and \ref{it:VEnbdd}, hold.

We prove Theorem~\ref{thm:al} by analyzing the terms on the right-hand side of the following decomposition, whose derivation is straightforward:
\begin{align}
\widehat{\phi}^n-\phi^0=& \frac{\widehat{\gamma}^{\widehat{\beta}^n,n}}{\widehat{\omega}^{\widehat{\beta}^n,n}} - \frac{\gamma^{\beta^0,0}}{\omega^{\beta^0,0}} \nonumber \\%&= \left[\frac{\widehat{\gamma}^{\widehat{\beta}^n,n}-\gamma^{\widehat{\beta}^n,0}}{\widehat{\omega}^{\widehat{\beta}^n,n}}\right] + \gamma^{\widehat{\beta}^n,0}\left[\frac{1}{\widehat{\omega}^{\widehat{\beta}^n,n}} - \frac{1}{\omega^{\beta^0,0}}\right] + \frac{\gamma^{\widehat{\beta}^n,0} - \gamma^{\beta^0,0}}{\omega^{\beta^0,0}} \nonumber \\
=&\, \frac{\widehat{\gamma}^{\widehat{\beta}^n,n}-\gamma^{\widehat{\beta}^n,0}}{\omega^{\beta^0,0}} - \frac{\gamma^{\beta^0,0}}{[\omega^{\beta^0,0}]^2}\left[\widehat{\omega}^{\widehat{\beta}^n,n} - \omega^{\beta^0,0}\right] + \frac{\gamma^{\widehat{\beta}^n,0} - \gamma^{\beta^0,0}}{\omega^{\beta^0,0}} \nonumber \\
&+ \left[\widehat{\gamma}^{\widehat{\beta}^n,n}-\gamma^{\beta^0,0}\right]\left[\frac{1}{\widehat{\omega}^{\widehat{\beta}^n,n}}-\frac{1}{\omega^{\beta^0,0}}\right] + \frac{\gamma^{\beta^0,0}}{\widehat{\omega}^{\widehat{\beta}^n,n}[\omega^{\beta^0,0}]^2}\left[\widehat{\omega}^{\widehat{\beta}^n,n} - \omega^{\beta^0,0}\right]^2 \nonumber \\
\triangleq&\, \textnormal{Term 1} - \textnormal{Term 2} + \textnormal{Term 3} + \textnormal{Term 4} + \textnormal{Term 5}. \label{eq:term1to5def}
\end{align}
We will show that Term 1 is non-negligible, i.e. contributes to the dominant $O_P(\nmin^{-1/2})$ term in the asymptotically linear expansion, under both \ref{it:omegacontnotflat}/\ref{it:goodquantile}/\ref{it:betangood} and \ref{it:mubig}, Term 2 is negligible, i.e. $o_P(\nmin^{-1/2})$, under \ref{it:omegacontnotflat}/\ref{it:goodquantile}/\ref{it:betangood} and non-negligible under \ref{it:mubig}, and Term 3 is non-negligible under \ref{it:omegacontnotflat}/\ref{it:goodquantile}/\ref{it:betangood} and negligible under \ref{it:mubig}. % and Terms 4 and 5 are negligible under both \ref{it:omegacontnotflat}/\ref{it:goodquantile}/\ref{it:betangood} and \ref{it:mubig}.
Terms 4 and 5 are remainder terms that one can ignore if they invoke the delta method. Indeed, if we show that $\widehat{\omega}^{\widehat{\beta}^n,n} - \omega^{\beta^0,0}$ and $\widehat{\gamma}^{\widehat{\beta}^n,n}-\gamma^{\beta^0,0}$ are $O_P(\nmin^{-1/2})$, then Terms 4 and 5 are negligible. As this is what we will show under both \ref{it:omegacontnotflat}/\ref{it:goodquantile}/\ref{it:betangood} and \ref{it:mubig}, Terms 4 and 5 can be shown to be negligible by our analysis of the first three terms.

The following theorem establishes the behavior of Term 1 by studying $\widehat{\gamma}^{\widehat{\beta}^n,n}-\gamma^{\widehat{\beta}^n,0}$.
\begin{theorem}[Term 1] \label{thm:daparamal}
If \ref{it:brvebdd}, \ref{it:CSconsistency}, \ref{it:thetaempproc}, and \ref{it:VEnbdd}, then
\begin{align*}
\widehat{\gamma}^{\widehat{\beta}^n,n}-\gamma^{\widehat{\beta}^n,0}&= \sum_{s\in\mathcal{S}} (Q_s^n-P_s^0) \nabla \Gamma_s^{\beta^0,0} + o_P(\nmin^{-1/2}).
\end{align*}
\end{theorem}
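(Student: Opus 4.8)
The plan is to analyze $\widehat{\gamma}^{\widehat{\beta}^n,n}$ by first treating, for each \emph{fixed} $\beta : \mathcal{W}\to[0,1]$, the quantity $\widehat{\gamma}^{\beta,n}$ as a one-step estimator of the fixed-$\beta$ target $\mathcal{P}'\mapsto \Gamma^{\beta}(\mathcal{P}')$, whose $\mathcal{P}^0$- and $\mathcal{P}^n$-values are $\gamma^{\beta,0}$ and $\gamma^{\beta,n}$ and whose $P_s^0$ gradients are the functions $\nabla\Gamma_s^{\beta,0}$, $s\in\mathcal{S}$ (read off the definitions preceding Theorem~\ref{thm:pd}, now with $\beta$ held fixed rather than varying with $\mathcal{P}'$, so that the $-\theta^0\nabla\Omega_s^{\beta^0,0}$ correction of Theorem~\ref{thm:pdLambda} is absent). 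Using the bias-correction in Step~\ref{it:onestep} one has the exact algebraic identity
\begin{align*}
\widehat{\gamma}^{\beta,n}-\gamma^{\beta,0}
&= \sum_{s\in\mathcal{S}}(Q_s^n-P_s^0)\nabla\Gamma_s^{\beta,n}
+ R(\beta;\mathcal{P}^n,\mathcal{P}^0),
\end{align*}
where $R(\beta;\mathcal{P}^n,\mathcal{P}^0)\triangleq \gamma^{\beta,n}-\gamma^{\beta,0}+\sum_{s\in\mathcal{S}}P_s^0\nabla\Gamma_s^{\beta,n}$ is a second-order remainder. Since this identity is deterministic and valid at every $\beta$, I may substitute $\beta=\widehat{\beta}^n$ directly, sidestepping any need to treat $\widehat{\beta}^n$ as fixed in a conditioning argument. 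It then remains to (i) show $R(\widehat{\beta}^n;\mathcal{P}^n,\mathcal{P}^0)=o_P(\nmin^{-1/2})$ and (ii) replace $\nabla\Gamma_s^{\widehat{\beta}^n,n}$ by $\nabla\Gamma_s^{\beta^0,0}$ in the leading empirical-process term.

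Step (i) is the main obstacle. I would first note that the population $\star$ contribution to $R$ cancels exactly: writing $\gamma^{\beta,n}-\gamma^{\beta,0}=(\E_\star^n-\E_\star^0)[\UR^{\beta,n}\VEm^n]+\E_\star^0[\UR^{\beta,n}\VEm^n-\UR^{\beta,0}\VEm^0]$ and observing that $P_\star^0\nabla\Gamma_\star^{\beta,n}=-(\E_\star^n-\E_\star^0)[\UR^{\beta,n}\VEm^n]$, only the nuisance part $\E_\star^0[\UR^{\beta,n}\VEm^n-\UR^{\beta,0}\VEm^0]+\sum_{s=1}^S P_s^0\nabla\Gamma_s^{\beta,n}$ survives. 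Inside each $P_s^0\nabla\Gamma_s^{\beta,n}$ I would split the estimated density ratio $\frac{dP_\star^n}{dP_s^n}$ as $\frac{dP_\star^0}{dP_s^0}+(\frac{dP_\star^n}{dP_s^n}-\frac{dP_\star^0}{dP_s^0})$; the first piece changes measure exactly to an $\E_\star^0$ of conditional means $\E_s^0[D_{\UR,s}^{\beta,n}|W]$ and $\E_s^0[D_{\VE,s}^n|W]$, and after adding and subtracting this reduces to $\E_\star^0[\Rem_1(\mathcal{P}^n,\mathcal{P}^0)(W)\,\VEm^0(W)]+\E_\star^0[\UR^{\beta,n}(W)\,\Rem_2(\mathcal{P}^n,\mathcal{P}^0)(W)]$, while the second piece produces the cross terms pairing a density-ratio error with a conditional-mean error. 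By Cauchy--Schwarz the first two integrals are bounded by $\norm{\VEm^0}_{\infty,P_\star^0}\norm{\Rem_1(\mathcal{P}^n,\mathcal{P}^0)}_{2,P_\star^0}$ and $\norm{\Rem_2(\mathcal{P}^n,\mathcal{P}^0)}_{2,P_\star^0}$ (using $\UR^{\beta,n}\in[0,1]$), both $o_P(\nmin^{-1/2})$ by \ref{it:brvebdd} and \ref{it:CSconsistency}, and the cross terms are precisely the product-of-norm quantities that \ref{it:CSconsistency} forces to be $o_P(\nmin^{-1/2})$. The delicate point is uniformity in $\beta$: $\UR^{\beta,n},\UR^{\beta,0}$ lie in $[0,1]$ regardless of $\beta$, $\VEm^n$ is bounded below by \ref{it:VEnbdd}, and the $\beta$-dependent terms in \ref{it:CSconsistency} carry a supremum over $\beta:\mathcal{W}\to[0,1]$, so every bound survives substitution of the random $\widehat{\beta}^n$. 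Matching each analytic piece of $R$ to a specific listed term of \ref{it:CSconsistency} is the bookkeeping that makes this step nontrivial.

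For step (ii) I would split, for each $s\in\mathcal{S}$, $(Q_s^n-P_s^0)\nabla\Gamma_s^{\widehat{\beta}^n,n}=(Q_s^n-P_s^0)\nabla\Gamma_s^{\beta^0,0}+(Q_s^n-P_s^0)[\nabla\Gamma_s^{\widehat{\beta}^n,n}-\nabla\Gamma_s^{\beta^0,0}]$ and invoke the equicontinuity condition \ref{it:thetaempproc}, which renders the second summand $o_P(\nmin^{-1/2})$. Summing the finitely many $s\in\mathcal{S}$ and combining with the negligibility of $R(\widehat{\beta}^n;\mathcal{P}^n,\mathcal{P}^0)$ from step (i) then gives $\widehat{\gamma}^{\widehat{\beta}^n,n}-\gamma^{\widehat{\beta}^n,0}=\sum_{s\in\mathcal{S}}(Q_s^n-P_s^0)\nabla\Gamma_s^{\beta^0,0}+o_P(\nmin^{-1/2})$, as claimed.
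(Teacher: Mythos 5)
Your proposal is correct and follows essentially the same route as the paper's proof: the paper first isolates the remainder $\gamma^{\widehat{\beta}^n,n}-\gamma^{\widehat{\beta}^n,0}+\sum_{s\in\mathcal{S}}P_s^0\nabla\Gamma_s^{\widehat{\beta}^n,n}$ (its Lemma on this quantity being $o_P(\nmin^{-1/2})$), cancels the population-$\star$ contribution, splits the density ratio, invokes $\Rem_1$, $\Rem_2$, Cauchy--Schwarz and \ref{it:CSconsistency}, and then swaps $\nabla\Gamma_s^{\widehat{\beta}^n,n}$ for $\nabla\Gamma_s^{\beta^0,0}$ via \ref{it:thetaempproc}, exactly as you do. One minor bookkeeping point: the exact reduction of the nuisance part also produces the second-order product $\E_\star^0\left[\{\UR^{\widehat{\beta}^n,n}(W)-\UR^{\widehat{\beta}^n,0}(W)\}\{\VEm^n(W)-\VEm^0(W)\}\right]$ (and the $\Rem_1$ term is multiplied by $\VEm^n$, not $\VEm^0$, so \ref{it:VEnbdd} rather than \ref{it:brvebdd} is what bounds it); your summary omits this product term, but it is controlled by the listed $\norm{\upsilon^n-\upsilon^0}_{2,P_s^0}\norm{\VE^n-\VEm^0}_{2,P_s^0}$ entry of \ref{it:CSconsistency}, so no substantive gap results.
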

The \hyperlink{proof:thmdaparamal}{proof of Theorem~\ref*{thm:daparamal}} is given in Appendix~\ref{app:term1}.

The fact that Term 2 is negligible under \ref{it:omegacontnotflat}/\ref{it:goodquantile}/\ref{it:betangood} is an immediate consequence of \ref{it:goodquantile}. We now present a theorem that establishes the behavior of Term 3 under \ref{it:omegacontnotflat}/\ref{it:goodquantile}/\ref{it:betangood}.
\begin{theorem}[Term 3 under \ref{it:omegacontnotflat}/\ref{it:goodquantile}/\ref{it:betangood}] \label{thm:phitheta}
If \ref{it:brvebdd}, \ref{it:omegacontnotflat}, \ref{it:CSconsistency}, \ref{it:thetaempproc}, \ref{it:goodquantile}, and \ref{it:betangood}, then
\begin{align*}
\gamma^{\widehat{\beta}^n,0} - \gamma^{\beta^0,0}&= - \theta^0\sum_{s\in\mathcal{S}} (Q_s^n-P_s^0) \nabla \Omega_s^{\beta^0,0} + o_P(\nmin^{-1/2}).
\end{align*}
\end{theorem}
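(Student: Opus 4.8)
The plan is to build the argument on the exact algebraic identity that motivates condition~\ref{it:betangood}, namely
\begin{align*}
\gamma^{\widehat{\beta}^n,0} - \gamma^{\beta^0,0} = \theta^0\left[\omega^{\widehat{\beta}^n,0} - \omega^{\beta^0,0}\right] + \Rem_3^n,
\end{align*}
where $\Rem_3^n$ is precisely the remainder appearing in \ref{it:betangood} and is therefore $o_P(\nmin^{-1/2})$ by assumption. Since \ref{it:omegacontnotflat} forces $\theta^0$ to be finite, it suffices to show that $\omega^{\widehat{\beta}^n,0} - \omega^{\beta^0,0}$ equals $-\sum_{s\in\mathcal{S}}(Q_s^n-P_s^0)\nabla\Omega_s^{\beta^0,0}$ up to $o_P(\nmin^{-1/2})$; multiplying by the finite constant $\theta^0$ and adding back the negligible $\Rem_3^n$ then gives the claim. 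So the entire theorem reduces to a first-order expansion for $\omega^{\widehat{\beta}^n,0} - \omega^{\beta^0,0}$.

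To obtain that expansion I would use two facts. First, \ref{it:omegacontnotflat} gives $\omega^{\beta^0,0}=\mu$. Second, \ref{it:goodquantile} gives $\widehat{\omega}^{\widehat{\beta}^n,n}-\mu = o_P(\nmin^{-1/2})$, which plays the role of the empirical cumulative distribution function at the sample median being $1/2+o_P(n^{-1/2})$. Writing
\begin{align*}
\omega^{\widehat{\beta}^n,0} - \omega^{\beta^0,0} = \left[\widehat{\omega}^{\widehat{\beta}^n,n}-\mu\right] - \left[\widehat{\omega}^{\widehat{\beta}^n,n}-\omega^{\widehat{\beta}^n,0}\right],
\end{align*}
the first bracket is negligible by \ref{it:goodquantile}, so the problem collapses to the asymptotic linearity of the one-step estimator $\widehat{\omega}^{\widehat{\beta}^n,n}$ for its target $\omega^{\widehat{\beta}^n,0}$, evaluated at the data-dependent index $\widehat{\beta}^n$.

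That one-step expansion would be carried out exactly as in the proof of Theorem~\ref{thm:daparamal}, but for the simpler unvaccinated-risk functional, so that only the $\Rem_1$-type remainder and the density-ratio cross terms of \ref{it:CSconsistency} appear and the $D_{\VE}$/$\Rem_2$ pieces drop out. Concretely, the von Mises identity gives
\begin{align*}
\widehat{\omega}^{\widehat{\beta}^n,n}-\omega^{\widehat{\beta}^n,0} = \sum_{s\in\mathcal{S}}(Q_s^n-P_s^0)\nabla\Omega_s^{\widehat{\beta}^n,n} + R^{\widehat{\beta}^n},
\end{align*}
where $R^{\widehat{\beta}^n}$ is the second-order remainder in the first-order expansion of $\beta\mapsto\omega^{\beta}$ about $\mathcal{P}^0$. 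The doubly robust product structure of $\nabla\Omega_s^{\beta,0}$ makes $R^{\widehat{\beta}^n}$ a sum of products of nuisance errors (outcome-regression error times propensity error, and outcome-regression error times density-ratio error), bounded uniformly over $\beta$ by the $\Rem_1$ term and the $D_{\UR}$ cross terms of \ref{it:CSconsistency}; hence $R^{\widehat{\beta}^n}=o_P(\nmin^{-1/2})$. The empirical process condition \ref{it:thetaempproc} then lets me replace the data-dependent gradient $\nabla\Omega_s^{\widehat{\beta}^n,n}$ by the fixed limiting gradient $\nabla\Omega_s^{\beta^0,0}$ at the cost of an $o_P(\nmin^{-1/2})$ term, the leftover $\sum_s(Q_s^n-P_s^0)\nabla\Omega_s^{\beta^0,0}$ being an ordinary empirical mean of a bounded (by \ref{it:brvebdd}) fixed function. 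Chaining these displays yields $\omega^{\widehat{\beta}^n,0}-\omega^{\beta^0,0}=-\sum_{s\in\mathcal{S}}(Q_s^n-P_s^0)\nabla\Omega_s^{\beta^0,0}+o_P(\nmin^{-1/2})$ and completes the proof.

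The main obstacle is handling the data-dependent index $\widehat{\beta}^n$ in the one-step expansion: both the second-order remainder $R^{\widehat{\beta}^n}$ and the centered empirical-process term must be controlled uniformly over $\beta$ rather than at a single fixed $\beta$. This is exactly why \ref{it:CSconsistency} is stated with a supremum over $\beta:\mathcal{W}\to[0,1]$ and why \ref{it:thetaempproc} is phrased as an equicontinuity statement for the $\widehat{\beta}^n$-indexed gradient difference. The more delicate conceptual point, that the plug-in construction of $\widehat{\beta}^n$ forces the one-step estimate $\widehat{\omega}^{\widehat{\beta}^n,n}$ to sit at $\mu$, is not proved here but is supplied directly by \ref{it:goodquantile}.
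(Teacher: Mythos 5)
Your proposal is correct and follows essentially the same route as the paper: the same decomposition $\gamma^{\widehat{\beta}^n,0}-\gamma^{\beta^0,0}=\theta^0[\omega^{\widehat{\beta}^n,0}-\omega^{\beta^0,0}]+\Rem_3^n$ with \ref{it:betangood} killing the remainder, the same reduction via \ref{it:goodquantile} and $\omega^{\beta^0,0}=\mu$, and the same one-step/von Mises expansion of $\widehat{\omega}^{\widehat{\beta}^n,n}-\omega^{\widehat{\beta}^n,0}$ (which the paper isolates as Lemma~\ref{lem:thetaee}) followed by the \ref{it:thetaempproc} gradient replacement. The only difference is that you sketch the content of Lemma~\ref{lem:thetaee} inline rather than invoking it as a separate result.
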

The \hyperlink{proof:thmphitheta}{proof of Theorem~\ref*{thm:phitheta}} is given in Appendix~\ref{app:Terms23munotbig}.
% \begin{lemma}[Term 2 under \ref{it:omegacontnotflat}/\ref{it:goodquantile}/\ref{it:betangood}]\label{lem:QstarnWlt1}
% Under the conditions of Theorem~\ref{thm:phitheta}, $Q_\star^n(W<\widehat{\theta}^n)<1$ with probability approaching one.
% \end{lemma}
% The \hyperlink{proof:lemQstarnWlt1}{proof of Lemma~\ref*{lem:QstarnWlt1}} is given in Appendix~\ref{app:Terms23munotbig}. By the above lemma, $\widehat{\omega}^{\widehat{\theta}^n,n}=\mu$ with probability approaching one. By \ref{it:omegacontnotflat}, $\omega^{\theta^0,0}=\mu$. Hence, the above lemma shows that Term 2 is zero with probability approaching one under \ref{it:omegacontnotflat}/\ref{it:goodquantile}/\ref{it:betangood}.

We now present a lemma and a theorem that respectively establish the behavior of Terms 3 and 2 under \ref{it:mubig}. Both results are proven in Appendix~\ref{app:Terms23mubig}.
\begin{lemma}[Term 3 under \ref{it:mubig}]\label{lem:QstarnWeq1}
If \ref{it:brvebdd}, \ref{it:mubig}, \ref{it:CSconsistency}, and \ref{it:thetaDonsker}, then $\widehat{\theta}^n=+\infty$, and consequently $Q_\star^n\{\VEm^n(W)<\widehat{\theta}^n\}=1$, with probability approaching one.
\end{lemma}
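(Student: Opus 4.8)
The plan is to reduce the claim $\widehat{\theta}^n=+\infty$ to the single inequality $\widehat{\omega}^{1,n}\le\mu$, where $\widehat{\omega}^{1,n}$ denotes $\widehat{\omega}^{\beta,n}$ at the constant weight $\beta\equiv 1$, and then to verify that inequality with probability approaching one. First I would exploit the finiteness of the empirical support. Because $\widehat{\omega}^{\beta,n}=Q_\star^n\UR^{\beta,n}+\sum_{s=1}^S Q_s^n\nabla\Omega_s^{\beta,n}$ is a sum of empirical means, the map $\theta\mapsto \widehat{\omega}^{w\mapsto \Ind_{\{\VEm^n(w)<\theta\}},n}$ evaluates the weight $w\mapsto\Ind_{\{\VEm^n(w)<\theta\}}$ only at the finitely many observed covariates $w_s[i]$, $s\in\mathcal{S}$. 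Letting $\widehat{\theta}^\dagger$ be the largest value of $\VEm^n$ over these observed points (finite, as $\VEm^n$ is finite at each observed covariate), every such indicator equals one once $\theta>\widehat{\theta}^\dagger$, so that $\widehat{\omega}^{w\mapsto \Ind_{\{\VEm^n(w)<\theta\}},n}=\widehat{\omega}^{1,n}$ for all $\theta>\widehat{\theta}^\dagger$. Hence the set $\{\theta:\widehat{\omega}^{w\mapsto \Ind_{\{\VEm^n(w)<\theta\}},n}\le\mu\}$ contains the half-line $(\widehat{\theta}^\dagger,\infty)$ precisely when $\widehat{\omega}^{1,n}\le\mu$, in which case its supremum $\widehat{\theta}^n$ is $+\infty$. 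It therefore suffices to show $\widehat{\omega}^{1,n}\le\mu$ with probability approaching one.

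Next I would identify $\omega^{1,0}$ and control $\widehat{\omega}^{1,n}$. Since $\UR^{1,0}\equiv\upsilon^0$, we have $\omega^{1,0}=\E_\star^0[\upsilon^0(W)]$. Under \ref{it:mubig} the threshold is $\theta^0=+\infty$: indeed $\E_\star^0[\upsilon^0(W)]<\mu$ together with the boundedness of $\VEm^0$ from \ref{it:brvebdd} forces $\omega^{w\mapsto \Ind_{\{\VEm^0(w)<\theta\}},0}=\omega^{1,0}<\mu$ for all finite $\theta$ exceeding the uniform bound on $\VEm^0$, so $\Theta(\mathcal{P}^0)=+\infty$ and consequently $\beta^0\equiv 1$ $P_\star^0$-almost surely. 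This is the key observation, since it means the Donsker-type hypothesis \ref{it:thetaDonsker} and the consistency hypothesis \ref{it:CSconsistency}, both phrased at $\beta^0$, apply directly to $\beta\equiv 1$. The one-step estimator $\widehat{\omega}^{1,n}$ then admits the standard expansion $\widehat{\omega}^{1,n}-\omega^{1,0}=\sum_{s\in\mathcal{S}}(Q_s^n-P_s^0)\nabla\Omega_s^{1,0}+o_P(\nmin^{-1/2})$, obtained exactly as in the analysis of Term~1 (Theorem~\ref{thm:daparamal}) but for the simpler $\Omega$ functional; in particular $\widehat{\omega}^{1,n}=\omega^{1,0}+O_P(\nmin^{-1/2})$, so $\widehat{\omega}^{1,n}\to\E_\star^0[\upsilon^0(W)]$ in probability.

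Finally, \ref{it:mubig} supplies a strictly positive gap $\mu-\E_\star^0[\upsilon^0(W)]>0$, and since $\widehat{\omega}^{1,n}$ concentrates at $\E_\star^0[\upsilon^0(W)]$, the event $\{\widehat{\omega}^{1,n}\le\mu\}$ has probability tending to one, which by the first paragraph yields $\widehat{\theta}^n=+\infty$ with probability approaching one. The ``consequently'' clause is then immediate: on $\{\widehat{\theta}^n=+\infty\}$ every observed $\VEm^n(w_\star[i])$ is finite and hence strictly below $\widehat{\theta}^n$, so $Q_\star^n\{\VEm^n(W)<\widehat{\theta}^n\}=1$. I expect the only genuinely delicate ingredient to be the concentration of $\widehat{\omega}^{1,n}$, but this is precisely the one-step argument already used for the numerator, so the substantive new work is the bookkeeping reduction of the supremum in the first step; the lone side condition to confirm is that $\VEm^n$ is finite at the observed covariates, which guarantees $\widehat{\theta}^\dagger<\infty$ and is plausible given that $\VEm^n$ is bounded below by \ref{it:VEnbdd} and estimates the bounded function $\VEm^0$.
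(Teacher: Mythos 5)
Your proposal is correct and follows essentially the same route as the paper: under \ref{it:mubig} one has $\theta^0=+\infty$ and $\beta^0\equiv 1$, the one-step expansion of $\widehat{\omega}^{\beta^0,n}$ (empirical-process term controlled by \ref{it:thetaDonsker}, remainder by \ref{it:CSconsistency}) gives $\widehat{\omega}^{\beta^0,n}=\E_\star^0[\upsilon^0(W)]+O_P(\nmin^{-1/2})<\mu$ with probability approaching one, whence $\widehat{\theta}^n=+\infty$. Your first-paragraph reduction of ``$\widehat{\omega}^{1,n}\le\mu$ implies $\widehat{\theta}^n=+\infty$'' via the finiteness of $\VEm^n$ at the observed covariates is a welcome explicit justification of a step the paper compresses into its final sentence.
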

The \hyperlink{proof:lemQstarnWeq1}{proof of Lemma~\ref*{lem:QstarnWeq1}} is given in Appendix~\ref{app:Terms23mubig}.
By \ref{it:mubig}, $\theta^0=+\infty$, so that the above shows that $\widehat{\theta}^n=\theta^0$, and thus that Term 2 is zero, with probability approaching one.
\begin{theorem}[Term 2 under \ref{it:mubig}]\label{thm:term2mubig}
If \ref{it:brvebdd}, \ref{it:mubig}, \ref{it:CSconsistency}, and \ref{it:thetaempproc}, then
\begin{align*}
\widehat{\omega}^{\widehat{\beta}^n,n} - \omega^{\beta^0,0}&= \sum_{s\in\mathcal{S}} (Q_s^n - P_s^0)\nabla \Omega_s^{\beta^0,0} + o_P(\nmin^{-1/2}).
\end{align*}
\end{theorem}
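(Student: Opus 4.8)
The plan is to reduce the claim to the asymptotic linearity of the one-step estimator $\widehat{\omega}^{\beta,n}$ at the \emph{fixed}, known index $\beta^0\equiv 1$. Under \ref{it:mubig} we have $\theta^0=+\infty$, and since $\VEm^0$ is finite by \ref{it:brvebdd} this forces $\beta^0\equiv 1$ and $\omega^{\beta^0,0}=\E_\star^0[\upsilon^0(W)]$. Lemma~\ref{lem:QstarnWeq1} supplies $\widehat{\theta}^n=+\infty$ with probability approaching one; combined with the finiteness of $\VEm^n$ (guaranteed under \ref{it:VEnbdd} and boundedness of the relevant denominators), this yields an event $E_n$ with $P(E_n)\to 1$ on which $\widehat{\beta}^n\equiv 1=\beta^0$. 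A routine argument then shows it suffices to prove the conclusion with $\widehat{\beta}^n$ replaced by the fixed $\beta^0$: letting $A_n$ denote the difference between the two sides of the claimed expansion and $A_n^*$ its analogue at $\beta^0$, the two coincide on $E_n$, so $\nmin^{1/2}A_n^*\Ind_{E_n}=o_P(1)$ once we control $A_n^*$, while $\{\,|\nmin^{1/2}A_n\Ind_{E_n^c}|>\delta\,\}\subseteq E_n^c$ has vanishing probability for every $\delta>0$.

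Working henceforth at $\beta^0\equiv 1$, I would use the one-step decomposition
\begin{align*}
\widehat{\omega}^{\beta^0,n}-\omega^{\beta^0,0}
&= \sum_{s\in\mathcal{S}} (Q_s^n-P_s^0)\nabla \Omega_s^{\beta^0,0}
+ \sum_{s\in\mathcal{S}} (Q_s^n-P_s^0)\left[\nabla \Omega_s^{\beta^0,n}-\nabla \Omega_s^{\beta^0,0}\right]
+ R_n,
\end{align*}
where $R_n\triangleq \omega^{\beta^0,n}-\omega^{\beta^0,0}+\sum_{s\in\mathcal{S}} P_s^0 \nabla \Omega_s^{\beta^0,n}$ is the second-order remainder; here I invoke that the $\star$-component of the empirical correction vanishes identically, $Q_\star^n \nabla \Omega_\star^{\beta^0,n}=0$, as recorded in Step~\ref{it:omegan}. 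The middle (empirical-process) term is $o_P(\nmin^{-1/2})$ directly by \ref{it:thetaempproc}, evaluated on $E_n$ where $\nabla \Omega_s^{\beta^0,n}=\nabla \Omega_s^{\widehat{\beta}^n,n}$. It therefore remains only to establish $R_n=o_P(\nmin^{-1/2})$.

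The analysis of $R_n$ is the crux of the argument. Exploiting $\UR^{\beta^0,0}\equiv \upsilon^0$, the $\star$-piece collapses to $\omega^{\beta^0,n}-\omega^{\beta^0,0}+P_\star^0\nabla \Omega_\star^{\beta^0,n}=\E_\star^0[\upsilon^n(W)-\upsilon^0(W)]=\sum_{s=1}^S \E_\star^0\big[\mathbbmss{u}_s(W)\big(\E_s^n[Y|A=0,W]-\E_s^0[Y|A=0,W]\big)\big]$, while for each $s=1,\dots,S$ the term $P_s^0\nabla \Omega_s^{\beta^0,n}$ is an integral of $\tfrac{dP_\star^n}{dP_s^n}\,\E_s^0[D_{\UR,s}^{\beta^0,n}(O)\mid\cdot]$ against $P_s^0$. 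Changing measure from $P_s^0$ to $P_\star^0$ via $\tfrac{dP_\star^0}{dP_s^0}$ (legitimate under \ref{it:brcommonsupport}) and combining the two contributions, I expect $R_n=\sum_{s=1}^S R_{n,s}$ to factor into second-order products: writing the resulting kernel $\tfrac{dP_\star^0}{dP_s^0}-\tfrac{dP_\star^n}{dP_s^n}\tfrac{P_s^0(A=0\mid\cdot)}{P_s^n(A=0\mid\cdot)}$ as $\tfrac{dP_\star^0}{dP_s^0}\big(1-\tfrac{P_s^0(A=0\mid\cdot)}{P_s^n(A=0\mid\cdot)}\big)+\tfrac{P_s^0(A=0\mid\cdot)}{P_s^n(A=0\mid\cdot)}\big(\tfrac{dP_\star^0}{dP_s^0}-\tfrac{dP_\star^n}{dP_s^n}\big)$ splits $R_{n,s}$ into two products of differences. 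The first is bounded by a constant multiple of $\norm{\Rem_1(\mathcal{P}^n,\mathcal{P}^0)}_{2,P_\star^0}$ (using $\norm{\cdot}_{1,P_\star^0}\le\norm{\cdot}_{2,P_\star^0}$ and boundedness of $\mathbbmss{u}_s$), hence $o_P(\nmin^{-1/2})$ by \ref{it:CSconsistency}. For the second, I would use the identities $\E_s^0[D_{\UR,s}^{\beta^0,0}(O)\mid w]\equiv 0$ and $\E_s^0[D_{\UR,s}^{\beta^0,n}(O)\mid w]=\mathbbmss{u}_s(w)\tfrac{P_s^0(A=0\mid w)}{P_s^n(A=0\mid w)}\big(\E_s^0[Y|A=0,w]-\E_s^n[Y|A=0,w]\big)$, so that Cauchy--Schwarz bounds it by $\sup_{\beta}\norm{w\mapsto\E_s^0[D_{\UR,s}^{\beta,n}(O)\mid w]-\E_s^0[D_{\UR,s}^{\beta,0}(O)\mid w]}_{2,P_s^0}\,\norm{\tfrac{dP_\star^n}{dP_s^n}-\tfrac{dP_\star^0}{dP_s^0}}_{2,P_s^0}$, again $o_P(\nmin^{-1/2})$ by \ref{it:CSconsistency}. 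Summing the finitely many $s$ gives $R_n=o_P(\nmin^{-1/2})$ and hence the theorem.

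I expect the main obstacle to be precisely the bookkeeping in the last paragraph: exhibiting the exact factorization of $R_n$ into products of nuisance errors and matching each product to a declared term of \ref{it:CSconsistency}. This is the usual verification of the second-order (``double robustness'') structure of the gradient $\nabla \Omega_s^{\beta^0,\cdot}$; once carried out, assembling the three pieces of the decomposition is routine. I would close by noting that \ref{it:musmall} is handled identically, now with $\theta^0=-\infty$ and $\beta^0\equiv 0$ so that $\UR^{\beta^0,0}\equiv \lambda^0$, as remarked following the theorem statement.
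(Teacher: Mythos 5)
Your proposal is correct and follows essentially the same route as the paper's proof: reduce to $\widehat{\theta}^n=+\infty$ (hence $\widehat{\beta}^n\equiv\beta^0\equiv 1$) via Lemma~\ref{lem:QstarnWeq1}, control the second-order remainder by factoring it into the products of nuisance errors declared in \ref{it:CSconsistency} (the $\Rem_1$ piece and the density-ratio-times-$D_{\UR,s}$ piece, using the conditional mean-zero property of $D_{\UR,s}^{\beta^0,0}$), and swap $\nabla\Omega_s^{\widehat{\beta}^n,n}$ for $\nabla\Omega_s^{\beta^0,0}$ via \ref{it:thetaempproc}. The only cosmetic difference is that you organize the algebra as a generic one-step expansion while the paper first invokes the TMLE score equation $\sum_{s=1}^S Q_s^n\nabla\Omega_s^{\widehat{\beta}^n,n}=0$ to write $\widehat{\omega}^{\widehat{\beta}^n,n}=Q_\star^n\upsilon^n$; these are equivalent rearrangements.
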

The \hyperlink{proof:thmterm2mubig}{proof of Theorem~\ref*{thm:term2mubig}} is given in Appendix~\ref{app:Terms23mubig}.

\subsubsection{Term 1 under \ref{it:omegacontnotflat}/\ref{it:goodquantile}/\ref{it:betangood} and \ref{it:mubig}.}\label{app:term1}
We give a lemma before proving Theorem~\ref{thm:daparamal}. 
\begin{lemma} \label{lem:daremsmall}
Under the conditions of Theorem~\ref{thm:daparamal},
\begin{align*}
\gamma^{\widehat{\beta}^n,n}-\gamma^{\widehat{\beta}^n,0} + \sum_{s\in\mathcal{S}} P_s^0 \nabla \Gamma_s^{\widehat{\beta}^n,n} = o_P(\nmin^{-1/2}).
\end{align*}
\end{lemma}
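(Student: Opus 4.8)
The plan is to recognize the left-hand side as the exact second-order remainder in the first-order (von Mises) expansion of the map $\mathcal{P}'\mapsto \Gamma^{\widehat\beta^n}(\mathcal{P}')$ around $\mathcal{P}^0$, holding the random index $\beta=\widehat\beta^n$ fixed, and then to show this remainder is a sum of products of nuisance estimation errors, each $o_P(\nmin^{-1/2})$ by \ref{it:CSconsistency}. First I would dispose of the marginal ($s=\star$) gradient. Since $\nabla\Gamma_\star^{\beta,n}(w)=\UR^{\beta,n}(w)\VEm^n(w)-\gamma^{\beta,n}$, we have $P_\star^0\nabla\Gamma_\star^{\beta,n}=\E_\star^0[\UR^{\beta,n}\VEm^n]-\gamma^{\beta,n}$, so the $\gamma^{\beta,n}$ terms cancel and, using $\gamma^{\beta,0}=\E_\star^0[\UR^{\beta,0}\VEm^0]$, the target reduces to
\[
\E_\star^0\!\left[\UR^{\beta,n}\VEm^n-\UR^{\beta,0}\VEm^0\right]+\sum_{s=1}^S P_s^0\nabla\Gamma_s^{\beta,n},
\]
an object living entirely under $P_\star^0$ together with the trial-gradient means.

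Next I would take conditional expectations inside the trial gradients: since $\tfrac{dP_\star^n}{dP_s^n}(W)$, $\VEm^n(W)$ and $\UR^{\beta,n}(W)$ are functions of $W$, one gets $P_s^0\nabla\Gamma_s^{\beta,n}=\E_s^0[\tfrac{dP_\star^n}{dP_s^n}\psi_s]$ with $\psi_s=\VEm^n\,\E_s^0[D_{\UR,s}^{\beta,n}\mid\cdot]+\UR^{\beta,n}\,\E_s^0[D_{\VE,s}^n\mid\cdot]$. I would then change measure from $P_s^0$ to $P_\star^0$, replacing $\tfrac{dP_\star^n}{dP_s^n}$ by $\tfrac{dP_\star^0}{dP_s^0}$; the discrepancy is $\int(\tfrac{dP_\star^n}{dP_s^n}-\tfrac{dP_\star^0}{dP_s^0})\psi_s\,dP_s^0$, and since $\VEm^n$ and $\UR^{\beta,n}$ are bounded and $\E_s^0[D_{\UR,s}^{\beta,0}\mid\cdot]=\E_s^0[D_{\VE,s}^0\mid\cdot]=0$, Cauchy--Schwarz dominates it by the density-ratio products in \ref{it:CSconsistency}, hence $o_P(\nmin^{-1/2})$, uniformly in $\beta\in[0,1]$ via the stated supremum over $\beta$.

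After this the whole expression is an integral against $P_\star^0$, and the heart of the argument is to match the pieces to $\Rem_1$ and $\Rem_2$. Splitting $\UR^{\beta,n}\VEm^n-\UR^{\beta,0}\VEm^0=(\UR^{\beta,n}-\UR^{\beta,0})\VEm^n+\UR^{\beta,0}(\VEm^n-\VEm^0)$, I would pair the first summand with $\VEm^n\sum_s\E_s^0[D_{\UR,s}^{\beta,n}\mid W]$: the clever-covariate identity $\E_s^0[D_{\UR,s}^{\beta,n}\mid w]=[\ell_s+\beta(\mathbbmss{u}_s-\ell_s)]\tfrac{P_s^0(A=0\mid w)}{P_s^n(A=0\mid w)}(\E_s^0[Y\mid A{=}0,w]-\E_s^n[Y\mid A{=}0,w])$, together with the linearity of $w\mapsto\UR^{\beta,n}-\UR^{\beta,0}$ in the outcome-regression errors, shows their sum is a bounded multiple of $\Rem_1(\mathcal{P}^n,\mathcal{P}^0)(w)$, so multiplying by the bounded $\VEm^n$ and integrating gives $O(\|\Rem_1\|_{2,P_\star^0})=o_P(\nmin^{-1/2})$. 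For the second summand I would write $\UR^{\beta,n}=\UR^{\beta,0}+(\UR^{\beta,n}-\UR^{\beta,0})$, so that $\UR^{\beta,0}[(\VEm^n-\VEm^0)+\sum_s\E_s^0[D_{\VE,s}^n\mid W]]=\UR^{\beta,0}\,\Rem_2(\mathcal{P}^n,\mathcal{P}^0)$ is $O(\|\Rem_2\|_{2,P_\star^0})=o_P(\nmin^{-1/2})$, while the leftover cross term $(\UR^{\beta,n}-\UR^{\beta,0})\sum_s\E_s^0[D_{\VE,s}^n\mid W]$ I would handle by substituting $\sum_s\E_s^0[D_{\VE,s}^n\mid W]=\Rem_2-(\VEm^n-\VEm^0)$; its dominant part $-(\UR^{\beta,n}-\UR^{\beta,0})(\VEm^n-\VEm^0)$ is controlled by Cauchy--Schwarz and the pointwise bound $|\UR^{\beta,n}-\UR^{\beta,0}|\le C|\upsilon^n-\upsilon^0|$ (valid uniformly in $\beta$ under the $\ell_s\propto\mathbbmss{u}_s$ or $\ell_s\equiv0$ regime in which this estimator is run) via the $\|\upsilon^n-\upsilon^0\|\,\|\VEm^n-\VEm^0\|$ product in \ref{it:CSconsistency}, using \ref{it:brcommonsupport} to compare the $P_\star^0$ and $P_s^0$ norms.

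I expect the main obstacle to be the bookkeeping in the final step: organizing the product split so that each leftover is manifestly one of the four products controlled by \ref{it:CSconsistency}, and in particular exposing the $\upsilon$-times-$\VEm$ cross term through the $\Rem_2$ substitution rather than leaving an uncontrolled first-order term. Boundedness of $\VEm^n$ (above by $1$, below by \ref{it:VEnbdd} and \ref{it:brvebdd}) and of $\UR^{\beta,0},\UR^{\beta,n}\in[0,1]$ is used throughout to peel off the ``good'' factor before each Cauchy--Schwarz, and the randomness of $\widehat\beta^n$ causes no trouble because every $\beta$-dependent bound is taken uniformly over $\beta\in[0,1]$.
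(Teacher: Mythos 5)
Your proposal is correct and follows essentially the same route as the paper's proof: cancel the population-$\star$ gradient against $\gamma^{\widehat\beta^n,n}$, swap $\tfrac{dP_\star^n}{dP_s^n}$ for $\tfrac{dP_\star^0}{dP_s^0}$ at the cost of Cauchy--Schwarz terms controlled by \ref{it:CSconsistency}, and match the remaining $P_\star^0$-integrals to $\Rem_1$, $\Rem_2$, and the $\|\upsilon^n-\upsilon^0\|\,\|\VEm^n-\VEm^0\|$ cross product. The only difference is cosmetic bookkeeping (you attach $\Rem_2$ to $\UR^{\widehat\beta^n,0}$ where the paper attaches it to $\UR^{\widehat\beta^n,n}$, shifting the same cross term elsewhere), and your explicit remarks on the pointwise bound $|\UR^{\beta,n}-\UR^{\beta,0}|\lesssim|\upsilon^n-\upsilon^0|$ and on using \ref{it:brcommonsupport} to pass between $P_\star^0$ and $P_s^0$ norms make precise two points the paper leaves implicit.
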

\begin{proof}[Proof of Lemma~\ref{lem:daremsmall}]
Note that
\begin{align}
&\gamma^{\widehat{\beta}^n,n}-\gamma^{\widehat{\beta}^n,0} + \sum_{s\in\mathcal{S}} P_s^0 \nabla \Gamma_s^{\widehat{\beta}^n,n} \nonumber \\
&= \E_\star^0 \left[\UR^{\widehat{\beta}^n,n}(W)\VE^n(W)\right] - \gamma^{\widehat{\beta}^n,0} + \sum_{s=1}^S P_s^0 \nabla \Gamma_s^{\widehat{\beta}^n,n}. \label{eq:phidaremdecomp}
\end{align}
Furthermore,
\begin{align}
\sum_{s=1}^S &P_s^0 \nabla \Gamma_s^{\widehat{\beta}^n,n} \nonumber \\
=&\, \sum_{s=1}^S \E_s^0\left[\frac{dP_{\star}^n}{dP_s^n}(W)\left\{D_{\UR,s}^{\widehat{\beta}^n,n}(O)\VE^n(W)+\UR^{\widehat{\beta}^n,n}(W)D_{\VE,s}^n(O)\right]\right] \nonumber \\
=&\, \E_\star^0\left[\sum_{s=1}^S\left\{D_{\UR,s}^{\widehat{\beta}^n,n}(O)\VE^n(W)+\UR^{\widehat{\beta}^n,n}(W)D_{\VE,s}^n(O)\right\}\right] \label{eq:dPstartrue} \\
&+ \sum_{s=1}^S \E_s^0\left[\left\{\frac{dP_{\star}^n}{dP_s^n}(W)-\frac{dP_{\star}^0}{dP_s^0}(W)\right\}\left\{D_{\UR,s}^{\widehat{\beta}^n,n}(O)\VE^n(W)+\UR^{\widehat{\beta}^n,n}(W)D_{\VE,s}^n(O)\right\}\right]. \nonumber
\end{align}
The law of total expectation and the fact that $D_{\UR,s}^{\widehat{\beta}^n,0}$ and $D_{\VE,s}^0$ are mean zero when applied to a random variable $O_s$ drawn from the conditional distribution $O_s|W$ under $P_s^0$ yield that the latter line is bounded above by
\begin{align*}
\sum_{s=1}^S \E_s^0\Bigg[&\left\{\frac{dP_{\star}^n}{dP_s^n}(W)-\frac{dP_{\star}^0}{dP_s^0}(W)\right\} \\
&\times\Big\{\left(\E_s^0[D_{\UR,s}^{\widehat{\beta}^n,n}(O)|W]-\E_s^0[D_{\UR,s}^{\widehat{\beta}^n,0}(O)|W]\right)\VE^n(W) \\
&\hspace{2em}+\UR^{\widehat{\beta}^n,n}(W)\left(\E_s^0[D_{\VE,s}^n(O)|W]-\E_s^0[D_{\VE,s}^0(O)|W]\right)\Big\}\Bigg].
\end{align*}
The above is $o_P(\nmin^{-1/2})$ by Cauchy-Schwarz, \ref{it:CSconsistency}, \ref{it:VEnbdd}, and the fact that $\UR^{\widehat{\beta}^n,n}$, $\VE^n$ have bounded range. Furthermore, (\ref{eq:dPstartrue}) simplifies to
\begin{align*}
\E_\star^0&\left[\sum_{s=1}^S\left\{D_{\UR,s}^{\widehat{\beta}^n,n}(O)\VE^n(W)+\UR^{\widehat{\beta}^n,n}(W)D_{\VE,s}^n(O)\right\}\right] \\
=&\, \E_\star^0\left[\left\{[\UR^{\widehat{\beta}^n,0}(W)-\UR^{\widehat{\beta}^n,n}(W)]\VE^n(W)+\UR^{\widehat{\beta}^n,n}(W)[\VEm^0(W)-\VE^n(W)]\right\}\right] \\
&+\E_\star^0\left[\left\{\UR^{\widehat{\beta}^n,n}(W)-\UR^{\widehat{\beta}^n,0}(W) + \sum_{s=1}^S D_{\UR,s}^{\widehat{\beta}^n,n}(O)\right\}\VE^n(W)\right] \\
&+\E_\star^0\left[\UR^{\widehat{\beta}^n,n}(W)\Rem_2(\mathcal{P}^n,\mathcal{P}^0)\right].
\end{align*}
The final line is $o_P(\nmin^{-1/2})$ by \ref{it:CSconsistency}. The magnitude of the second line is upper bounded by $\E_\star^0\left[\Rem_1(\mathcal{P}^n,\mathcal{P}^0)(W)|\VE^n(W)|\right]$, which is also $o_P(\nmin^{-1/2})$ by \ref{it:CSconsistency}. We have thus established that (\ref{eq:phidaremdecomp}) rewrites as
\begin{align*}
&\gamma^{\widehat{\beta}^n,n} -\gamma^{\widehat{\beta}^n,0} + \sum_{s\in\mathcal{S}} P_s^0 \nabla \Gamma_s^{\widehat{\beta}^n,n} \\
% =&\, \E_\star^0 \left[\Ind_{\{W<\widehat{\beta}^n\}}\upsilon^n(W)\VE^n(W)\right] - \gamma^{\widehat{\beta}^n,0} + o_P(\nmin^{-1/2}) \\
% &+\E_\star^0\left[\Ind_{\{W<\widehat{\beta}^n\}}\left\{[\upsilon^0(W)-\upsilon^n(W)]\VE^n(W)+\upsilon^n(W)[\VEm^0(W)-\VE^n(W)]\right\}\right] \\
=&\, \E_\star^0\left[\{\UR^{\widehat{\beta}^n,n}(W)-\UR^{\widehat{\beta}^n,0}(W)\}\{\VEm^0(W)-\VE^n(W)\}\right] + o_P(\nmin^{-1/2}).
\end{align*}
By Cauchy-Schwarz and \ref{it:CSconsistency}, the leading term on the right is $o_P(\nmin^{-1/2})$, so that indeed the entire right-hand side is $o_P(\nmin^{-1/2})$.
\end{proof}
We now prove Theorem~\ref{thm:daparamal}.
\begin{proof}[Proof of Theorem~\ref{thm:daparamal}]\hypertarget{proof:thmdaparamal}{}
% By Lemma~\ref{lem:daremsmall},
% \begin{align*}
% \gamma^{\widehat{\beta}^n,n}-\gamma^{\widehat{\beta}^n,0} + \sum_{s\in\mathcal{S}} P_s^0 \nabla \Gamma_s^{\widehat{\beta}^n,n} = o_P(n^{-1/2}).
% \end{align*}
Rearranging the result of Lemma~\ref{lem:daremsmall} yields that
\begin{align*}
\gamma^{\widehat{\beta}^n,n}-\gamma^{\widehat{\beta}^n,0}&= - \sum_{s\in\mathcal{S}} P_s^0 \nabla \Gamma_s^{\widehat{\beta}^n,n} + o_P(\nmin^{-1/2}).
\end{align*}
Adding $\sum_{s\in\mathcal{S}} Q_s^n \nabla \Gamma_s^{\widehat{\beta}^n,n}$ to both sides shows that the one-step estimator $\widehat{\gamma}^{\widehat{\beta}^n,n}$ satisfies the identity
\begin{align*}
\widehat{\gamma}^{\widehat{\beta}^n,n}-\gamma^{\widehat{\beta}^n,0}&= \sum_{s\in\mathcal{S}} (Q_s^n-P_s^0) \nabla \Gamma_s^{\widehat{\beta}^n,n} + o_P(\nmin^{-1/2}).
\end{align*}
Applying \ref{it:thetaempproc} allows one to replace each instance of $\nabla \Gamma_s^{\widehat{\beta}^n,n}$ above by $\nabla \Gamma_s^{\beta^0,0}$, $s\in\mathcal{S}$. %Finally, the fact that $\widehat{\theta}^n\rightarrow \theta^0$ in probability under \ref{it:omegacontnotflat}/\ref{it:goodquantile}/\ref{it:betangood} (Lemma~\ref{lem:thetancons}) and $\widehat{\theta}^n=\theta^0$ with probability approaching $1$ under \ref{it:mubig} (Lemma~\ref{lem:QstarnWeq1}), combined with the fact that $\left\{\nabla \Gamma_s^{\beta,0} : \beta\right\}$, where $\beta$ varies over the class $\{w\mapsto \Ind_{\{w<\theta\}} + \eta \Ind_{\{w=\theta\}} : \theta\in[-\infty,\infty], \eta\in[0,1]\}$, is a Donsker class (with much to spare -- it is a VC class), allows one to show that, for each $s\in\mathcal{S}$, $(Q_s^n-P_s^0)\nabla \Gamma_s^{\widehat{\beta}^n,0} = (Q_s^n-P_s^0)\nabla \Gamma_s^{\beta^0,0} + o_P(\nmin^{-1/2})$. One method of proof for showing this consists of first conditioning on all observations from samples $s'\not=s$, and then invoking a maximal inequality to control the supremum of the process $\{n_s^{1/2}(Q_s^n-P_s^0)\nabla \Gamma_s^{\beta,0} : \norm{\beta-\beta_0}_{2,P_s^0}<\delta_n\}$, for $\delta_n$ a deterministic sequence shrinking to zero sufficiently slowly \citep[see Chapter 2.14 of][]{vanderVaartWellner1996}.
\end{proof}

\subsubsection{Term 3 under \ref{it:omegacontnotflat}/\ref{it:goodquantile}/\ref{it:betangood}.}\label{app:Terms23munotbig}
We first give two lemmas, and then we establish control of Term 3 by proving Theorem~\ref{thm:phitheta}.

\begin{lemma} \label{lem:thetaee}
If \ref{it:brvebdd} and \ref{it:CSconsistency}, then
\begin{align*}
\left|\widehat{\omega}^{\widehat{\beta}^n,n}-\omega^{\widehat{\beta}^n,0} - \sum_{s\in\mathcal{S}} (Q_s^n-P_s^0) \nabla \Omega_s^{\widehat{\beta}^n,n}\right|&= o_P(\nmin^{-1/2}).
\end{align*}
% If \ref{it:brvebdd}, \ref{it:omegacontnotflat}, \ref{it:CSconsistency} , \ref{it:VEmL2}, \ref{it:thetaDonsker}, and \ref{it:goodquantile},
% \begin{align*}
% \left|\widehat{\omega}^{\widehat{\beta}^n,n} - \omega^{\widehat{\beta}^n,0} - \sum_{s\in\mathcal{S}} (Q_s^n-P_s^0) \nabla \Omega_s^{\widehat{\beta}^n,0}\right| &= o_P(\nmin^{-1/2}).
% \end{align*}
\end{lemma}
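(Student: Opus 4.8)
The plan is to recognize the quantity inside the absolute value as the exact second-order remainder of the one-step estimator $\widehat{\omega}^{\widehat{\beta}^n,n}$ and to show it is negligible by a standard double-robustness argument that I carry out uniformly in $\beta$. First I would substitute the definition $\widehat{\omega}^{\beta,n}=\omega^{\beta,n}+\sum_{s\in\mathcal{S}}Q_s^n\nabla\Omega_s^{\beta,n}$ from Step~\ref{it:omegan}; the empirical-mean pieces then cancel against $\sum_{s\in\mathcal{S}}(Q_s^n-P_s^0)\nabla\Omega_s^{\widehat{\beta}^n,n}$, reducing the claim to showing that
\begin{align*}
R(\widehat{\beta}^n)\triangleq \omega^{\widehat{\beta}^n,n}-\omega^{\widehat{\beta}^n,0}+\sum_{s\in\mathcal{S}}P_s^0\nabla\Omega_s^{\widehat{\beta}^n,n}=o_P(\nmin^{-1/2}).
\end{align*}
I would analyze $R(\beta)$ for a \emph{generic} $\beta:\mathcal{W}\to[0,1]$ and only substitute $\beta=\widehat{\beta}^n$ at the very end, so that the data-dependence of $\widehat{\beta}^n$ never forces me to treat it as a fixed index.

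Second, I would evaluate the population means of the gradients. The $s=\star$ term contributes $P_\star^0\nabla\Omega_\star^{\beta,n}=\E_\star^0[\UR^{\beta,n}(W)]-\omega^{\beta,n}$, whose $-\omega^{\beta,n}$ cancels the leading $\omega^{\beta,n}$ and, using $\omega^{\beta,0}=\E_\star^0[\UR^{\beta,0}(W)]$, leaves $\E_\star^0[\UR^{\beta,n}(W)-\UR^{\beta,0}(W)]$. Writing $\UR$ in terms of the fixed weights $b_s^\beta(w)\triangleq\ell_s(w)+\beta(w)\{\mathbbmss{u}_s(w)-\ell_s(w)\}$, this difference equals $\sum_{s=1}^S\E_\star^0[b_s^\beta(W)\{\E_s^n[Y|A=0,W]-\E_s^0[Y|A=0,W]\}]$. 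For $s=1,\dots,S$ I would split $P_s^0\nabla\Omega_s^{\beta,n}$ by adding and subtracting the true density ratio $dP_\star^0/dP_s^0$. After conditioning on $W$ and using that $\E_s^0[D_{\UR,s}^{\beta,0}(O)\mid W]=0$, the ``difference-of-ratios'' piece is a product of two $L^2(P_s^0)$ errors, bounded by Cauchy--Schwarz by the first term listed in \ref{it:CSconsistency}, crucially taken as a supremum over $\beta$.

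Third, for the ``true-ratio'' piece I would apply the change of measure $\E_s^0[(dP_\star^0/dP_s^0)(W)g(W)]=\E_\star^0[g(W)]$ together with the conditional computation $\E_s^0[D_{\UR,s}^{\beta,n}(O)\mid W=w]=b_s^\beta(w)\frac{P_s^0(A=0|w)}{P_s^n(A=0|w)}(\E_s^0[Y|A=0,w]-\E_s^n[Y|A=0,w])$. Adding this to the $\E_\star^0[\UR^{\beta,n}-\UR^{\beta,0}]$ term produces an exact cancellation collapsing $R(\beta)$ to the doubly robust remainder $\sum_{s=1}^S\E_\star^0[b_s^\beta(W)\{1-\frac{P_s^0(A=0|W)}{P_s^n(A=0|W)}\}\{\E_s^n[Y|A=0,W]-\E_s^0[Y|A=0,W]\}]$, whose magnitude is at most a constant times $\E_\star^0[\Rem_1(\mathcal{P}^n,\mathcal{P}^0)(W)]\le c\norm{\Rem_1(\mathcal{P}^n,\mathcal{P}^0)}_{2,P_\star^0}$, since each $b_s^\beta$ is a convex combination of the fixed functions $\ell_s,\mathbbmss{u}_s$ and hence bounded by $|\ell_s|\vee|\mathbbmss{u}_s|$ independently of $\beta$. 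This is $o_P(\nmin^{-1/2})$ by \ref{it:CSconsistency}.

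The main obstacle, and the only place requiring genuine care, is the randomness of $\widehat{\beta}^n$: being estimated, it cannot be treated as a fixed index in a remainder analysis. The resolution I would emphasize is that every bound above is \emph{uniform} in $\beta$: the leading doubly robust term factors through $\Rem_1$, which is free of $\beta$ once $|b_s^\beta|$ is bounded by $|\ell_s|\vee|\mathbbmss{u}_s|$, while the Cauchy--Schwarz cross term is controlled precisely by the supremum-over-$\beta$ form in which \ref{it:CSconsistency} is stated. Consequently all remainders are $o_P(\nmin^{-1/2})$ simultaneously over $\beta:\mathcal{W}\to[0,1]$, so evaluating at $\beta=\widehat{\beta}^n$ preserves the rate and the lemma follows.
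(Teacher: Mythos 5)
Your proposal is correct and follows essentially the same route as the paper's proof: cancel the empirical means to reduce the claim to the population-level remainder $\omega^{\widehat{\beta}^n,n}-\omega^{\widehat{\beta}^n,0}+\sum_{s}P_s^0\nabla\Omega_s^{\widehat{\beta}^n,n}$, split each $s=1,\ldots,S$ term into a true-density-ratio piece (which, after conditioning on $W$ and changing measure, cancels against $\E_\star^0[\UR^{\widehat{\beta}^n,n}-\UR^{\widehat{\beta}^n,0}]$ up to the doubly robust remainder controlled by $\Rem_1$) and a difference-of-ratios piece controlled by Cauchy--Schwarz and the supremum-over-$\beta$ form of \ref{it:CSconsistency}. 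Your explicit emphasis on uniformity in $\beta$ is a point the paper leaves implicit, but the decomposition and the bounds are the same.
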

\begin{proof}[Proof of Lemma~\ref{lem:thetaee}]
% Note that
% \begin{align*}
% \sum_{s\in\mathcal{S}} Q_s^n \nabla \Omega_s^{\widehat{\beta}^n,n}&= \sum_{s\in\mathcal{S}} P_s^0 \nabla \Omega_s^{\widehat{\beta}^n,n} + \sum_{s\in\mathcal{S}} (Q_s^n-P_s^0) \nabla \Omega_s^{\widehat{\beta}^n,n} \\
% &= \sum_{s\in\mathcal{S}} P_s^0 \nabla \Omega_s^{\widehat{\beta}^n,n} + \sum_{s\in\mathcal{S}} (Q_s^n-P_s^0) \nabla \Omega_s^{\widehat{\beta}^n,0} + o_P(\nmin^{-1/2}),
% \end{align*}
% where the latter equality used the fact that **** \ref{it:thetaempproc}.
Note that
\begin{align*}
&\omega^{\widehat{\beta}^n,n} + \sum_{s\in\mathcal{S}} P_s^0 \nabla \Omega_s^{\widehat{\beta}^n,n} \\
=&\, \E_\star^0\left[\UR^{\widehat{\beta}^n,n}(W)\right] + \sum_{s=1}^S \E_s^0\left[\frac{dP_{\star}^n}{dP_s^n}(W)D_{\UR,s}^{\widehat{\beta}^n,n}(O)\right] \\
=&\, \omega^{\widehat{\beta}^n,0} + \E_\star^0\left[\UR^{\widehat{\beta}^n,n}(W) - \UR^{\widehat{\beta}^n,0}(W) + \sum_{s=1}^S \E_s^0\left[D_{\UR,s}^{\widehat{\beta}^n,n}(O)\middle|W\right]\right] \\
&+ \sum_{s=1}^S \E_s^0\left[\left\{\frac{dP_{\star}^n}{dP_s^n}(W)-\frac{dP_{\star}^0}{dP_s^0}(W)\right\}\left\{\E_s^0\left[D_{\UR,s}^{\widehat{\beta}^n,n}(O)\middle|W\right] - \E_s^0\left[D_{\UR,s}^{\widehat{\beta}^n,0}(O)\middle|W\right]\right\}\right],
\end{align*}
where the final equality uses that $\E_s^0\left[D_{\UR,s}^{\widehat{\beta}^n,0}(O)\middle|W\right] = 0$. The second line is $o_P(\nmin^{-1/2})$ by \ref{it:CSconsistency} and Cauchy-Schwarz. The triangle inequality readily yields that
\begin{align*}
\E_\star^0\left|\UR^{\widehat{\beta}^n,n}(W) - \UR^{\widehat{\beta}^n,0}(W) + \sum_{s=1}^S \E_s^0\left[D_{\UR,s}^{\widehat{\beta}^n,n}(O)\middle|W\right]\right|&\le  \E_\star^0\left|\Rem_1(\mathcal{P}^n,\mathcal{P}^0)(W)\right|,
\end{align*}
and the right-hand side is $o_P(\nmin^{-1/2})$ by \ref{it:CSconsistency}. The fact that $\widehat{\omega}^{\widehat{\beta}^n,n}\triangleq \omega^{\widehat{\beta}^n,n} + \sum_{s\in\mathcal{S}} Q_s^n \nabla \Omega_s^{\widehat{\beta}^n,n}$ completes the proof.
\end{proof}

\begin{proof}[Proof of Theorem~\ref{thm:phitheta}]\hypertarget{proof:thmphitheta}{}
Note that
\begin{align*}
\gamma^{\widehat{\beta}^n,0} - \gamma^{\beta^0,0}=& \E_\star^0\left[\left\{\UR^{\widehat{\beta}^n,0}(W)-\UR^{\beta^0,0}(W)\right\}\VEm^0(W)\right] \\
=& \theta^0\left[\omega^{\widehat{\beta}^n,0} - \omega^{\beta^0,0}\right] + \left\{\gamma^{\widehat{\beta}^n,0} - \gamma^{\beta^0,0}-\theta^0\left[\omega^{\widehat{\beta}^n,0} - \omega^{\beta^0,0}\right]\right\}.
\end{align*}
By \ref{it:betangood}, the latter term above is $o_P(\nmin^{-1/2})$. By Lemma~\ref{lem:thetaee} and \ref{it:goodquantile},
\begin{align*}
\omega^{\widehat{\beta}^n,0}&= \omega^{\widehat{\beta}^n,n} + \sum_{s\in\mathcal{S}} Q_s^n \nabla \Omega_s^{\widehat{\beta}^n,n} - \sum_{s\in\mathcal{S}} (Q_s^n-P_s^0) \nabla \Omega_s^{\widehat{\beta}^n,0} + o_P(\nmin^{-1/2}) \\
&= \mu - \sum_{s\in\mathcal{S}} (Q_s^n-P_s^0) \nabla \Omega_s^{\widehat{\beta}^n,0} + o_P(\nmin^{-1/2}).
\end{align*}
By \ref{it:omegacontnotflat}, $\omega^{\beta^0,0}=\mu$, and so
\begin{align*}
\theta^0\left[\omega^{\widehat{\beta}^n,0} - \omega^{\beta^0,0}\right]
&= - \theta^0\sum_{s\in\mathcal{S}} (Q_s^n-P_s^0) \nabla \Omega_s^{\widehat{\beta}^n,0} + o_P(\nmin^{-1/2}).
\end{align*}
By \ref{it:thetaempproc}, we can replace each $\Omega_s^{\widehat{\beta}^n,0}$ above by $\Omega_s^{\beta^0,0}$.
\end{proof}

\subsubsection{Terms 2 and 3 under \ref{it:mubig}.}\label{app:Terms23mubig}
We first prove Lemma~\ref{lem:QstarnWeq1}, thereby showing that Term 3 is equal to zero with probability approaching $1$. We then establish control over Term 2 by proving Theorem~\ref{thm:term2mubig}.
\begin{proof}[Proof of Lemma~\ref{lem:QstarnWeq1}]\hypertarget{proof:lemQstarnWeq1}{}
By \ref{it:mubig}, $\theta^0=+\infty$. We show that $\widehat{\omega}^{\beta^0,n} \le \mu$, and consequently $\widehat{\beta}^n=+\infty$, with probability approaching one. Note that
\begin{align*}
\widehat{\omega}^{\beta^0,n} - \omega^{\beta^0,0}&= \sum_{s\in\mathcal{S}} (Q_s^n - P_s^0) \nabla \Omega_s^{\beta^0,n} + \left\{\widehat{\omega}^{\beta^0,n} - \omega^{\beta^0,0} + \sum_{s\in\mathcal{S}} P_s^0 \nabla \Omega_s^{\beta^0,n}\right\}.
\end{align*}
The first term on the right is $O_P(\nmin^{-1/2})$ by \ref{it:thetaDonsker}, and the latter term is $o_P(\nmin^{-1/2})$ by \ref{it:CSconsistency}. Hence, $\widehat{\omega}^{\beta^0,n} = \omega^{\beta^0,0} + O_P(\nmin^{-1/2})$. By \ref{it:mubig}, $\omega^{\beta^0,0}<\mu$, and thus $\widehat{\omega}^{\beta^0,n}< \mu$ with probability approaching one. As $\beta^0\equiv 1$, it must be the case that $\widehat{\theta}^n=+\infty$.
\end{proof}

\begin{proof}[Proof of Theorem~\ref{thm:term2mubig}]\hypertarget{proof:thmterm2mubig}{}
By Lemma~\ref{lem:QstarnWeq1}, $\widehat{\theta}^n=+\infty$ with probability approaching $1$. Suppose this holds. In this case the TMLE step ensures that $\widehat{\omega}^{\widehat{\beta}^n,n}=Q_\star^n \upsilon^n$. Furthermore, \ref{it:mubig} implies that $\omega^{\beta^0,0}=P_\star^0 \upsilon^0$. By the definition of $\Rem_1$, we have that
\begin{align*}
\widehat{\omega}^{\widehat{\beta}^n,n} - \omega^{\beta^0,0}&= (Q_\star^n-P_\star^0) \upsilon^n + P_\star^0 (\upsilon^n-\upsilon^0) \\
&= (Q_\star^n-P_\star^0) \nabla \Omega_\star^{\widehat{\beta}^n,n} - \sum_{s=1}^S P_\star^0 D_{\UR,s}^{\widehat{\beta}^n,n} + P_\star^0 \Rem_1(\mathcal{P}^n,\mathcal{P}^0).
\end{align*}
By \ref{it:CSconsistency}, $P_\star^0 \Rem_1(\mathcal{P}^n,\mathcal{P}^0)=o_P(\nmin^{-1/2})$. Furthermore,
\begin{align*}
\sum_{s=1}^S P_\star^0 D_{\UR,s}^{\widehat{\beta}^n,n}&= \sum_{s=1}^S P_s^0 \nabla \Omega_s^{\widehat{\beta}^n,n} + \sum_{s=1}^S P_s^0 \left[\frac{dP_\star^0}{dP_s^0}-\frac{dP_\star^n}{dP_s^n}\right] D_{\UR,s}^{\widehat{\beta}^n,n}.
\end{align*}
Using that each $D_{\UR,s}^{\beta^0}(O)$ is mean zero for $O\sim P_s^0$ (conditionally on $w$), the $D_{\UR,s}^{\widehat{\beta}^n,n}$ on the right-hand side above can be replaced by $D_{\UR,s}^{\widehat{\beta}^n,n}-D_{\UR,s}^{\beta^0}$, thereby showing that the latter term above is $o_P(\nmin^{-1/2})$ by \ref{it:CSconsistency}. Hence,
\begin{align*}
\widehat{\omega}^{\widehat{\beta}^n,n} - \omega^{\beta^0,0}&= (Q_\star^n-P_\star^0) \nabla \Omega_\star^{\widehat{\beta}^n,n} - \sum_{s=1}^S P_s^0 \nabla \Omega_s^{\widehat{\beta}^n,n} + o_P(\nmin^{-1/2}).
\end{align*}
By Steps \ref{it:tmle1}, \ref{it:tmle2}, and \ref{it:tmle3} of our estimation procedure and the fact that $\widehat{\theta}^n=+\infty$, $\sum_{s=1}^S Q_s^n \nabla \Omega_s^{\widehat{\beta}^n,n}=0$ \citep[these steps represent a standard logistic fluctuation submodel for a TMLE, see][]{vanderLaan&Rose11}. This shows that $\widehat{\omega}^{\widehat{\beta}^n,n} - \omega^{\beta^0,0}=\sum_{s\in\mathcal{S}} (Q_s^0-P_s^0) \nabla \Omega_s^{\widehat{\beta}^n,n} + o_P(\nmin^{-1/2})$. By \ref{it:thetaempproc}, it follows that $\widehat{\omega}^{\widehat{\beta}^n,n} - \omega^{\beta^0,0}=\sum_{s\in\mathcal{S}} (Q_s^0-P_s^0) \nabla \Omega_s^{\beta^0,0} + o_P(\nmin^{-1/2})$.
\end{proof}

\section{More interpretable condition for \ref{it:betangood}} \label{app:moreinterpretablebetangood}
We now provide a more interpretable sufficient condition for \ref{it:betangood}. First note that $\Rem_3^n$ rewrites as
\begin{align*}
\Rem_3^n = \E_\star^0\left[\left\{\UR^{\widehat{\beta}^n,0}(W)-\UR^{\beta^0,0}(W)\right\}\left\{\VEm^0(W)-\theta^0\right\}\right].
\end{align*}
The expression in the expectation above is small when either $\widehat{\beta}^n(W)=\beta^0(W)$ or $\VEm^0(W)$ is close to $\theta^0$. This observation gives some hope that the above expectation will be small when $\VEm^n$ and $\widehat{\theta}^n$ are good estimates of $w\mapsto \VEm^0(w)$ and $\theta^0$, because it will likely only be most difficult to correctly specify $\Ind_{\{\VEm^n(W)\ge\widehat{\theta}^n\}}$ when $\VEm^0(W)-\theta^0$ is small.

We now make this claim precise. The following margin condition is analogous to that used by \cite{Audibert&Tsybakov2007} in the classification context. In particular, for each $\alpha>0$ we define Condition $\alpha$ as follows:
\begin{align*}
P_0^\star\left\{0<|\VEm^0(W)-\theta^0|\le t\right\}&\lesssim t^{\alpha}\textnormal{ for all }t>0.
\end{align*}
The above states that $\VEm^0(W)$ does not place too much mass in the neighborhood of the decision boundary $\theta^0$ that appears in the worst-case unvaccinated risk $\UR^{\beta^0,0}$. The following theorem is an adaptation of Lemma 5.2 in \cite{Audibert&Tsybakov2007}. A similar adaptation was given in \cite{Luedtke&vanderLaan2016}.
\begin{lemma}
If \ref{it:omegacontnotflat} and Condition $\alpha$ holds for a given $\alpha>0$, then
\begin{align*}
&|\Rem_3^n|\lesssim \norm{(\VEm^n(W)-\widehat{\theta}^n)-(\VEm^0(W)-\theta^0)}_{2,P_\star^0}^{2(1+\alpha)/(2+\alpha)} \\
&|\Rem_3^n|\lesssim \norm{(\VEm^n(W)-\widehat{\theta}^n)-(\VEm^0(W)-\theta^0)}_{\infty,P_\star^0}^{1+\alpha}.
\end{align*}
\end{lemma}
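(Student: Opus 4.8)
The plan is to adapt the classification-style peeling argument of \cite{Audibert&Tsybakov2007} to the weighted margin functional $\Rem_3^n$. I would start from the rewritten expression $\Rem_3^n = \E_\star^0\!\left[\{\UR^{\widehat{\beta}^n,0}(W)-\UR^{\beta^0,0}(W)\}\{\VEm^0(W)-\theta^0\}\right]$ given just above the lemma, and use the defining identity $\UR^{\beta,0}(w)=\lambda^0(w)+[\upsilon^0(w)-\lambda^0(w)]\beta(w)$ to obtain
\[
\UR^{\widehat{\beta}^n,0}(w)-\UR^{\beta^0,0}(w)=[\upsilon^0(w)-\lambda^0(w)][\widehat{\beta}^n(w)-\beta^0(w)].
\]
Writing $g^0\triangleq \VEm^0-\theta^0$ and $\widehat{g}^n\triangleq \VEm^n-\widehat{\theta}^n$, and using $0\le \upsilon^0-\lambda^0\le 1$, this reduces everything to controlling $\E_\star^0[|\widehat{\beta}^n(W)-\beta^0(W)|\,|g^0(W)|]$.

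The crux is the geometric fact that, wherever the integrand is nonzero, the margin $|g^0|$ is dominated by the estimation error $|\widehat{g}^n-g^0|$. Under \ref{it:omegacontnotflat} we have $P_\star^0(g^0=0)=0$, so the tie term $\eta^0\Ind_{\{g^0=0\}}$ in $\beta^0$ drops out $P_\star^0$-a.s.\ and $\beta^0$ coincides with $\Ind_{\{g^0<0\}}$. I would then verify the pointwise bound $|\widehat{\beta}^n(w)-\beta^0(w)|\,|g^0(w)|\le |g^0(w)|\Ind_{\{0<|g^0(w)|\le |\widehat{g}^n(w)-g^0(w)|\}}$ case by case: when $\widehat{g}^n(w)\neq 0$ and the two indicators disagree, $g^0$ and $\widehat{g}^n$ have opposite signs so $|g^0|\le |\widehat{g}^n-g^0|$; when $\widehat{g}^n(w)=0$ (the tie set of $\widehat{\beta}^n$), one has $\widehat{g}^n-g^0=-g^0$, so the bound holds trivially and the factor $\widehat{\eta}^n\in[0,1]$ is absorbed by $|\widehat{\beta}^n-\beta^0|\le 1$. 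This yields
\[
|\Rem_3^n|\le \E_\star^0\left[|g^0(W)|\Ind_{\{0<|g^0(W)|\le |\widehat{g}^n(W)-g^0(W)|\}}\right].
\]

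From here both bounds follow from Condition $\alpha$. For the $L^\infty$ bound, set $\delta_\infty\triangleq \norm{\widehat{g}^n-g^0}_{\infty,P_\star^0}$; on the integration event $|g^0|\le |\widehat{g}^n-g^0|\le \delta_\infty$, so the right-hand side is at most $\delta_\infty\,P_\star^0\{0<|g^0(W)|\le \delta_\infty\}\lesssim \delta_\infty^{1+\alpha}$. For the $L^2$ bound, set $\delta\triangleq \norm{\widehat{g}^n-g^0}_{2,P_\star^0}$ and split the expectation at a free level $t>0$: on $\{0<|g^0|\le t\}$ the integrand is at most $t$ with measure $\lesssim t^\alpha$, contributing $\lesssim t^{1+\alpha}$; on $\{|g^0|>t\}$ the integration event forces $|\widehat{g}^n-g^0|>t$, so that piece is at most $\E_\star^0[|\widehat{g}^n-g^0|\Ind_{\{|\widehat{g}^n-g^0|>t\}}]\le \E_\star^0[|\widehat{g}^n-g^0|^2]/t=\delta^2/t$, using $x\Ind_{\{x>t\}}\le x^2/t$. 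Optimizing $t^{1+\alpha}+\delta^2/t$ at $t\asymp \delta^{2/(2+\alpha)}$ balances both terms at order $\delta^{2(1+\alpha)/(2+\alpha)}$, which is the claimed rate.

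The main obstacle I anticipate is the bookkeeping on the two tie sets $\{g^0=0\}$ and $\{\widehat{g}^n=0\}$: the former is handled by \ref{it:omegacontnotflat} (it is $P_\star^0$-null), but the latter need not be null, so one must check that the pointwise margin inequality survives the fractional weight $\widehat{\eta}^n$ there. Once that inequality is in hand, the $L^\infty$ estimate and the peeling-plus-optimization for $L^2$ are routine applications of the margin condition and the truncation inequality.
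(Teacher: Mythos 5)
Your proof is correct and is precisely the argument the paper intends: the paper omits a written proof, citing only Lemma 5.2 of Audibert and Tsybakov (2007), and your reduction to $\E_\star^0\big[|g^0(W)|\Ind_{\{0<|g^0(W)|\le |\widehat{g}^n(W)-g^0(W)|\}}\big]$ followed by the margin-condition/truncation/optimization step is exactly that adaptation (with the tie sets handled correctly, since \ref{it:omegacontnotflat} makes $\{\VEm^0(W)=\theta^0\}$ a $P_\star^0$-null set and the $\widehat{\eta}^n$-weighted tie set of $\widehat{\beta}^n$ is absorbed as you note). No gaps.
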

Shortly we will show that we will show that we can get a faster rate of estimation on the univariate parameter $\widehat{\theta}^n$ than on the infinite-dimensional parameter $w\mapsto \VEm^0(w)$ when $\alpha\ge 1$. Hence, in this case the above allows us to map our rate of convergence of $w\mapsto \VEm^0(w)$ into a rate of decay for the remainder term $\Rem_3^n$. Suppose that $\alpha=1$, which holds if $w\mapsto \VEm^0(w)$ has bounded Lebesgue density in a neighborhood of $\theta^0$. The supremum norm result in the above lemma suggests that $\VEm^n-\widehat{\theta}^n$ converging to $\VEm^0-\theta^0$ at a rate faster than $\nmin^{-1/4}$ will suffice to make $\Rem_3^n$ negligible.

We close this section by showing the rate of estimation that we can obtain on $\theta^0$ provided the following additional regularity condition is satisfied:
\begin{enumerate}[resume*=en:regconds]
	\item\label{it:VEmL2} $\norm{\VEm^n-\VEm^0}_{2,P_\star^0}=o_P(1)$.
\end{enumerate}
The above is very mild if \ref{it:CSconsistency} holds.
\begin{lemma} \label{lem:thetancons}
If \ref{it:brvebdd}, \ref{it:omegacontnotflat}, \ref{it:CSconsistency}, \ref{it:thetaDonsker}, \ref{it:goodquantile}, and \ref{it:VEmL2}, then $\widehat{\theta}^n=\theta^0 + o_P(1)$. If Condition $\alpha$ also holds, then we have the stronger result that
\begin{align*}
\widehat{\theta}^n - \theta^0&= O_P\left(\nmin^{-1/2}+\norm{\VEm^n-\VEm^0}_{2,P_\star^0}^{\frac{2\alpha}{\alpha+1}}\right).
\end{align*}
\end{lemma}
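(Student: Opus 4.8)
The plan is to treat $\widehat\theta^n$ as a generalized empirical quantile, i.e. the crossing point of the estimating function $\theta\mapsto\widehat\omega^{\,w\mapsto\Ind_{\{\VEm^n(w)<\theta\}},n}-\mu$, and to compare it with the population crossing point of $\theta\mapsto F^0(\theta)-\mu$, where $F^0(\theta)\triangleq\omega^{\,w\mapsto\Ind_{\{\VEm^0(w)<\theta\}},0}$. Under \ref{it:omegacontnotflat}, Lemma~\ref{lem:F0notflat} shows $F^0$ is continuous and strictly increasing in a neighborhood of $\theta^0$ with derivative bounded away from $0$ and $\infty$, and $F^0(\theta^0)=\mu$; hence $F^0$ has a locally bi-Lipschitz inverse at $\mu$. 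Both claims then reduce to controlling how far $\widehat\omega^{\,w\mapsto\Ind_{\{\VEm^n<\theta\}},n}$ sits from $F^0(\theta)$ near $\theta^0$.

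For consistency I would establish $\sup_{\theta}\lvert\widehat\omega^{\,w\mapsto\Ind_{\{\VEm^n<\theta\}},n}-F^0(\theta)\rvert=o_P(1)$ over $\theta$ in a fixed neighborhood of $\theta^0$, by splitting the difference into three pieces. First, the one-step bias $\widehat\omega^{\beta,n}+\sum_s P_s^0\nabla\Omega_s^{\beta,n}-\omega^{\beta,0}$ is $o_P(\nmin^{-1/2})$ uniformly in $\beta$ by the $\Rem_1$/Cauchy--Schwarz bookkeeping already carried out in the proof of Lemma~\ref{lem:thetaee} together with \ref{it:CSconsistency}. Second, the empirical-process piece $\sum_s(Q_s^n-P_s^0)\nabla\Omega_s^{\,\Ind_{\{\VEm^n<\theta\}},n}$ is uniformly $o_P(1)$ since $\{w\mapsto\Ind_{\{\VEm^n(w)<\theta\}}:\theta\in\mathbb R\}$ is a VC class (lower level sets of the fixed function $\VEm^n$), so the maximal inequality invoked after \ref{it:thetaDonsker} applies after conditioning out the other trials. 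Third, the ``wrong level set'' piece $\omega^{\,\Ind_{\{\VEm^n<\theta\}},0}-F^0(\theta)=\E_\star^0[(\upsilon^0-\lambda^0)(\Ind_{\{\VEm^n<\theta\}}-\Ind_{\{\VEm^0<\theta\}})]$ is $o_P(1)$ uniformly by \ref{it:VEmL2}: on the symmetric difference one has $\lvert\VEm^0-\theta\rvert\le\lvert\VEm^n-\VEm^0\rvert$, so a Markov bound on $\{\lvert\VEm^n-\VEm^0\rvert>\tau\}$ plus the vanishing $P_\star^0$-mass of $\{\lvert\VEm^0-\theta\rvert\le\tau\}$ (no atom at $\theta^0$ under \ref{it:omegacontnotflat}) controls it. Feeding $\widehat\omega^{\widehat\beta^n,n}=\mu+o_P(\nmin^{-1/2})$ from \ref{it:goodquantile} (the $\widehat\eta^n$ boundary term being negligible because $P_\star^0\{\VEm^0=\theta^0\}=0$) into this uniform bound gives $F^0(\widehat\theta^n)=\mu+o_P(1)$, and bi-Lipschitz inversion yields $\widehat\theta^n=\theta^0+o_P(1)$.

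For the rate I would first record the estimating equation evaluated at the truth. Combining \ref{it:goodquantile}, Lemma~\ref{lem:thetaee}, and the $O_P(\nmin^{-1/2})$ control of \ref{it:thetaDonsker} gives $\omega^{\widehat\beta^n,0}-\mu=O_P(\nmin^{-1/2})$; writing $q(\theta)\triangleq\E_\star^0[(\upsilon^0-\lambda^0)\Ind_{\{\VEm^0<\theta\}}]$ and letting $q^n$ be its analogue with $\VEm^n$ in place of $\VEm^0$, this says $q^n(\widehat\theta^n)-q(\theta^0)=O_P(\nmin^{-1/2})$. I then decompose $q^n(\widehat\theta^n)-q(\theta^0)=[q^n(\widehat\theta^n)-q(\widehat\theta^n)]+[q(\widehat\theta^n)-q(\theta^0)]$; the second bracket equals $c\,(\widehat\theta^n-\theta^0)(1+o_P(1))$ by bi-Lipschitzness, which is legitimate now that consistency has placed $\widehat\theta^n$ in the good neighborhood. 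It remains to bound the level-set mismatch at the random threshold, $\lvert q^n(\widehat\theta^n)-q(\widehat\theta^n)\rvert\le\E_\star^0[(\upsilon^0-\lambda^0)\Ind_{\{\lvert\VEm^0-\widehat\theta^n\rvert\le\lvert\VEm^n-\VEm^0\rvert\}}]$. I would handle this by the margin/peeling argument of Lemma~5.2 of \cite{Audibert&Tsybakov2007}, exactly as in the preceding lemma bounding $\Rem_3^n$: split on whether $\lvert\VEm^n-\VEm^0\rvert$ exceeds a threshold $\tau$, use Condition $\alpha$ to bound $P_\star^0\{\lvert\VEm^0-\theta^0\rvert\le\tau\}\lesssim\tau^\alpha$ after re-centering $\{\lvert\VEm^0-\widehat\theta^n\rvert\le\tau\}$ inside $\{\lvert\VEm^0-\theta^0\rvert\le\tau+o_P(1)\}$, and optimize $\tau$ against $\norm{\VEm^n-\VEm^0}_{2,P_\star^0}$ to obtain the stated exponent $2\alpha/(\alpha+1)$. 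Rearranging the resulting self-referential inequality $\lvert\widehat\theta^n-\theta^0\rvert\lesssim\nmin^{-1/2}+(\text{peeling bound})$ then delivers the claim.

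The main obstacle is this last margin step. Two difficulties compound. The mismatch term is evaluated at the random, data-dependent threshold $\widehat\theta^n$ rather than at $\theta^0$, which is why consistency must be secured first and then used to transfer Condition $\alpha$ from $\theta^0$ to $\widehat\theta^n$. And $\VEm^n-\VEm^0$ is correlated with the geometry of $\VEm^0$ near $\theta^0$, so one cannot simply integrate the margin bound against an independent error and must instead use the peeling device to decouple the magnitude of the estimation error from the local mass of $\VEm^0(W)$ around the decision boundary. Carrying out that optimization so that the exponent comes out as claimed, rather than the weaker exponent that a crude single-cutoff split would give, is the delicate point, and it is precisely there that Condition $\alpha$ and the choice of peeling threshold do the work.
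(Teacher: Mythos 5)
Your proposal is correct and follows essentially the same route as the paper's proof: the estimating-equation identity from \ref{it:goodquantile}, Lemma~\ref{lem:thetaee}, and \ref{it:thetaDonsker}; the level-set mismatch term $\omega^{\widehat{\beta}^n,0}-\omega^{\beta^{n,0},0}$ controlled by the Markov split with threshold $t$ and, under Condition $\alpha$, the choice $t_n=\norm{\VEm^n-\VEm^0}_{2,P_\star^0}^{2/(\alpha+1)}$; and inversion via the local strict monotonicity of $\theta\mapsto\omega^{w\mapsto\Ind_{\{\VEm^0(w)<\theta\}},0}$ from Lemma~\ref{lem:F0notflat}. The only differences are presentational (a uniform-in-$\theta$ framing where the paper evaluates directly at the random $\widehat{\beta}^n$, and an explicit recentering of the margin condition at $\widehat{\theta}^n$ that the paper glosses over).
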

Suppose, as is typical, that $\norm{\VEm^n-\VEm^0}_{2,P_\star^0}$ shrinks slower than $\nmin^{-1/2}$. In this case, if $\alpha\ge 1$, then the above gives conditions under which $\widehat{\theta}^n=\theta^0 + O_P(\norm{\VEm^n-\VEm^0}_{2,P_\star^0})$, where the big-oh can be replaced by a little-oh if $\alpha>1$.
\begin{proof}[Proof of Lemma~\ref{lem:thetancons}]
By \ref{it:goodquantile}, Lemma~\ref{lem:thetaee}, and \ref{it:thetaDonsker},
\begin{align*}
o_P(\nmin^{-1/2})=\widehat{\omega}^{\widehat{\beta}^n,n} - \mu% = \E_\star^0\left[\Ind_{\{W<\theta\}}\upsilon^0(W) - \mu\right] + O_P(\nmin^{-1/2}) \\
%&= \E_\star^0\left[\left(\Ind_{\{W<\theta\}}-\Ind_{\{W<\theta^0\}}\right)\upsilon^0(W)\right] + O_P(\nmin^{-1/2}) \\
= \omega^{\widehat{\beta}^n,0}-\omega^{\beta^0,0} + O_P(\nmin^{-1/2}).
\end{align*}
Let $\beta^{n,0}$ denote the function $w\mapsto \Ind_{\{\VEm^0(w)<\widehat{\theta}^n\}}$. The above shows that
\begin{align*}
\omega^{\beta^{n,0},0}-\omega^{\beta^0,0}&= -[\omega^{\widehat{\beta}^n,0}-\omega^{\beta^{n,0},0}] + O_P(\nmin^{-1/2})
\end{align*}
At the end of this proof, we show that $\omega^{\widehat{\beta}^n,0} - \omega^{\beta^{n,0},0} = o_P(1)$ without using Condition $\alpha$. For now suppose we have established this. By \ref{it:omegacontnotflat}, $\theta\mapsto\omega^{w\mapsto\Ind\{\VEm^0(w)<\theta\},0}$ is continuous and increasing at $\theta^0$ so that $\omega^{\beta^{n,0},0}-\omega^{\beta^0,0} = o_P(1)$ is only possible if $\widehat{\theta}^n-\theta^0=o_P(1)$. If Condition $\alpha$ also holds, then we will show at the end of this proof that $\omega^{\widehat{\beta}^n,0} - \omega^{\beta^{n,0},0} = O_P(\norm{\VEm^n-\VEm^0}_{2,P_\star^0}^{\frac{2\alpha}{\alpha+1}})$. Noting that
\begin{align*}
\omega^{\beta^{n,0},0}-\omega^{\beta^0,0}&= \frac{\omega^{\beta^{n,0},0}-\omega^{\beta^0,0}}{\widehat{\theta}^n-\theta^0}[\widehat{\theta}^n-\theta^0],
\end{align*}
we see that \ref{it:omegacontnotflat} and $\widehat{\theta}^n-\theta^0=o_P(1)$ imply that $\widehat{\theta}^n-\theta^0=O_P(\omega^{\beta^{n,0},0}-\omega^{\beta^0,0})$, which we have shown to be $O_P(\nmin^{-1/2}+\norm{\VEm^n-\VEm^0}_{2,P_\star^0}^{\frac{2\alpha}{\alpha+1}})$ under Condition $\alpha$.

We now establish that $\omega^{\widehat{\beta}^n,0} = \omega^{\beta^{n,0},0} + o_P(1)$ regardless of the validity of Condition $\alpha$, and that $\omega^{\widehat{\beta}^n,0} = \omega^{\beta^{n,0},0} + O_P(\norm{\VEm^n-\VEm^0}_{2,P_\star^0}^{\frac{2\alpha}{\alpha+1}})$ under Condition $\alpha$. For simplicity we give the proof when $\widehat{\eta}^n=0$, though the proof for general $\widehat{\eta}^n\in[0,1]$ only differs slightly. Observe that
\begin{align*}
&\left|\omega^{\widehat{\beta}^n,0}-\omega^{\beta^{n,0},0}\right| \\
&= \left|\E_\star^0\left[\{\upsilon^0(W)-\lambda^0(W)\}\left\{\Ind_{\{\VEm^n(W)<\widehat{\theta}^n\}}-\Ind_{\{\VEm^0(W)<\widehat{\theta}^n\}}\right\}\right]\right| \\
&\le \E_\star^0\left[\left|\Ind_{\{\VEm^n(W)<\widehat{\theta}^n\}}-\Ind_{\{\VEm^0(W)<\widehat{\theta}^n\}}\right|\right] \\
&\le P_\star^0\left\{0\le |\VEm^0(W)-\widehat{\theta}^n|\le |\VEm^n(W)-\VEm^0(W)|\right\}.
\intertext{For any $t>0$, the inequality continues as}
&\le P_\star^0\left\{0\le |\VEm^0(W)-\widehat{\theta}^n|\le t\right\} + P_\star^0\left\{|\VEm^n(W)-\VEm^0(W)|\ge t\right\} \\
&\le P_\star^0\left\{0\le |\VEm^0(W)-\widehat{\theta}^n|\le t\right\} + \frac{\norm{\VEm^n-\VEm^0}_{2,P_\star^0}^2}{t^2}.
\end{align*}
By \ref{it:omegacontnotflat}, $P_\star^0\{\VEm^0(W)=\widehat{\theta}^n\}=0$, so that the former term satisfies
\begin{align*}
\lim_{t\downarrow 0} P_\star^0\left\{0\le |\VEm^0(W)-\widehat{\theta}^n|\le t\right\} = 0.
\end{align*}
As $\norm{\VEm^n-\VEm^0}_{2,P_\star^0}=o_P(1)$ by \ref{it:VEmL2}, one can choose a sequence $\{t_n\}$ and plug it in for $t$ in the preceding inequality for $\left|\omega^{\widehat{\beta}^n,0}-\omega^{\beta^{n,0},0}\right|$ to see that this quantity is $o_P(1)$. If Condition $\alpha$ holds, then one can choose $t_n=\norm{\VEm^n-\VEm^0}_{2,P_\star^0}^{2/(\alpha+1)}$, yielding the stronger result
\begin{align*}
\left|\omega^{\widehat{\beta}^n,0}-\omega^{\beta^{n,0},0}\right|&\lesssim \norm{\VEm^n-\VEm^0}_{2,P_\star^0}^{\frac{2\alpha}{\alpha+1}}.
\end{align*}
\end{proof}

\section{Alternative to Steps \ref{it:tmle1}, \ref{it:tmle2}, and \ref{it:tmle3} in our estimation scheme}\label{app:altalg}
We now present alternatives to Steps \ref{it:tmle1}, \ref{it:tmle2}, and \ref{it:tmle3}, to be used when neither (i) $\ell_s(w)$ is a constant multiple of $\mathbbmss{u}_s(w)$ nor (ii) $\ell_s\equiv 0$ for all $s$ holds. The estimation scheme is identical to that presented in the main text besides the modification of these three steps.
\begin{enumerate}
	\item[\ref*{it:tmle1}')]\hypertarget{it:tmle1prime}{} Fit a bivariate logistic regression with outcome $\left(y_s[i] : s=1,\ldots,S;\,i=1,\ldots,n_s\right)$, covariates $\left(\frac{\Ind_{\{a_s[i]=0\}}\ell_s(w)}{n_s P_s^{n,\textnormal{init}}(A=0|w_s[i])}\frac{dP_\star^{n,\textnormal{init}}}{dP_s^{n,\textnormal{init}}}(w_s[i]) : s=1,\ldots,S;\,i=1,\ldots,n_s\right)$ and $\left(\frac{\Ind_{\{a_s[i]=0\}}\mathbbmss{u}_s(w)}{n_s P_s^{n,\textnormal{init}}(A=0|w_s[i])}\frac{dP_\star^{n,\textnormal{init}}}{dP_s^{n,\textnormal{init}}}(w_s[i]) : s=1,\ldots,S;\,i=1,\ldots,n_s\right)$, and fixed, subject-level intercept $\left(\logit\left(\E_s^{n,\textnormal{init}}[Y|A=0,w_s[i]]\right) : s=1,\ldots,S;\,i=1,\ldots,n_s\right)$. Denote the fitted coefficient in front of the respective covariates by $\epsilon_n^\ell$ and $\epsilon_n^u$.
	\item[\ref*{it:tmle2}')]\hypertarget{it:tmle2prime}{} For each $s=1,\ldots,S$, let $(a,w)\mapsto \E_s^{n,\epsilon_n}[Y|a,w]$ denote the function
	\begin{align*}
	(a,w)\mapsto \logit^{-1}\Bigg[&\logit\left(\E_s^{n,\textnormal{init}}[Y|a,w]\right) \\
	&+ \epsilon_n^\ell\frac{\Ind_{\{a=0\}}\ell_s(w)}{n_s P_s^{n,\textnormal{init}}(A=0|w)}\frac{dP_\star^{n,\textnormal{init}}}{dP_s^{n,\textnormal{init}}}(w) \\
	&+ \epsilon_n^u\frac{\Ind_{\{a=0\}}\mathbbmss{u}_s(w)}{n_s P_s^{n,\textnormal{init}}(A=0|w)}\frac{dP_\star^{n,\textnormal{init}}}{dP_s^{n,\textnormal{init}}}(w)\Bigg].
	\end{align*}
	\item[\ref*{it:tmle3}')]\hypertarget{it:tmle3prime}{} Let $\mathcal{P}^n=(P_\star^n,P_1^n,\ldots,P_s^n)$ denote any collection of distribution satisfying that, for all $(a,w)$, $\E_s^n[Y|a,w] = \E_s^{n,\epsilon_n}[Y|a,w]$, $P_s^n(A=1|w) = P_s^{n,\textnormal{init}}(A=1|w)$, and $\frac{dP_\star^n}{dP_s^n}(w) = \frac{dP_\star^{n,\textnormal{init}}}{dP_s^{n,\textnormal{init}}}(w)$.
\end{enumerate}

\section{Extensions}\label{app:extensions}
\subsection{Two-phase sampling}\label{app:twophase}
Suppose now that the data is collected via a two-phase sampling scheme in a given trial $s\in\{1,\ldots,S\}$. In particular, suppose that $W$ is collected on only a subset of participants, whereas $(L,A,Y)$ is collected on all participants, where $L$ is a biomarker or collection of biomarkers that may be predictive of $W$. It is not essential that $L$ happens temporally before $A$ and $Y$. If no biomarker $L$ is observed, then one can set $L=0$ for all participants. Let $\Delta$ be an indicator of the missingness of $W$. In this setting, the full data distribution for trial $s$ is $(W,L,A,Y)\sim P_s^{0,F}$, and the observed data structure for trial $s$ is given by $O_s\triangleq (\Delta W,\Delta,L,A,Y)\sim P_s^{0,F}$. We suppose that $W$ is missing at random, in the sense that $\Delta\independent W | (L,A,Y)$ for each trial $s$, and further that censoring mechanism, i.e. the probability that $\Delta=1$ given each realization of $(L,A,Y)$, is known. This ensures that $P_s^{0,F}$ can be identified with $P_s^0$ by the G-computation formula \cite{Robins1986}. In particular, for any event $E$ on $O_s$, we have that $P_s^{0,F}\{E\}= \E_s^0[P_s^0\{E|\Delta=1,L\}]$.

A common sampling scheme that generates such data in vaccine efficacy trials is a nested case-control sampling scheme \citep{Breslow1996}, where the outcome $W$ is observed on all cases (participants with $Y=1$), and the outcome $W$ is only observed on a subset of controls ($Y=0$). Often these sampling schemes will take the form of an $m$:1 scheme, such that, for each case with $W$ observed, $W$ is observed for $m$ controls. While technically the indicator $\Delta$ is drawn without replacement for these $m$ individuals, one can typically ignore this dependence in the data with little impact on precision or coverage.

A simple modification of our procedure via inverse probability weighting allows estimation of our efficacy lower bound $\phi^0$. For efficiency gains, we recommend estimating the censoring mechanism even though it is known \citep[see Theorem~2.3 in][]{vdL02}. While estimating the censoring mechanism improves the precision of our estimator, it does not reduce the width of our confidence intervals. This thus leads to a conservative inferential procedure. The proof of correctness of this approach is beyond the scope of this work, though closely follows the arguments given in \cite{Rose2011}, with minor tweaks to account for the multiple sample nature of the problem. Indeed, Steps~\ref{it:tmle1twophase}, \ref{it:tmle2twophase}, and \ref{it:tmle3twophase} constitute an IPCW-TMLE, as presented in \cite{Rose2011}. For simplicity we only give the algorithm for the case presented in Section~\ref{sec:eststeps} in the main text, namely where either (i) the chosen $\ell_s(w)$ is a constant multiple of the chosen $\mathbbmss{u}_s(w)$, where the multiple is independent of $s$ and $w$, or (ii) the chosen $\ell_s\equiv 0$ for all $s$ so that $\lambda^0\equiv 0$. The modification of Steps~\ref{it:tmle1twophase}, \ref{it:tmle2twophase}, and \ref{it:tmle3twophase} is analogous to the modifications made to the algorithm in the main text given in Appendix~\ref{app:altalg}. We start at Step~\ref{it:censmechtwophase} so that the other steps parallel those given for the algorithm in the main text.

To emphasize the fact that many of the parameters below depend on the full data structure $(W,L,A,Y)$ rather than the (censored) observed data structure $(\Delta W,\Delta,L,A,Y)$, we replace the ``$0$'' in the superscript by ``$0,F$'' when the parameter is defined for the full data distribution, e.g. we write $\lambda^{0,F}$ rather than $\lambda^0$. Similarly, we write $\lambda^{n,F}$ rather than $\lambda^n$ when denoting estimates of parameters of the full data distribution. Finally, we note that below we denote the observed value of $L$ and $\Delta$ for participant $i$ from trial $s$ by $l_s[i]$ and $\delta_s[i]$, respectively.
\begin{enumerate}[start=0]
	\item\label{it:censmechtwophase} Estimate each $(l,a,y)\mapsto P_s^0\{\Delta=1|l,a,y\}$ using a completed trial $s$-specific correctly specified parametric model. Standardize these estimates by an appropriate constant so that, for each $s=1,\ldots,S$,
	\begin{align*}
	 \sum_{i=1}^{n_s} \frac{\delta_s[i]}{P_s^n\{\Delta=1|l_s[i],a_s[i],y_s[i]\}} = n_s.
	\end{align*}
	Note: correct parametric model specification is possible in this context due to the presumed knowledge (by experimental design) of the censoring mechanism $P_s^0\{\Delta=1|l,a,y\}$.
	\item\label{it:initeststwophase} Let $(a,w)\mapsto \E_s^{n,F,\textnormal{init}}[Y|a,w]$, $w\mapsto P_s^{n,F,\textnormal{init}}(A=1|w)$, and $w\mapsto \frac{dP_\star^{n,F,\textnormal{init}}}{dP_s^{n,F,\textnormal{init}}}(w)$ represent estimates of $(a,w)\mapsto \E_s^{0,F}[Y|a,w]$, $w\mapsto P_s^{0,F}(A=1|w)$, and $w\mapsto \frac{dP_\star^0}{dP_s^{0,F}}(w)$, respectively.\\
	Note: \cite{Rose2011} describe a weighted loss-based estimation scheme that leverages the information in the biomarker $L$ when estimating these quantities. This procedure makes use of the estimate of the censoring mechanism from Step~\ref{it:censmechtwophase}. \cite{Rose2011} also extend the super-learner of \cite{vanderLaan&Polley&Hubbard07} to two-phase sampling designs.
	\item\label{it:tmle1twophase} Fit a weighted univariate logistic regression with weights\\ $\left(\frac{\Delta}{P_s^n\{\delta_s[i]=1|l_s[i],a_s[i],y_s[i]\}} : s=1,\ldots,S;\,i=1,\ldots,n_s\right)$, outcome\\
	 $\left(y_s[i] : s=1,\ldots,S;\,i=1,\ldots,n_s\right)$, covariate\\
	 $\left(\frac{\Ind_{\{a_s[i]=0\}}\mathbbmss{u}_s(w)}{n_s P_s^{n,F,\textnormal{init}}(A=0|w_s[i])}\frac{dP_\star^{n,F,\textnormal{init}}}{dP_s^{n,F,\textnormal{init}}}(w_s[i]) : s=1,\ldots,S;\,i=1,\ldots,n_s\right)$, fixed, subject-level intercept $\left(\logit\left(\E_s^{n,F,\textnormal{init}}[Y|A=0,w_s[i]]\right) : s=1,\ldots,S;\,i=1,\ldots,n_s\right)$. Denote the fitted coefficient in front of the covariate by $\epsilon_n$.
	\item\label{it:tmle2twophase} For each $s=1,\ldots,S$, let $(a,w)\mapsto \E_s^{n,F,\epsilon_n}[Y|a,w]$ denote the function
	\begin{align*}
	(a,w)\mapsto \logit^{-1}\left[\logit\left(\E_s^{n,F,\textnormal{init}}[Y|a,w]\right) + \epsilon_n\frac{\Ind_{\{a=0\}}\mathbbmss{u}_s(w)}{n_s P_s^{n,F,\textnormal{init}}(A=0|w)}\frac{dP_\star^{n,F,\textnormal{init}}}{dP_s^{n,F,\textnormal{init}}}(w)\right].
	\end{align*}
	\item\label{it:tmle3twophase} Let $\mathcal{P}^{n,F}=(P_\star^n,P_1^{n,F},\ldots,P_s^{n,F})$ denote any collection of distribution satisfying that, for all $(a,w)$, $\E_s^{n,F}[Y|a,w] = \E_s^{n,F,\epsilon_n}[Y|a,w]$, $P_s^{n,F}(A=1|w) = P_s^{n,F,\textnormal{init}}(A=1|w)$, and $\frac{dP_\star^n}{dP_s^{n,F}}(w) = \frac{dP_\star^{n,F,\textnormal{init}}}{dP_s^{n,F,\textnormal{init}}}(w)$. Furthermore, for each completed trial $s$, let $Q_s^{n,F}$ denote the distribution that puts mass proportional to $\delta_s[i]/(n_s P_s^n\{\Delta=1|l_s[i],a_s[i],y_s[i]\})$ at each observation $i=1,\ldots,n_s$, and zero mass elsewhere.
	\item\label{it:omegantwophase} For each $\beta : \mathcal{W}\rightarrow[0,1]$, let $\widehat{\omega}^{\beta,n,F}\triangleq \omega^{\beta,n,F} + \sum_{s\in\mathcal{S}} Q_s^{n,F} \nabla  \Omega_s^{\beta,n,F}$, and note that $\widehat{\omega}^{\beta,n,F}$ rewrites as $Q_\star^{n,F} \UR^{\beta,n,F} + \sum_{s=1}^S Q_s^{n,F} \nabla \Omega_s^{\beta,n,F}$.
	\item\label{it:thetantwophase} Let $\widehat{\theta}^{n,F}\triangleq\sup\{\theta : \widehat{\omega}^{w\mapsto \Ind_{\{\VEm^{n,F}(w)<\theta\}},n,F} \le \mu\}$, where $\sup\emptyset=-\infty$.
	\item\label{it:etantwophase} Let $\widehat{\eta}^{n,F}$ be any element of the set $\argmin_{\eta\in[0,1]}\left(\widehat{\omega}^{\beta_\eta,n,F}-\mu\right)^2$, where $\beta_\eta\triangleq w\mapsto \Ind_{\{\VEm^{n,F}(w)<\widehat{\theta}^{n,F}\}} + \eta \Ind_{\{\VEm^{n,F}(w)=\widehat{\theta}^{n,F}\}}$.
	\item\label{it:betantwophase} Let $\widehat{\beta}^{n,F}\triangleq \beta_{\widehat{\eta}^{n,F}}$.
	\item\label{it:onesteptwophase} Estimate $\gamma^{\widehat{\beta}^{n,F},0}$ with
	\begin{align*}
	\widehat{\gamma}^{\widehat{\beta}^{n,F},n,F}&\triangleq \gamma^{\widehat{\beta}^{n,F},n,F} + \sum_{s\in\mathcal{S}} Q_s^{n,F} \nabla \Gamma_s^{\widehat{\beta}^{n,F},n,F} \\
	&= n_\star^{-1}\sum_{i=1}^{n_\star} \UR^{\widehat{\beta}^{n,F},n,F}(w_\star[i])\VE^{n,F}(w_\star[i]) + \sum_{s=1}^S Q_s^{n,F} \nabla \Gamma_s^{\widehat{\beta}^{n,F},n,F}.
	\end{align*}
	\item\label{it:phintwophase} Estimate $\phi^0$ with $\widehat{\phi}^{n,F}\triangleq \frac{\widehat{\gamma}^{\widehat{\beta}^{n,F},n,F}}{\widehat{\omega}^{\widehat{\beta}^{n,F},n,F}}$.
\end{enumerate}
For confidence interval construction, we note that similar conditions to those used in \ref{thm:al} yield that our estimator is asymptotically linear, with the same $P_\star^0$ gradient as in the main text (though with the observed data parameters for the completed trial distributions replaced by full data parameters) and, for $s=1,\ldots,S$, $P_s^0$ gradients $o_s\mapsto \frac{\delta}{P_s^0\{\Delta=1|l,a,y\}}\nabla \Phi_s^0(o_s)$, again replacing the observed data parameters by the full data parameters in the definition of $\nabla \Phi_s^0(o_s)$.

\begin{remark}
Consider a randomized trial where $L$ precedes randomization and $A$ is independent of $L$ conditional on $W$. In this case, it is straightforward to improve the efficiency of the above procedure if $L$ is predictive of $Y$ after accounting for $A$ and $W$ in at least one of the trials. In particular, efficiency could be improved by leveraging this biomarker when estimating both the (known) propensity score $P_s^{0,F}\{A|W\}$ and the outcome regression $\E_s^{0,F}[Y|A,W]$, namely by adding $L$ to the both of the corresponding conditioning statements. The same efficiency gain of course holds for the algorithm in the main text, since setting $\Delta=1$ with probability one shows that $W$ being observed on all individuals is a special case of the results in this appendix.
\end{remark}

\subsection{Monotonic vaccine efficacy curve}\label{app:mono}
In this section, we describe a situation in which one can replace the condition that $P_\star^0\{\VEm^0(W)=\theta^0\}=0$ with the following new condition:
\begin{enumerate}[label=M),ref=M]
	\item\label{it:monotonic} $W$ is real-valued and $w\mapsto \VEm^0(w)$ is monotonic.
\end{enumerate}
% As we will show, the above will allow us to remove the condition that $P_\star^0\{\VEm^0(W)=\theta^0\}=0$.

We break this section into three parts. First, we present a new partial bridging formula specific to the monotonic $\VEm^0$ case. Then, we give formal conditions that will allow one to establish the pathwise derivative of the parameter specified by this formula. Finally, we describe how to modify our estimator from the main text so that the validity of the confidence intervals neither relies on $P_\star^0\{\VEm^0(W)=\theta^0\}=0$ nor on \ref{it:betangood}.

\subsubsection{Partial bridging formula.}
We will use the notation from the main text to express our partial bridging formula. We will define alternatives to $\theta^0$, $\eta^0$, $\beta^0$, and $\phi^0$, which we will respectively denote by $\underline{\theta}^0$, $\underline{\eta}^0$, $\underline{\beta}^0$, and $\underline{\phi}^0$.

Define
\begin{align*}
&\underline{\theta}^0\triangleq \sup\Big\{\theta\in\mathbb{R} : \omega^{w\mapsto\Ind_{\{w<\theta\}},0}\le \mu\Big\},
\end{align*}
where $\sup\emptyset = -\infty$ by convention. Let $\underline{\beta}_{\eta}\triangleq w\mapsto  \Ind_{\{w<\underline{\theta}^0\}} + \eta \Ind_{\{w=\underline{\theta}^0\}}$, and define $\underline{\eta}^0$ to be the smallest element of the set $\argmin_{\eta\in[0,1]}\left(\omega^{\underline{\beta}_{\eta},0}-\mu\right)^2$. %, where we note that $\upsilon^0-\lambda^0$ bounded away from zero implies that this set contains a single element if $\underline{\eta}^0>0$.
Let $\underline{\beta}^0\triangleq \underline{\beta}_{\underline{\eta}^0}$. One can show that, if $\E_\star^0[\lambda^0(W)]\le \mu\le \E_\star^0[\upsilon^0(W)]$, then $\omega^{\underline{\beta}^0,0}=\mu$. Our partial bridging parameter is given by
\begin{align*}
\underline{\Phi}(\mathcal{P}^0)\triangleq \frac{\gamma^{\underline{\beta}^0,0}}{\omega^{\underline{\beta}^0,0}}\triangleq\underline{\phi}^0.
\end{align*}
The proof of the following lemma is nearly identical to the proof of Lemma~\ref{lem:lb} and so is omitted.
\begin{lemma}
Suppose \ref{it:monotonic}, \ref{it:brVEbridge}, \ref{it:brdatadepub}, \ref{it:brcommonsupport}, and \ref{it:brvebdd} hold. If $\mu=\E_\star^0[\E_\star^{0,F}[Y|A=0,W]]$, then $\Psi(P_\star^{0,F})\ge \underline{\phi}^0$.
\end{lemma}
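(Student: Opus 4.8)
The plan is to follow the proof of Lemma~\ref{lem:lb} essentially verbatim, with the single change that the covariate strata are now partitioned according to the value of $w$ rather than the value of $\VEm^0(w)$; monotonicity \ref{it:monotonic} is precisely what makes these two partitions interchangeable for the worst-case argument. First I would note that, because $\mu=\E_\star^0[\E_\star^{0,F}[Y|A=0,W]]$ and \ref{it:brdatadepub} holds, the feasibility calculation stated just before the lemma gives $\omega^{\underline{\beta}^0,0}=\mu$. Next, \ref{it:brVEbridge} yields $\omega^{\underline{\beta}^0,0}\Psi(P_\star^{0,F})\ge \E_\star^0[\E_\star^{0,F}[Y|A=0,W]\VEm^0(W)]$, so it suffices to show that the right-hand side is at least $\gamma^{\underline{\beta}^0,0}$.

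The core step is to bound $\E_\star^0[\E_\star^{0,F}[Y|A=0,W]\VEm^0(W)]-\gamma^{\underline{\beta}^0,0}$ below by zero. Writing $\mathcal{E}_1=\{W<\underline{\theta}^0\}$, $\mathcal{E}_2=\{W=\underline{\theta}^0\}$, $\mathcal{E}_3=\{W>\underline{\theta}^0\}$, and recalling that $\UR^{\underline{\beta}^0,0}$ equals $\upsilon^0$ on $\mathcal{E}_1$ and $\lambda^0$ on $\mathcal{E}_3$ (with the point $\mathcal{E}_2$ split by $\underline{\eta}^0$), the difference decomposes exactly as in Lemma~\ref{lem:lb}:
\begin{align*}
&\E_\star^0\left[(\Ind_{\mathcal{E}_1} + \underline{\eta}^0 \Ind_{\mathcal{E}_2})\left\{\E_\star^{0,F}[Y|A=0,W]-\upsilon^0(W)\right\}\VEm^0(W)\right] \\
&\quad+ \E_\star^0\left[(\Ind_{\mathcal{E}_3} + [1-\underline{\eta}^0]\Ind_{\mathcal{E}_2})\left\{\E_\star^{0,F}[Y|A=0,W]-\lambda^0(W)\right\}\VEm^0(W)\right].
\end{align*}
Assuming without loss of generality that $w\mapsto\VEm^0(w)$ is non-decreasing (otherwise replace $W$ by $-W$, under which all bounds and $\underline{\phi}^0$ are invariant), \ref{it:monotonic} gives $\VEm^0(W)\le \VEm^0(\underline{\theta}^0)$ on $\mathcal{E}_1$ and $\VEm^0(W)\ge \VEm^0(\underline{\theta}^0)$ on $\mathcal{E}_3$. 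Since \ref{it:brdatadepub} makes the bracketed factor nonpositive on $\mathcal{E}_1$ and nonnegative on $\mathcal{E}_3$, replacing every occurrence of $\VEm^0(W)$ by the constant $\VEm^0(\underline{\theta}^0)$ only decreases the expression, and the resulting lower bound is $\VEm^0(\underline{\theta}^0)[\mu-\omega^{\underline{\beta}^0,0}]=0$. Dividing through by $\omega^{\underline{\beta}^0,0}>0$ completes the argument.

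The one place that differs from Lemma~\ref{lem:lb}, and the step I would be most careful about, is the constant-substitution bound. In Lemma~\ref{lem:lb} the strata are defined directly through $\VEm^0$, so $\VEm^0(W)\le\theta^0$ on $\mathcal{E}_1$ and $\VEm^0(W)\ge\theta^0$ on $\mathcal{E}_3$ hold by fiat; here the strata are cut along the $w$-axis, and it is exactly monotonicity that transfers the ordering on $w$ into the ordering on $\VEm^0$, with $\VEm^0(\underline{\theta}^0)$ now playing the role that $\theta^0$ played before. This substitution is also what lets the lemma dispense with the requirement $P_\star^0\{\VEm^0(W)=\theta^0\}=0$: a plateau of $\VEm^0$ carrying positive mass no longer concentrates mass on a single level set, because $\underline{\theta}^0$ splits the plateau along $w$ and only the typically null point $\{W=\underline{\theta}^0\}$ is handled by the fractional weight $\underline{\eta}^0$. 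No empirical-process or differentiability machinery is needed for this population-level statement; the only genuine work is the sign bookkeeping described above.
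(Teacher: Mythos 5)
Your main argument is exactly the paper's intended (omitted) proof: the paper states only that the proof is ``nearly identical to the proof of Lemma~\ref{lem:lb},'' and your decomposition over $\{W<\underline{\theta}^0\}$, $\{W=\underline{\theta}^0\}$, $\{W>\underline{\theta}^0\}$, followed by the constant substitution $\VEm^0(W)\rightsquigarrow\VEm^0(\underline{\theta}^0)$ justified by \ref{it:monotonic} and the signs supplied by \ref{it:brdatadepub}, is precisely the right adaptation. Your sign bookkeeping is correct when $w\mapsto\VEm^0(w)$ is nondecreasing, and your observation about why the plateau condition $P_\star^0\{\VEm^0(W)=\theta^0\}=0$ is no longer needed is accurate.

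The one genuine flaw is the ``without loss of generality'' reduction for the nonincreasing case: the claim that $\underline{\phi}^0$ is invariant under $W\mapsto -W$ is false. The definition $\underline{\beta}_\eta(w)=\Ind_{\{w<\underline{\theta}^0\}}+\eta\Ind_{\{w=\underline{\theta}^0\}}$ hard-codes that the upper risk bound $\upsilon^0$ is assigned to \emph{small} $w$; reflecting $W$ reassigns it to large $w$ and hence changes $\underline{\beta}^0$, $\gamma^{\underline{\beta}^0,0}$, and $\underline{\phi}^0$. Concretely, take $W$ uniform on $\{0,1\}$, $\lambda^0\equiv 0$, $\upsilon^0\equiv 1$, $\mu=1/2$, $\VEm^0(0)=0.9$, $\VEm^0(1)=0.1$ (nonincreasing, with $\VE_\star^{0,F}=\VEm^0$). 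Then $\underline{\theta}^0=1$, $\underline{\eta}^0=0$, $\UR^{\underline{\beta}^0,0}=\Ind_{\{W=0\}}$, and $\underline{\phi}^0=0.9$, whereas the true unvaccinated risk $\E_\star^{0,F}[Y|A=0,W]=\Ind_{\{W=1\}}$ satisfies \ref{it:brdatadepub} and gives $\Psi(P_\star^{0,F})=0.1<\underline{\phi}^0$; after reflecting $W$ one instead obtains the (different, and valid) value $0.1$. So the lemma as literally stated is false for nonincreasing $\VEm^0$ unless the definitions of $\underline{\theta}^0$ and $\underline{\beta}^0$ are reoriented to place $\upsilon^0$ on the low-efficacy side; the correct statement of the reduction is that one must redefine the parameter after reflection, not that the parameter is invariant. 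This is a small repair (the paper itself is silent on the orientation), but as written that sentence of your proof asserts something false and the nonincreasing case is not actually covered.
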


\subsubsection{First-order expansion of $\underline{\phi}^0$.}
Our first-order expansion also replaces \ref{it:omegacontnotflat} in the main text with two alternative assumptions, the first of which is given below.
\begin{enumerate}[label=D\arabic*),ref=D\arabic*,series=en:regcondsmono]
	\item\label{it:Wcdfcontnotflat} $\E_\star^0[\lambda^0(W)]<\mu<\E_\star^0[\upsilon^0(W)]$ and either (i) $P_\star^0\{W=\theta^0\}>0$ or (ii) $P_\star^0\{W=\theta^0\}=0$, $w\mapsto \VEm^0(w)$ is Lipschitz at $\theta^0$, and
	\begin{align*}
	0&<\liminf_{t\rightarrow 0} \frac{P_{\star}^0\{W< \theta+t\}-P_{\star}^0\{W< \theta\}}{t}. %\\
	%&\le \limsup_{t\rightarrow 0} \frac{P_{\star}^0\{W< \theta+t\}-P_{\star}^0\{W< \theta\}}{t} <\infty.
	\end{align*}
\end{enumerate}
Note that $\E_\star^0[\lambda^0(W)]<\mu<\E_\star^0[\upsilon^0(W)]$ implies that $\theta^0$ is finite.
% A consequence of the above is that $P_{\star}^0\{W=\underline{\theta}^0\}=0$. We note that this condition is generally weaker than requiring that $P_\star^0\{\VEm^0(W)=\theta^0\}=0$, since a continuous $W$ will not concentrate mass at any points, whereas if $\VEm^0$ is flat in a region then it will focus mass on the value it takes in that region.

% We also make a minor smoothness assumption about $w\mapsto \VEm^0(w)$.
% \begin{enumerate}[resume*=en:regcondsmono]
% 	\item\label{it:VEcont} $w\mapsto \VEm^0(w)$ is continuous at $\theta^0$.
% \end{enumerate}
We give a theorem presenting the gradients of our parameter. For brevity, we only sketch the proof.
\begin{theorem} \label{thm:pdmono}
If \ref{it:brcommonsupport} and \ref{it:brvebdd} hold and either \ref{it:Wcdfcontnotflat}, \ref{it:mubig}, or \ref{it:musmall} holds, then $\underline{\Phi}$ is pathwise differentiable and, for each $s\in\mathcal{S}$, the $P_s^0$ gradient is given by
\begin{align*}
\nabla \underline{\Phi}_s^0(o_s)=& \begin{cases}
\dfrac{\nabla \Gamma_s^{\underline{\beta}^0,0}(o_s)}{\omega^{\underline{\beta}^0,0}} - \VEm^0(\underline{\theta}^0)\dfrac{\nabla \Omega_s^{\underline{\beta}^0,0}(o_s)}{\omega^{\underline{\beta}^0,0}},&\mbox{ if }\E_\star^0[\lambda^0(W)]<\mu<\E_\star^0[\upsilon^0(W)], \\[1.3em]
\dfrac{\nabla \Gamma_s^{\underline{\beta}^0,0}(o_s)}{\omega^{\underline{\beta}^0,0}} - \gamma^{\underline{\beta}^0,0}\dfrac{\nabla \Omega_s^{\underline{\beta}^0,0}(o_s)}{[\omega^{\underline{\beta}^0,0}]^2},&\mbox{ otherwise.}
\end{cases}.
\end{align*}
\end{theorem}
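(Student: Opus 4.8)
The plan is to mirror the proof of Theorem~\ref{thm:pd} step for step, with the single structural substitution that the role played there by $\theta^0$ (a value of the efficacy curve, since the threshold was on $\VEm^0$) is now played by $\VEm^0(\underline\theta^0)$ (the value of the curve at the $w$-threshold, since the threshold is on $w$ directly). Under monotonicity \ref{it:monotonic}, the events $\{w<\underline\theta^0\}$ and $\{\VEm^0(w)<\VEm^0(\underline\theta^0)\}$ agree up to the boundary, so $\underline\beta^0$ is an indicator in $W$-space rather than in $\VEm$-space; this is exactly what permits dropping the assumption $P_\star^0\{\VEm^0(W)=\theta^0\}=0$. Conditions \ref{it:brcommonsupport} and \ref{it:brvebdd} enter precisely as in Theorem~\ref{thm:pd}, guaranteeing the $dP_\star^0/dP_s^0$ reweighting and bounded-integrand estimates. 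I would treat \ref{it:Wcdfcontnotflat} and \ref{it:mubig}/\ref{it:musmall} as separate cases.

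Under \ref{it:Wcdfcontnotflat}, $\underline\theta^0$ is finite and the feasibility constraint binds, so I would first establish an analogue of Lemma~\ref{lem:thetacons} giving $\omega^{\underline\beta^\epsilon,\epsilon}=\mu$ for all small $\epsilon$, whence $\underline\phi^\epsilon-\underline\phi^0=(\gamma^{\underline\beta^\epsilon,\epsilon}-\gamma^{\underline\beta^0,0})/\mu$. As in Theorem~\ref{thm:pdLambda}, I split the numerator as $(\gamma^{\underline\beta^0,\epsilon}-\gamma^{\underline\beta^0,0})+(\gamma^{\underline\beta^\epsilon,\epsilon}-\gamma^{\underline\beta^0,\epsilon})$; the first piece is handled by a delta-method computation yielding $\sum_{s}\int\nabla\Gamma_s^{\underline\beta^0,0}h_s\,dP_s^0$, and the second is the threshold-derivative term that must be shown to contribute $-\VEm^0(\underline\theta^0)\sum_s\int\nabla\Omega_s^{\underline\beta^0,0}h_s\,dP_s^0$, in analogy with Lemma~\ref{lem:quantilederiv}.

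The genuinely new content is the case split inside \ref{it:Wcdfcontnotflat}. In subcase (i), where $P_\star^0\{W=\underline\theta^0\}>0$, the map $\theta\mapsto\omega^{w\mapsto\Ind_{\{w<\theta\}},0}$ jumps at $\underline\theta^0$, so the threshold does not move with $\epsilon$ (that is, $\underline\theta^\epsilon=\underline\theta^0$ for small $\epsilon$) and only the fractional weight $\underline\eta^\epsilon$ adjusts to maintain $\omega=\mu$. The set on which $\underline\beta^\epsilon$ and $\underline\beta^0$ differ is then exactly the atom $\{W=\underline\theta^0\}$, on which $\VEm^0\equiv\VEm^0(\underline\theta^0)$, so factoring this constant out of the threshold-derivative term produces $\VEm^0(\underline\theta^0)$ cleanly and reduces the computation to differentiating the scalar $\underline\eta^\epsilon$ against the smooth dependence of $\upsilon^\epsilon$, $\VEm^\epsilon$, and $P_\star^\epsilon$ on $\epsilon$. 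In subcase (ii), where the atom is absent but the $W$-CDF is locally non-flat and $\VEm^0$ is Lipschitz at $\underline\theta^0$, the threshold $\underline\theta^\epsilon$ does move, and I would reproduce the $T_1^\epsilon=o(\epsilon)$ argument of Lemma~\ref{lem:quantilederiv}: the symmetric-difference set shrinks to $\{w=\underline\theta^0\}$, and the Lipschitz bound $\VEm^0(w)=\VEm^0(\underline\theta^0)+O(|w-\underline\theta^0|)$ on that set isolates the factor $\VEm^0(\underline\theta^0)$ while controlling the remainder.

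Finally, under \ref{it:mubig} (resp.\ \ref{it:musmall}) I would repeat Case~2 of Theorem~\ref{thm:pd}: an analogue of Lemma~\ref{lem:thetaepsatbdry} shows $\underline\theta^\epsilon=+\infty$ (resp.\ $-\infty$) for all small $\epsilon$, so $\underline\beta^\epsilon\equiv 1$ (resp.\ $\equiv 0$) is fixed and $\underline\Phi$ reduces to the smooth ratio $\mathcal{P}'\mapsto\int\upsilon'(w)\VEm'(w)\,dP_\star'(w)\big/\int\upsilon'(w)\,dP_\star'(w)$ (resp.\ with $\lambda'$ replacing $\upsilon'$), whose gradient follows from the quotient rule and delivers the ``otherwise'' expression. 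I expect the main obstacle to be subcase (i): it is absent from Theorem~\ref{thm:pd}, where \ref{it:omegacontnotflat} excluded boundary atoms, and care is needed to verify that differentiating through the fractional weight $\underline\eta^\epsilon$ (rather than through a moving threshold) produces exactly $\VEm^0(\underline\theta^0)$ with no spurious terms.
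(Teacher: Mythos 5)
Your proposal matches the paper's own (sketched) proof essentially step for step: the same case split between \ref{it:Wcdfcontnotflat} and \ref{it:mubig}/\ref{it:musmall}, the same reduction to $(\gamma^{\underline{\beta}^\epsilon,\epsilon}-\gamma^{\underline{\beta}^0,0})/\mu$ once $\omega^{\underline{\beta}^\epsilon,\epsilon}=\mu$ is established, and the same dichotomy inside \ref{it:Wcdfcontnotflat} (atom at $\underline{\theta}^0$ with fixed threshold and $\underline{\eta}^\epsilon-\underline{\eta}^0=O(\epsilon)$, versus moving threshold with $\underline{\theta}^\epsilon-\underline{\theta}^0=O(\epsilon)$ and the Lipschitz bound isolating $\VEm^0(\underline{\theta}^0)$). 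The only cosmetic difference is that you package the numerator as the two-term split of Theorem~\ref{thm:pdLambda} while the paper writes an algebraically equivalent three-term decomposition, so the argument is correct and essentially identical.
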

\begin{proof}[Sketch of Proof of Theorem~\ref{thm:pdmono}]
The proof under \ref{it:mubig} or \ref{it:musmall} is essentially identical to that of Theorem~\ref{thm:pd} under the same conditions, we suppose \ref{it:Wcdfcontnotflat} in the remainder.

We first outline the convergence result of $\underline{\beta}^\epsilon$ to $\underline{\beta}^0$. If (i), then we can instead show that $\underline{\theta}^\epsilon=\underline{\theta}^0$ for all $\epsilon$ small enough and $\underline{\eta}^\epsilon=\underline{\eta}^0+O(\epsilon)$. If (ii), then \ref{it:Wcdfcontnotflat} yields that $\underline{\theta}^\epsilon=\underline{\theta}^0+O(\epsilon)$. In either case, $\underline{\theta}^0$ is finite for all $\epsilon$ small enough so that $\omega^{\underline{\beta}^\epsilon,\epsilon}=\mu$ for all $\epsilon$ small enough. Hence,
\begin{align*}
\frac{\phi^\epsilon-\phi^0}{\epsilon} = \mu^{-1}\frac{\gamma^{\underline{\beta}^\epsilon,\epsilon}-\gamma^{\underline{\beta}^0,0}}{\epsilon},
\end{align*}
so it is enough to study the right-hand side multiplied by $\mu\epsilon$. We will use that
\begin{align*}
\gamma^{\underline{\beta}^\epsilon,\epsilon}-\gamma^{\underline{\beta}^0,0}=&\, \left\{\gamma^{\underline{\beta}^0,\epsilon}-\gamma^{\underline{\beta}^0,0} - \VEm^0(\underline{\theta}^0)[\omega^{\underline{\beta}^0,\epsilon}-\omega^{\underline{\beta}^0,0}]\right\} \\
&+ \VEm^0(\underline{\theta}^0)\left[\omega^{\underline{\beta}^\epsilon,\epsilon}-\omega^{\underline{\beta}^0,0}\right] \\
&+ \left\{\gamma^{\underline{\beta}^\epsilon,\epsilon}-\gamma^{\underline{\beta}^0,\epsilon} - \VEm^0(\underline{\theta}^0)[\omega^{\underline{\beta}^\epsilon,\epsilon}-\omega^{\underline{\beta}^0,\epsilon}]\right\}
\end{align*}
A delta method argument shows that dividing the leading term by $\epsilon$ and taking the limit as $\epsilon\rightarrow 0$ yields $\sum_{s\in\mathcal{S}} P_s^0 h_s [\nabla \Gamma_s^{\underline{\beta}^0,0}(o_s) + \VEm^0(\underline{\theta}^0)\nabla \Omega_s^{\underline{\beta}^0,0}]$. As we established earlier in this proof, $\omega^{\underline{\beta}^\epsilon,\epsilon}=\mu=\omega^{\underline{\beta}^0,0}$ for all $\epsilon$ small enough, so that the second term above is zero for all $\epsilon$ small enough . The remainder of this proof aims to show that the final term is $o(\epsilon)$. We will use a constant $c$ that may vary line by line. We now study the final term above, whose numerator bounds as
\begin{align*}
&\left|\gamma^{\underline{\beta}^\epsilon,\epsilon}-\gamma^{\underline{\beta}^0,\epsilon} - \VEm^0(\underline{\theta}^0)[\omega^{\underline{\beta}^\epsilon,\epsilon}-\omega^{\underline{\beta}^0,\epsilon}]\right| \\
&= (1+c|\epsilon|)\E_\star^0\left|\{\upsilon^\epsilon(W)-\lambda^\epsilon(W)\}\{\underline{\beta}^\epsilon(W)-\underline{\beta}^0(W)\}\left\{\VEm^\epsilon(W)-\VEm^0(\underline{\theta}^0)\right\}\right| \\
&\le (1+c|\epsilon|)\E_\star^0\left[|\underline{\beta}^\epsilon(W)-\underline{\beta}^0(W)|\left\{\left|\VEm^0(W)-\VEm^0(\underline{\theta}^0)\right| + c|\epsilon|\right\}\right].
\end{align*}
Call the right-hand side $\underline{T}_1^\epsilon$. If (i) holds, then the fact that $\underline{\theta}^\epsilon=\underline{\theta}^0$ for all $\epsilon$ small enough shows that, for all such $\epsilon,$
\begin{align*}
\underline{T}_1^\epsilon&\le c|\epsilon|(1+c|\epsilon|)|\underline{\eta}^\epsilon-\underline{\eta}^0|P_\star^0\{W=\underline{\theta}^0\}.
\end{align*}
As $\underline{\eta}^\epsilon-\underline{\eta}^0=O(\epsilon)$, the right-hand side must be $o(\epsilon)$. If, instead, (ii) holds, then similar techniques to those used to control $T_1^\epsilon$ in Theorem~\ref{thm:pdLambda} show that
\begin{align*}
\underline{T}_1^\epsilon&\le (1+c|\epsilon|)\E_\star^0\left[\Ind_{\{0<|W-\underline{\theta}^0|<|\underline{\theta}^\epsilon-\underline{\theta}^0|\}}\left\{\left|\VEm^0(W)-\VEm^0(\underline{\theta}^0)\right| + c|\epsilon|\right\}\right].
\end{align*}
By the fact that $\VEm^0$ is Lipschitz at $\underline{\theta}^0$ and the bound on $|W-\underline{\theta}^0|$ given in the indicator, we have that
\begin{align*}
\underline{T}_1^\epsilon&\le (1+c|\epsilon|)\left\{c\left|\underline{\theta}^\epsilon-\underline{\theta}^0\right| + c|\epsilon|\right\}P_\star^0\left\{0<|W-\underline{\theta}^0|<|\underline{\theta}^\epsilon-\underline{\theta}^0|\right\}.
\end{align*}
As $\underline{\theta}^\epsilon-\underline{\theta}^0=O(\epsilon)$ and the probability statement is $o(1)$, the right-hand side is $o(\epsilon)$.
\end{proof}

\subsubsection{Modification to our estimator.}
We now present the modification to the estimator presented in the main text that allows us to replace the condition that $\VEm^0(W)$, $W\sim P_\star^0$, does not concentrate mass at a decision boundary with the condition that $W\sim P_\star^0$ does not concentrate mass at a decision boundary and the monotonicity condition \ref{it:monotonic}.

The first \ref*{it:omegan} steps of our estimation procedure are identical to that in the main text, and the modification of the remaining steps is given below.
\begin{enumerate}
	% \item[\ref*{it:omegan}.]\label{it:omeganmono} For each $\beta : \mathcal{W}\rightarrow[0,1]$, let $\widehat{\omega}^{\beta,n}\triangleq \omega^{\beta,n} + \sum_{s\in\mathcal{S}} Q_s^n \nabla  \Omega_s^{\beta,n}$, and note that $\widehat{\omega}^{\beta,n}$ rewrites as $Q_\star^n \UR^{\beta,n} + \sum_{s=1}^S Q_s^n \nabla \Omega_s^{\beta,n}$.
	\item[\ref*{it:thetan}.]\label{it:thetanmono} Let $\widehat{\underline{\theta}}^n\triangleq\sup\{\theta : \widehat{\omega}^{w\mapsto \Ind_{\{w<\theta\}},n} \le \mu\}$, where $\sup\emptyset=-\infty$.
	\item[\ref*{it:etan}.]\label{it:etanmono} Let $\widehat{\underline{\eta}}^n$ be any element of the set $\argmin_{\eta\in[0,1]}\left(\widehat{\omega}^{\beta_\eta,n}-\mu\right)^2$, where $\beta_\eta\triangleq w\mapsto \Ind_{\{w<\widehat{\underline{\theta}}^n\}} + \eta \Ind_{\{w=\widehat{\underline{\theta}}^n\}}$.
	\item[\ref*{it:betan}.]\label{it:betanmono} Let $\widehat{\underline{\beta}}^n\triangleq \beta_{\widehat{\underline{\eta}}^n}$.
	\item[\ref*{it:onestep}.]\label{it:onestepmono} Estimate $\gamma^{\widehat{\underline{\beta}}^n,0}$ with
	\begin{align*}
	\widehat{\gamma}^{\widehat{\underline{\beta}}^n,n}&\triangleq \gamma^{\widehat{\underline{\beta}}^n,n} + \sum_{s\in\mathcal{S}} Q_s^n \nabla \Gamma_s^{\widehat{\underline{\beta}}^n,n} \\
	&= n_\star^{-1}\sum_{i=1}^{n_\star} \UR^{\widehat{\underline{\beta}}^n,n}(w_\star[i])\VE^n(w_\star[i]) + \sum_{s=1}^S Q_s^n \nabla \Gamma_s^{\widehat{\underline{\beta}}^n,n}.
	\end{align*}
	\item[\ref*{it:phin}.]\label{it:phinmono} Estimate $\underline{\phi}^0$ with $\widehat{\underline{\phi}}^n\triangleq \frac{\widehat{\gamma}^{\widehat{\underline{\beta}}^n,n}}{\widehat{\omega}^{\widehat{\underline{\beta}}^n,n}}$.
\end{enumerate}
The conditions for the asymptotic linearity of this approach are similar to those of Theorem~\ref{thm:al}, but weaker because they do not rely on the condition that $P_\star^0\{\VEm^0(W)=\theta^0\}=0$ (nor any analogue thereof). To start, we consider why, at least for $W$ continuous, $P_\star^0\{\VEm^0(W)=\theta^0\}=0$ was essentially necessary for it to be plausible that the procedure in the main text was asymptotically linear. For asymptotic linearity to hold, the indicators that $\VEm^n(w)<\widehat{\theta}^n$ and $\VEm^n(w)=\widehat{\theta}^n$ that appear in the empirical gradients must converge to a fixed limit. The fact that $\VEm^n$ is only an estimate of $\VEm^0$ suggests that it is not plausible that the indicators of the events $\VEm^n(w)<\widehat{\theta}^n$ or $\VEm^n(w)=\widehat{\theta}^n$ converge to a fixed limit for any $w$ for which $\VEm^0(w)=\theta^0$. Thus, if $W$ is continuous and $P_\star^0\{\VEm^0(W)=\theta^0\}>0$, then it is not plausible that these indicators converge to any fixed limit.

Consider now the procedure above. The procedure above replaces the indicators that $\VEm^n(w)<\widehat{\theta}^n$ and $\VEm^n(w)=\widehat{\theta}^n$ with indicators that $w<\widehat{\underline{\theta}}^n$ and $w=\widehat{\underline{\theta}}^n$: in particular, it is expected to be valid regardless of the value of $P_\star^0\{W=\underline{\theta}^0\}$. The conditions needed to ensure that these new indicators converge to a fixed limit are weaker than those needed for the convergence of $\VEm^n(w)<\widehat{\theta}^n$ or $\VEm^n(w)=\widehat{\theta}^n$. If $P_\star^0$ concentrates mass at $\underline{\theta}^0$, then we expect that $\widehat{\underline{\theta}}^n$ will equal $\underline{\theta}^0$ with probability approaching one, so that clearly the indicators that $w<\widehat{\underline{\theta}}^n$ and $w=\widehat{\underline{\theta}}^n$ are fixed with probability approaching one. If, on the other hand, $P_\star^0$ does not concentrate mass at $\underline{\theta}^0$, then indicators that $w<\widehat{\underline{\theta}}^n$ and $w=\widehat{\underline{\theta}}^n$ will generally converge to a fixed limit if $\widehat{\underline{\theta}}^n\rightarrow \underline{\theta}^0$. For some intuition on why $\widehat{\underline{\theta}}^n\rightarrow \underline{\theta}^0$ is to be expected, note that $\widehat{\underline{\theta}^n}$ is essentially an empirical quantile for an $(\upsilon^0-\lambda^0)$-weighted version of $P_\star^0$. We say ``essentially'' because conditions can be given under which we can replace the estimated weights $\upsilon^n-\lambda^n$ by the true weights $\upsilon^0-\lambda^0$ at the expense of an $O_P(\nmin^{-1/2})$ term. Given this replacement, we can show that \ref{it:Wcdfcontnotflat} yields that $\widehat{\underline{\theta}}^n=\theta^0 + O_P(\nmin^{-1/2})$. Note that this is stronger than the slower than root-$n$ rate of convergence of $\widehat{\theta}^n$ to $\theta^0$ in the nonmonotonic case given in Lemma~\ref{lem:thetancons}.

\section{Nested case-control simulation results} \label{app:nestedccsim}
Figure~\ref{fig:twophasecovg} shows that our the coverage of our estimation scheme decreased under the 1:1 nested case-control design, though also that this coverage improves with sample size. At the Moderate and Tight settings, Figure~\ref{fig:twophaseavglb} shows that our estimate of the partially bridged lower bound has positive bias for the nested case-control design, though that this bias reduces with sample size.

% Coverage gains from estimating the known censoring mechanism are not evident in Figure~\ref{fig:twophasecovg}.
Figure~\ref{fig:twophasetruevar} compares the Monte Carlo variance of the estimator that uses the known censoring mechanism versus the estimator that estimates the censoring mechanism. The two procedures appear to have similar variance in this particular simulation setting. %Indeed, in most settings we see a slight gain in precision. The only setting that we do not see this gain is for the ``loose'' setting. Here the true weights yield a lower variance estimator at the smaller sample size and an estimator with comparable variance at the larger sample size.

\begin{figure}
	\centering
	\includegraphics[width=\linewidth]{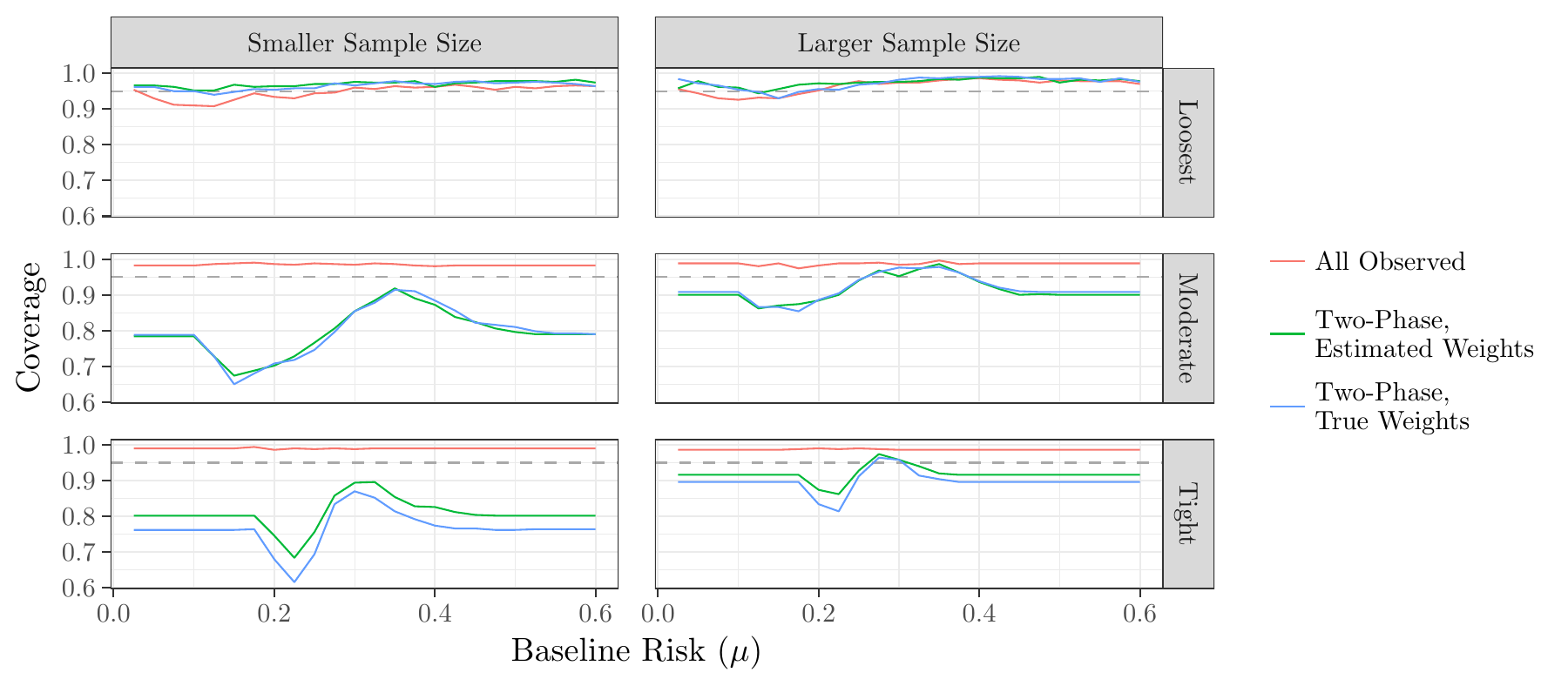}
	\caption{Comparison of the coverage of our lower confidence bound for $\phi^0$, i.e. the lower bound on the vaccine efficacy, when there is and is not data missing due to a nested case-control sampling design. Conducted at both smaller and larger sample sizes, respectively with $(n_\star,n_1,n_2)$ equal to $(100,2000,2000)$ and $(200,4000,4000)$, and for different choices of $\ell_s$ and $\mathbbmss{u}_s$, determining the tightness of the unvaccinated risk bounds. Horizontal dashed lines drawn at 95\% coverage, vertical dashed lines drawn at the true baseline risk value $\mu$.}
	\label{fig:twophasecovg}
\end{figure}

\begin{figure}
	\centering
	\includegraphics[width=\linewidth]{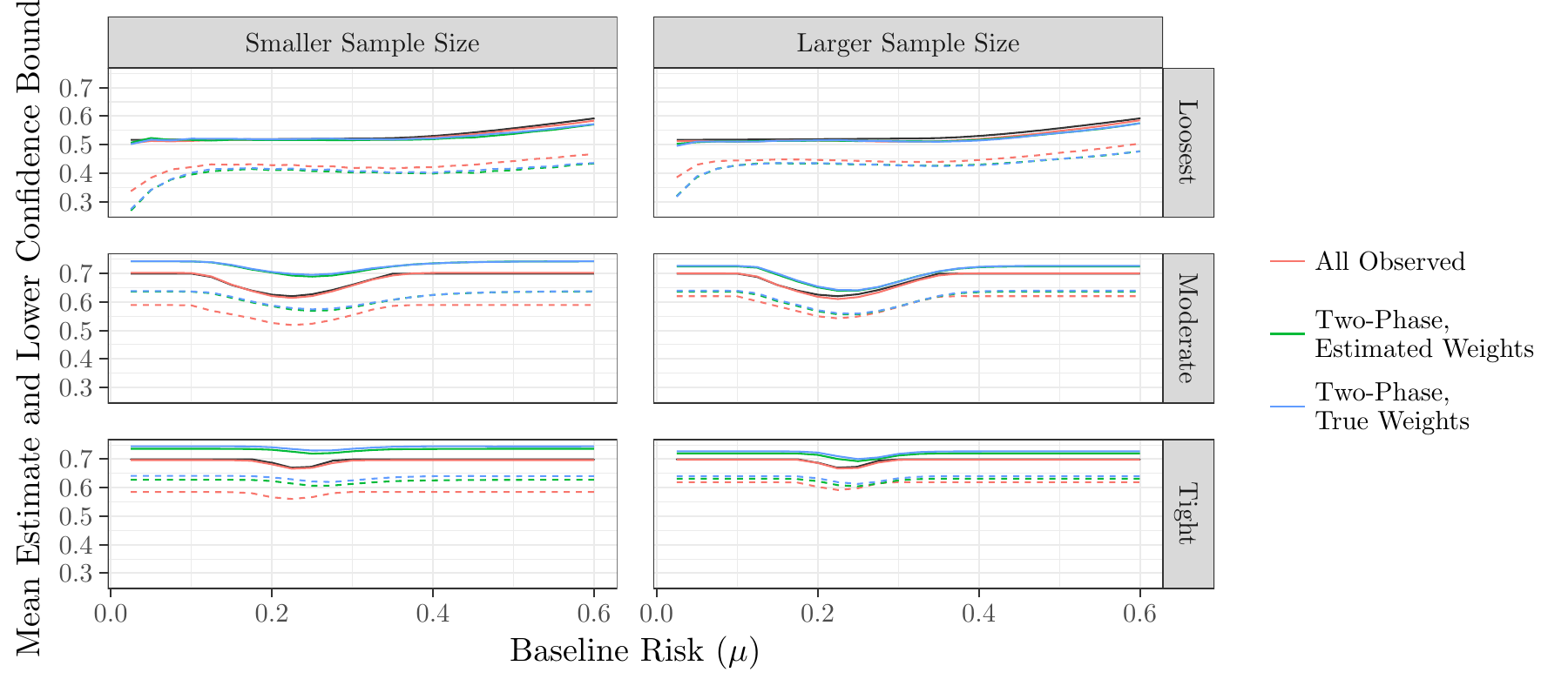}
	\caption{Comparison of average estimates (solid lines) and lower confidence bounds (dashed lines) for $\phi^0$, i.e. the lower bound on the vaccine efficacy, when there is and is not data missing due to a nested case-control sampling design. Conducted at both smaller and larger sample sizes, respectively with $(n_\star,n_1,n_2)$ equal to $(100,2000,2000)$ and $(200,4000,4000)$, and for different choices of $\ell_s$ and $\mathbbmss{u}_s$, determining the tightness of the unvaccinated risk bounds. Black trend lines denote true $(\ell_s,\mathbbmss{u}_s,\mu)$-specific lower bound.}
	\label{fig:twophaseavglb}
\end{figure}

\begin{figure}
	\centering
	\includegraphics[width=\linewidth]{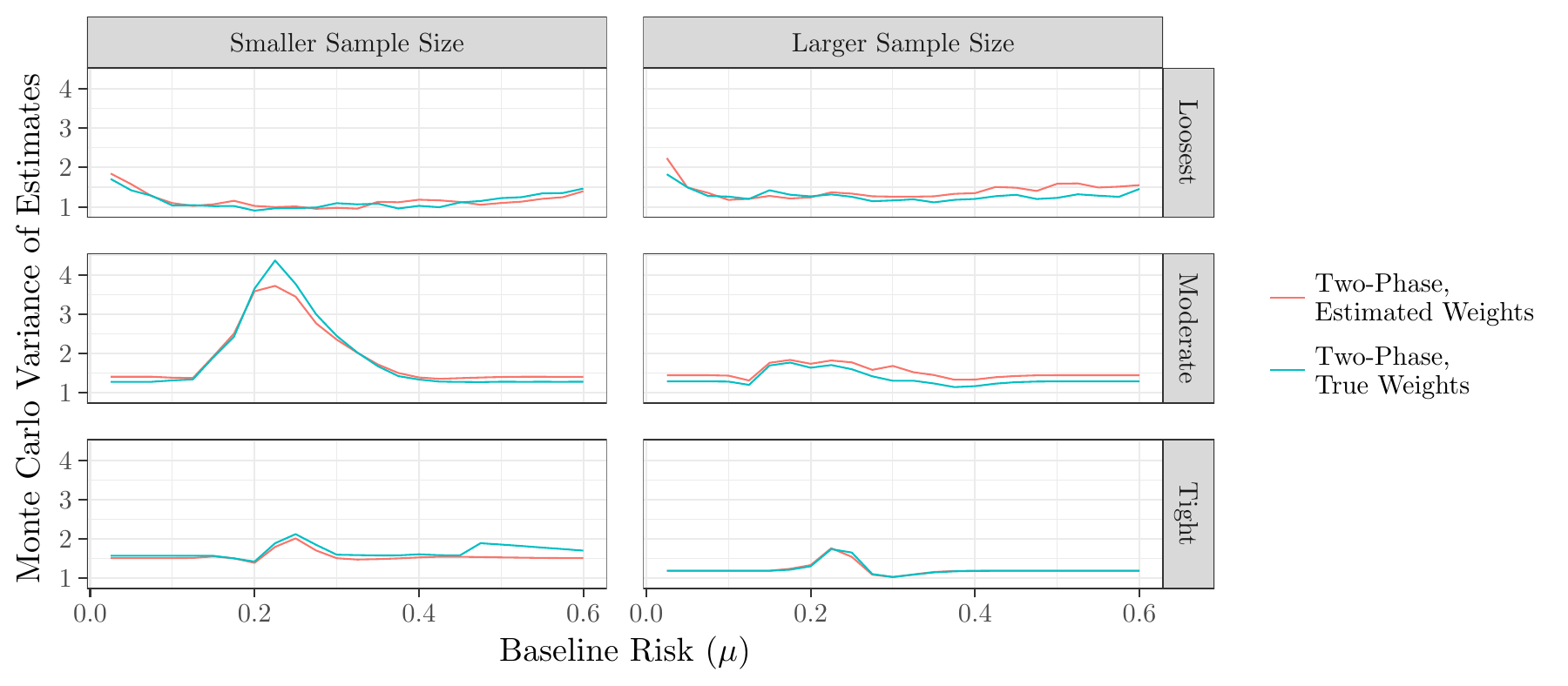}
	\caption{Comparison of estimator's Monte Carlo variance in the presence of two-phase sampling (standardized by Monte Carlo variance of estimator that observed $W$ for all individuals). Conducted at both smaller and larger sample sizes, respectively with $(n_\star,n_1,n_2)$ equal to $(100,2000,2000)$ and $(200,4000,4000)$, and for different choices of $\ell_s$ and $\mathbbmss{u}_s$, determining the tightness of the unvaccinated risk bounds.}
	\label{fig:twophasetruevar}
\end{figure}

\end{document}